\documentclass{article}
\usepackage{graphicx} 
\usepackage[russian, english]{babel}
\usepackage[T1, T2A]{fontenc}
\usepackage[utf8]{inputenx}

\usepackage{tikz-cd}

\usepackage{amssymb}
\usepackage{stmaryrd}
\usepackage{amsmath}
\usepackage{amsthm}

\newtheorem{theorem}{Theorem}
\newtheorem{lemma}{Lemma}
\newtheorem{corollary}{Corollary}
\newtheorem{proposition}{Proposition}
\theoremstyle{definition}
\newtheorem{definition}{Definition}
\newtheorem{example}{Example}
\newtheorem{remark}{Remark}

\DeclareMathOperator*{\argmax}{argmax}

\title{On possible values of the group complexity function of infinite words}
\author{Maksim Launer, Svetlana Puzynina, Ekaterina Voloshinova \\
Saint Petersburg State University, Russia\\
mlauner\texttt{\_}official@bk.ru, s.puzynina@gmail.com, evoloshinova@gmail.com}
\date{}

\def\cb#1{{\color{blue} #1}}

\begin{document}

\maketitle

\begin{abstract} 
A classical notion of a factor complexity of an infinite word is defined as a function $p(n)$ counting, for each $n$, the number of distinct factors (or blocks of consecutive letters) of the word of length $n$. The notion has various generalizations and variants. For example, the abelian complexity $p_{ab}(n)$ counts the number of distinct factors of each length $n$ up to abelian equivalence, i.e., only the numbers of occurrences of letters are taken into account, and not their order. The notion of a group complexity generalizes both notions of a factor and an abelian complexities. Namely, given a sequence $\omega=(G_n)_{n=1}^{\infty}$ of subgroups of the symmetric group $S_n$, the group complexity $p_{\omega}(n)$ of a word counts the number of classes of factors of each length $n$ of the word, where words obtained from one another by permutations from $G_n$ are put in the same class. Taking $G_n=S_n$, we obtain the abelian complexity, and taking $G_n=Id$, we recover the factor complexity. Clearly, the group complexity value is between the abelian and the factor complexities. In this paper, we are interested in the following property of words. We say that an infinite word has universal group complexity if for each length $n$ and for each $k$ satisfying $p_s^{ab}(n) \leqslant k \leqslant p_s(n)$, there exists a group $G \in S_n$ such that $p_s^G(n) = k$. In other words, all ``intermediate'' values of complexity can be obtained. We show that Sturmian words satisfy the universal group complexity property, while they are not the only ones. We also study the universal group complexity property for aperiodic ternary words of minimal complexity and for eventually periodic words.
\end{abstract}


\section{Introduction}
	
For each infinite word $w$, its factor complexity function counts, for each $n$, the number of distinct factors of $w$ of length $n$. This notion  was introduced in 1938 in a seminal paper by Morse and Hedlund \cite{mih} on symbolic dynamics. Among other results, Morse and Hedlund gave a relation between factor complexity and periodicity in infinite words; namely, they proved that each aperiodic infinite word $w$ has factor complexity at least $n + 1$ for each length $n$. They further showed that an infinite word $w$ has complexity $n+1$ for each length $n$ if and only if $w$ is binary, aperiodic and balanced, i.e., $w$ is a Sturmian word (see also \cite{abelian,MoHe40}). Thus Sturmian words are those aperiodic words of the lowest factor complexity. They arise naturally in many different areas of mathematics including combinatorics, algebra, number theory, ergodic theory and dynamical systems. Sturmian words also have applications in theoretical physics as 1-dimensional models of quasi-crystals, and in theoretical computer science where they are used in computer graphics as digital approximation of straight lines. 
For more on Sturmian words, we refer to Chapter 2 in \cite{Lothaire_2002}.

Problems in the study of factor complexity of infinite words include characterizing complexities of important families of words, such as morphic \cite{pansiot} and Toeplitz \cite{CK97} words, and the study of words of linear complexity \cite{CFPZ19,Leroy2012}.
An important longstanding open problem in combinatorics on words is an inverse problem of characterizing possible complexity functions of infinite words; see, e.g., \cite{DBLP:conf/dlt/Cassaigne95} and a recent characterization in an asymptotic form \cite{zelmanov}. For more on factor complexity we refer to a book chapter \cite{CasNic}. 

There exist multiple generalizations and extensions of the notion of a complexity function including the abelian complexity \cite{abelian09}. Two finite words are said to be abelian equivalent if they are permutations of each other. In other words, in abelian combinatorics on words we consider commutative images of words, so that the order of letter is not taken into account. The abelian complexity of a word is defined as a function counting the number of abelian classes of factors of a word for each length. Sturmian words also represent the family of aperiodic words of the smallest abelian complexity. For a survey on abelian properties of words we refer to \cite{DBLP:journals/csr/FiciP23}.

The notion of a group complexity introduced in \cite{puzc} generalizes the notions of a factor and an abelian complexities. Consider the group action of a permutation group $G \leqslant S_n$ acting on the set of words of length $n$ on a finite alphabet $A$ in a natural way. In order to define a group complexity, we consider infinite sequences of permutation groups $\omega = (G_n)_{n\geq 1}$ with each $G_n \subseteq S_n$. Associated with every such sequence, and with every infinite word $x$, the group complexity function $p_x^{\omega}$ counts for each length $n$ the number of equivalence classes of factors of $x$ of length $n$ under the action of $G_n$ on words of length $n$. As shown in \cite{puzc}, group complexity also admits an analog of Morse and Hedlund theorem. 
    
Factor and abelian complexities are particular cases of group complexity: Taking the sequence of trivial groups, i.e, $G_n=Id_n$, we obtain factor complexity, and taking the sequence of symmetric groups, i.e, $G_n=S_n$, we recover  abelian complexity. The definition of a group complexity implies that for each group $G_n$, we have $p^{ab}(n) \leqslant p^{G_n}(n) \leqslant p(n)$. In this paper, we are interested in when all the intermediate values are achieved. Namely, we say that a word satisfies the universal group complexity property if for each length and for each integer between the abelian and the factor complexities there exists a subgroup of the symmetric group giving this group complexity value. We prove that Sturmian words satisfy the universal group complexity property, while many other words including the Thue-Morse word do not have universal group complexity. However, this does not give a characterization of Sturmian words, as some other words, e.g., certain images of Sturmian words, also have universal group comlpexity. We study group complexity values for ternary words of minimal complexity, and provide a characterization of universal group complexity property for eventually periodic words.

	\section{Preliminaries}
    
        \subsection{Words and their complexity functions}
	An \emph{alphabet} $A$ is a finite set of symbols; its elements are called \emph{letters}. A \emph{word} over the
alphabet $A$ is a finite or infinite sequence of  letters from $A$. For a finite word $v$, its \emph{length} $|v|$ is the number of letters in it. The
	\emph{empty word} $\varepsilon$ is the word containing no letters; by convention, we set  $|\varepsilon| = 0$. The number of occurrences of a letter $a$ in a finite word $v$ is denoted by $|v|_a$.
	
	A word $u$ is a \emph{factor} of a word $w$ if $w = xuv$ for some words $x$ and $v$. If in addition $x = \varepsilon$ (resp., $v = \varepsilon$), then $u$ is called a \emph{prefix} (resp., \emph{suffix}). The set of factors of a word $w$ of length $n$ is denoted by $F_w(n)$. A factor $u$ of a word $w$ is called \emph{right} (resp., \emph{left}) special if $ua$ and $ub$ (resp., $au$ and $bu$) are factors of $w$ for some distinct letters $a$ and $b$. A factor $u$ of an infinite word $w$ is called \emph{bispecial} if it is left special and right special. 
  
	Given an order $\prec$ on the alphabet $A$, we can extend it to the \emph{lexicographic order} on words as follows. For words  $u$ and $v$  on $A$ we have $u < v$ if either $u$ is a prefix of $v$ or $|u| = |v|$ and there exists a prefix $x$ of $u$ and $v$ such that  $u = xay, v = xbz$, where $y$ and $z$ are words and $a$ and $b$ are letters such that $a \prec b$.
 	
    
    Given an infinite word $u$, its \emph{(factor) complexity} is a function $p_u: \mathbb{N} \to \mathbb{N}$, counting, for each integer $n$, the number of distinct factors of $u$ of length $n$.     Clearly, the complexity function of a word satisfies some straightforward properties. For example, it must be non-decreasing, and it is bounded from above by $|A|^n$. However, a complete characterization of possible values of complexity is an open question.

 Two words $v$ and $u$ are called \emph{abelian equivalent}, denoted by $v \sim_{ab} u$, if for each  $a \in A$ we have $|v|_a = |u|_a$. In other words, $v$ and $u$ can be obtained from one another by a permutation of letters. It is straightforward that abelian equivalence is indeed an equivalence relation
on the set of finite words. The \emph{abelian complexity} of a word $w$ is a function  $p_w^{ab}: \mathbb{N} \to \mathbb{N}$, counting for each $n$ the number of abelian classes of factors of $w$ of length $n$, i.e., $p_w^{ab}(n) = |F_w(n)/{\sim_{ab}}|$. 
	
Consider a subgroup $G \leqslant S_n$ of the symmetric group $S_n$. It acts on words of length $n$ by permutation of letters in a natural way: for  $g \in G$, we have
	$$g(a_1\cdots a_n) = a_{g^{-1}(1)}\cdots a_{g^{-1}(n)}.$$
	Each subgroup $G \leqslant S_n$ defines an equivalence relation $\sim_G$ on the set of words of length $n$, given by $u \sim_G v$ if there exists $ g \in G$ such that $g(v) = u$. 
In other words, $u\sim_G v$ if and only if $u$ and $v$ are in the same $G$-orbit relative to the action of $G$ on $A^n.$ We can now define the group complexity of an infinite word:

\begin{definition} Let $w$ be an infinite word and $\omega=(G_n)_{n\geq 1}$ a sequence of subgroups $G_n$ of $S_n$. The associated \emph{group complexity function} $p_w^{\omega}:\mathbb{N} \rightarrow \mathbb{N}$ counts, for each length $n$, the number of $\sim_{G_n}$ equivalence classes of factors of length $n$ of an infinite word $w$, i.e.  $p_w^\omega(n) = |F_w(n)/{\sim_{G_n}}|$.
\end{definition}

The group complexity is a generalization of both factor complexity and abelian complexity: Indeed, taking $\omega=(Id_n)_{n\geq 1}$, we obtain factor complexity, and taking $\omega=(S_n)_{n\geq 1}$, we recover  abelian complexity. 

It is straightforward that for each $G \leqslant S_n$ and for each word $w$ the value of its group complexity relatively to the group $G_n$ is between its abelian and factor complexities: 
	$$p_w^{ab}(n) \leqslant p_w^{G_n} \leqslant p_w(n).$$
In this paper, we are interested in the following question: for which words all possible values between abelian and factor complexity can be achieved? In other words, we introduce and study the following property of infinite words:
\begin{definition} An infinite word $w$ has \emph{universal group complexity} property if, for all integers $n$ and $m$, $p_w^{ab}(n) \leqslant m \leqslant p_w(n)$, there exists a subgroup $G\leqslant S_n$ such that $p_w^G = m$. If the property holds for a given length $n$, we say that $w$ has \emph{universal group complexity} property \emph{for length $n$}. 
\end{definition}

   In the paper, we will use the following notation for certain subgroups of $S_n$:
\begin{itemize}
\item $S_{[i,j]}$: the subgroup permuting all elements from $i$ to $j$ and fixing elements $1, \dots, i-1$ and $j+1,\ldots, n$. I.e., the subgroup is isomorphic to $S_{j-i+1}$.
\item $C_n$: the cyclic group of order $n$ generated by the cycle $(12\dots n)$.
\end{itemize}

A mapping $\phi: \Sigma^* \rightarrow \Sigma^*$ is called a \emph{morphism} if it preserves concatenation; that is, for each pair of words $u$, $v$ we have $\phi(uv) = \phi(u)\phi(v)$. Since a morphism is completely defined by its images on letters, the definition of a morphism can be naturally extended to  infinite words. 
A morphism $\phi$ is \emph{nonerasing} if $\phi(a) \neq \varepsilon$ for each letter $a \in \Sigma$.
 A morphism $\phi$ is called \emph{prolongable on} $a\in \Sigma$ if $\phi(a) = au$ for some $u \in \Sigma^+$. If a morphism is nonerasing and prolongable on $a$, it defines a unique infinite word $x=\phi^{\omega}(a)$, where the limit is taken in the prefix sense; $x$ is said to be \emph{generated by} $\phi$.

   \subsection{Aperiodic words of minimal complexity}

	 An infinite word is called \emph{eventually periodic} if it is of the form $uvvvvv\dots$, where $u, v$ are finite words. If $u = \varepsilon$, then the word is called \emph{(purely) periodic}. A word is called \textit{aperiodic} if it is not eventually periodic.
     
    A celebrated theorem of Morse and Hedlund gives a link between periodicity
and factor complexity:

\begin{theorem}[\cite{mih}] Let $w$ be an infinite word. If there exists $n$ such that $p_w(n) \leqslant n$, then $w$ is eventually periodic. \end{theorem}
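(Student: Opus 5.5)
The plan is the classical Rauzy-graph (special factor) argument.

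First I establish monotonicity. Every factor of length $n$ of an infinite word $w$ extends to the right by at least one letter, so $p_w(n+1)\geqslant p_w(n)$ for all $n$. Moreover, $p_w(n+1)=p_w(n)$ holds if and only if every factor of length $n$ has exactly one right extension, that is, $w$ has no right special factor of length $n$.

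Second, I reduce the hypothesis to this equality case. Suppose $p_w(n)\leqslant n$ for some $n$. Since $p_w(1)\geqslant 1$, the sequence $p_w(1),\dots,p_w(n)$ cannot increase by at least $1$ at every step. If it did, we would get $p_w(n)\geqslant n$, and in fact $p_w(n)\geqslant p_w(1)+n-1$. So we may first treat the case $p_w(1)=1$, where $w$ is a constant word and hence periodic. Otherwise $p_w(1)\geqslant 2$, and strict growth at every step would give $p_w(n)\geqslant n+1$, a contradiction. Either way there is some $m\leqslant n$ (with $m\geqslant 1$) such that $p_w(m+1)=p_w(m)$. By the first step, no factor of length $m$ is right special.

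Third, I show that $w$ is eventually periodic. Since no factor of length $m$ is right special, each factor $u$ of length $m$ determines the letter that follows every occurrence of $u$ in $w$. Hence the factor of length $m$ occurring at position $i+1$ is a function of the factor occurring at position $i$. There are only finitely many factors of length $m$, at most $p_w(m)$ of them. By the pigeonhole principle, some factor occurs at two positions $i<j$. The deterministic transition then forces $w_{k+(j-i)}=w_k$ for all $k\geqslant i$. Thus $w=u v^\omega$ with $|u|=i-1$ and $|v|=j-i$, so $w$ is eventually periodic.

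The only delicate point is the bookkeeping in the second step: from $p_w(n)\leqslant n$ one must deduce that some consecutive pair of values is equal. This needs the cases $p_w(1)=1$ and $p_w(1)\geqslant 2$ to be handled separately, which I have done above. The rest is routine.
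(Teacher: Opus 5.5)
Your argument is correct and complete. There is no proof in the paper to compare it with: the paper quotes this statement from Morse and Hedlund~\cite{mih} as background. What you give is the standard proof, and all three steps check out. The prefix map $F_w(m+1)\to F_w(m)$ is surjective, so $p_w$ is non-decreasing, with $p_w(m+1)=p_w(m)$ exactly when no factor of length $m$ is right special. The hypothesis $p_w(n)\leqslant n$ forces such an $m$. The successor map on $F_w(m)$ is then a genuine function on a finite set, so pigeonhole gives $w_{k+(j-i)}=w_k$ for all $k\geqslant i$.

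Your opening sentence in the second step ("cannot increase by at least $1$ at every step") is not justified by $p_w(1)\geqslant 1$ alone, but your separate treatment of the constant case repairs this. You can avoid the case split by starting from $p_w(0)=1$ (the empty factor). This directly gives some $m\in\{0,\dots,n-1\}$ with $p_w(m+1)=p_w(m)$, where $m=0$ means that a single letter occurs.

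Your first step also yields a further fact. A suffix of a right special factor is right special, so once $p_w(m+1)=p_w(m)$ the complexity stays constant from $m$ on. This is exactly the fact the paper uses without proof in Section~\ref{subsec:purper} when it defines $n_{\mathrm{max}}(u)$.
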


Moreover, the upper bound from the theorem cannot be improved: there exist aperiodic words of complexity $n+1$ for each $n$, and such words are called \emph{Sturmian} \cite{MoHe40}. This means that Sturmian words can be regarded as  the simplest aperiodic words. Since for a Sturmian word $s$ we have $p_s(1)=2$, these words are binary. For general alphabets, minimal complexity of aperiodic words on an alphabet $A$ is  $p(n) = n + |A| - 1$ for all $n$. We discuss in detail ternary words of minimal complexity in Subsection~\ref{subsec:ternary}. Sturmian words and more generally aperiodic words of minimal complexity have exactly one right (and one left) special factor for each length, and this factor is extended by all letters of the alphabet.

	
	

	The following is known for every Sturmian word (\cite{abelian}): $p^{ab}(n) = 2$ for all $n$. For each length $n$ two abelian classes differ in number of 1's, one class has words with more 1's (it is called \emph{rich}) than the words in other class (it is called \emph{poor}).

    \subsection{Lexicographic arrays of Sturmian words} \label{subsec:lex_array}

In this section we introduce \emph{lexicographic arrays} of Sturmian words, which we will use throughout the paper.

Let $w$ be a Sturmian word and $n$ be an integer. Consider all factors of length $n$ of $w$ in lexicographic order. As shown in \cite{DBLP:journals/tcs/JenkinsonZ04} and \cite{DBLP:journals/tcs/PerrinR12}, 
consecutive factors $v$ and $v'$ are of the form either $v = x 01 y$, $v' = x 10 y$ for some finite (possibly empty) words  $x, y$, or $v = x 0$, $v' = x1 $. The latter case corresponds to the situation when $x$ is right special. Since Sturmian words have exactly one right special factor for each length, the second case occurs exactly once. In particular, it follows that the two abelian classes of factors of a Sturmian word of length $n$ are grouped together: first we have factors poor in 1, and then factors rich in 1. Moreover, the switch from the class of poor words to the class of rich words occurs exactly in the place between  $v = x 0$ and $v' = x1 $.

\begin{example} Here we give an example of the lexicographic array for the Fibonacci word for $n = 10$. The Fibonacci word $f$ is the most famous example of a Sturmian word. It can be defined for example by iterating a morphism $\varphi: 0\to 01, 1\to 0$: $f=\varphi^{\infty}(0)=01001010010\cdots$. For each factor, we mark by green symbols that change compared to the next factor, and we mark the corresponding symbols in the next factor by red. The horizontal line splits the abelian classes.

\[\begin{matrix}
0&0&1&0&{\color{green}0}&{\color{green}1}&0&1&0&0 \\
0&0&1&0&{\color{red}1}&{\color{red}0}&0&1&0&{\color{green}0} \\
\hline
0&{\color{green}0}&{\color{green}1}&0&1&0&0&1&0&{\color{red}1}\\
0&{\color{red}1}&{\color{red}0}&0&1&0&{\color{green}0}&{\color{green}1}&0&1\\
0&1&0&{\color{green}0}&{\color{green}1}&0&{\color{red}1}&{\color{red}0}&0&1\\
0&1&0&{\color{red}1}&{\color{red}0}&0&1&0&{\color{green}0}&{\color{green}1}\\
{\color{green}0}&{\color{green}1}&0&1&0&0&1&0&{\color{red}1}&{\color{red}0}\\
{\color{red}1}&{\color{red}0}&0&1&0&{\color{green}0}&{\color{green}1}&0&1&0\\
1&0&{\color{green}0}&{\color{green}1}&0&{\color{red}1}&{\color{red}0}&0&1&0\\
1&0&{\color{red}1}&{\color{red}0}&0&1&0&{\color{green}0}&{\color{green}1}&0\\
1&0&1&0&0&1&0&{\color{red}1}&{\color{red}0}&0\\
\end{matrix} \]
\end{example}

\subsection{Simple continued fractions and Sturmian words} \label{subsec:cont_frac}

We recall that every positive real number $\alpha$ has a representation as a simple continued fraction, finite or infinite, as follows: $$\alpha = a_0 + \dfrac{1}{a_1 + \dfrac{1}{a_2 + \dfrac{1}{\ddots}}} \textrm{\quad or\quad} \alpha = a_0 + \dfrac{1}{a_1 + \dfrac{1}{a_2 + \dfrac{1}{\ddots + \dfrac{1}{a_n}}}}.$$
Such representation is briefly written as $\alpha=[a_0; a_1, a_2, \ldots]$. It is well known that $\alpha$ has a finite simple continued fraction if and only if $\alpha$ is rational. \emph{Convergents} of $\alpha$ are defined as fractions $\frac{p_m}{q_m}$ as follows: $\frac{p_0}{q_0} = a_0, \frac{p_1}{q_1} = a_0 + \frac{1}{a_1}, \frac{p_2}{q_2} = a_0 + \frac{1}{a_1 + \frac{1}{a_2}}, \dots$   

A \emph{slope} of a Sturmian word can be defined, e.g., as the frequency of 1's in it.
Given a Sturmian word $u$ of slope $\alpha < 1$,  one can consider a \emph{standard sequence} of the word $u$, i.e., a sequence $(s_n)_{n\geqslant -1}$ defined as follows: $$s_n = \begin{cases}
    1, & n = -1,\\
    0, & n = 0,\\
    s_{n-1}^{d_n}s_{n-2}, & \mathrm{otherwise}
\end{cases}$$ where $\alpha = [0, d_1, d_2, \dots]$ is the continued fraction expansion. The elements of the sequence $s_n$ are factors of $u$ and they are called \emph{standard}. For more on standard and bispecial factors of Sturmian words we refer to \cite{CASSAIGNE201736} and Paragraph 2.2 in \cite{Lothaire_2002}.

The covergents of $\alpha$ satisfy the following recurrent formulas: 
\begin{equation*}\begin{split} & p_m = d_m p_{m-1} + p_{m - 2}, \\ & q_m = d_m q_{m-1} + q_{m - 2};\end{split}\end{equation*} for $m \geqslant 1$, where $p_{-1} = 1$ and $q_{-1} = 0$. This implies that for a Sturmian word of slope $\alpha$ we have $|s_n|_1 = p_n, |s_n|_0 = q_n$, and hence, $|s_n| = p_n + q_n$. For convergents, the following equality holds: \begin{equation} \label{eq:convergents} p_m q_{m-1} - q_m p_{m-1} = (-1)^{m - 1}.\end{equation} The above equalities they are well-known and can be easily proved by induction; they are also implicitly contained in Paragraph 2.2. in \cite{Lothaire_2002}.  

\section{On group complexity of Sturmian words}

The main result of this section is the following

\begin{theorem}\label{th:sturmian}
Sturmian words have universal group complexity.
\end{theorem}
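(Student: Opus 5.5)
For a Sturmian word $s$ and a fixed length $n$ we have $p_s^{ab}(n)=2$ and $p_s(n)=n+1$. So for every $m$ with $2\le m\le n+1$ we must exhibit a subgroup $G\leqslant S_n$ with exactly $m$ orbits on $F_s(n)$. The plan is to use the lexicographic array of Subsection~\ref{subsec:lex_array} and take groups generated by transpositions of adjacent positions, or Young subgroups $S_{[i,j]}$. In the array, consecutive factors $v<v'$ differ either by a swap $x01y\to x10y$ at some pair of positions $(k,k+1)$, or by the last letter in the single case $x0\to x1$. The transposition $(k\,k+1)$ identifies $v$ and $v'$. For each of the $n-1$ "swap" steps between consecutive factors, call the position $k$ of that step its \emph{label}.

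First, I will check that the $n-1$ swap steps carry pairwise distinct labels $k\in\{1,\dots,n-1\}$, so every $k$ occurs exactly once. There are $n+1$ factors, hence $n$ steps. One step is the $x0/x1$ switch between the poor and rich classes, and the other $n-1$ are swaps. Distinctness should follow from a counting argument: the abelian class of length-$n$ factors within each class is determined by where the "cut" falls, as in the mechanical-word picture. Indeed, factors of a Sturmian word are $\lfloor (i+1)\alpha+\rho\rfloor-\lfloor i\alpha+\rho\rfloor$ sequences, and moving $\rho$ crosses the discontinuity points $\{-i\alpha\}$ one at a time. Each crossing changes exactly one adjacent pair $01\leftrightarrow10$ at position $i$, or the last letter. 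Since the points $\{-i\alpha\}$ are distinct for distinct $i$, the labels are distinct. I expect this step to be the main obstacle: I need to make the lexicographic order coincide with the order of the intercept $\rho$, and handle the boundary letters carefully. I would prove it via this mechanical representation rather than purely combinatorially.

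Next, take $G=\langle (k\,k+1): k\in K\rangle$ for a subset $K\subseteq\{1,\dots,n-1\}$. This is a Young subgroup, a product of $S_{[i,j]}$ over the maximal runs. Since each element of $G$ preserves abelian class, $G$ never merges the poor and rich classes. Within a class, the factors form a chain in the array in which consecutive ones are related by their labelled transposition. So with $K$ containing $t$ labels, the chain links are all identified and the orbit count is at most $n+1-t$.

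The remaining worry is that $G$ may merge non-adjacent factors in the chain even when their intermediate link labels are not in $K$. To avoid this, I take $K=\{1,\dots,t\}$, so $G=S_{[1,t+1]}$. Then $u\sim_G v$ iff $u$ and $v$ agree on positions $t+2,\dots,n$ and have the same number of 1's in the first $t+1$ positions. The orbits are therefore exactly the classes determined by the suffix of length $n-t-1$ together with the abelian class. Counting these directly gives (number of distinct suffixes of length $n-t-1$ of factors, which equals $p_s(n-t-1)=n-t$), with one extra class for the suffix extended differently. More cleanly: factors $u,v$ in the same abelian class with equal length-$(n-t-1)$ suffixes have equal prefix content automatically. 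So the orbit count equals the number of pairs (suffix, class), which is $n-t$ plus $1$ for the unique left special suffix of length $n-t-1$, whose two left extensions of length $n$ lie in different classes. This gives $n-t+1$ orbits for $t=0,\dots,n-1$, covering all values from $n+1$ down to $2$.

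This last argument actually bypasses the labelling step, so the final proof would use it directly. I still need to verify the claim that exactly one length-$(n-t-1)$ suffix has extensions in both classes, and that all others have extensions in only one class. This uses the balance property: two length-$(t+1)$ prefixes extending the same suffix differ in weight by at most one, and they differ only for the left special branching.
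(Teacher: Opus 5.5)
Your reduction is sound and is the mirror image of the paper's argument. The paper uses $S_{[m+1,n]}$, which fixes the prefix of length $m$; you use $S_{[1,t+1]}$, which fixes the suffix of length $\ell=n-t-1$. Since the factor set of a Sturmian word is closed under reversal, the two counts coincide. Your description of the orbits as pairs (suffix of length $\ell$, abelian class) is correct, and so is the target value $n-t+1$.

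The gap is the step that carries all the content: that exactly one suffix of length $\ell$ has left extensions of length $n$ in both abelian classes. You identify this suffix as the left special factor of length $\ell$, and that is false. For the Fibonacci word with $n=3$, $t=1$, the factors are $001,010,100,101$. The left special factor of length $1$ is $0$, and its extensions $010,100$ lie in the same class. The suffix $1$, which is not left special, has extensions $001,101$ in different classes.

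The suffix that actually splits is the length-$\ell$ suffix of the left special factor $L$ of length $n-1$, witnessed by $0L$ and $1L$. This gives the lower bound $n-t+1$ for free. The upper bound, that no other suffix splits, does not follow from balance. Balance only says each suffix contributes at most two orbits, which already follows from $p^{ab}_s(n)=2$. It says nothing about how many suffixes contribute two.

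So the ``bypass'' does not avoid your first step. The paper gets the upper bound from exactly the lexicographic-array fact you flagged. The classes of $S_{[m+1,n]}$ are intervals of the array, and among the $n-1$ swap steps $x01y\to x10y$ each position occurs exactly once. Hence each enlargement of the group merges exactly one pair of adjacent classes.

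Alternatively, your mechanical-word idea finishes the count directly. On the circle of intercepts $\rho$, the suffix of length $\ell$ is constant on the arcs cut by the $\ell+1$ points $\{-i\alpha\}$, $t+1\le i\le n$, and distinct arcs give distinct suffixes. The weight of the prefix of length $t+1$ changes only at $0$ and at $\{-(t+1)\alpha\}$, and the latter is already a cut point. Hence the pair (suffix, class) takes exactly $\ell+2=n-t+1$ values.

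One of these arguments has to be carried out. As written, the proposal asserts the crucial count with an incorrect justification.
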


\begin{proof} 
Let $w$ be a Sturmian word and $n$ be an integer. Consider the lexicographic array of $w$ for length $n$ (see Subsection \ref{subsec:lex_array}). 

For an integer $m$, $1\le m < n - 1$, consider the subgroup $S_{[m+1,n]}$ (recall that this denotes the group that does not change the first $m$ letters and acts as $S_{n-m}$ on the last $(n-m)$ letters). 

Among prefixes of length $m$ we have exactly   $m + 1$ distinct words, since these prefixes are factors of $w$ of length $m$. So, $p^{S_{[m+1,n]}}(n) \ge m + 1$. We now show that in fact we have $p^{S_{[m+1,n]}}(n) = m + 2$. 

First we show this for $m=0$. For $m = 0$, we have the action of the group $S_n$. Group complexity relatively to $S_n$ is abelian complexity, and for Sturmian words it is equal to $2$, which is $m + 2$ for $m = 0$. 

Now we compare classes of group equivalence of factors relatively to subgroups $S_{[m+1,n]}$ and $S_{[m+2,n]}$. Clearly, the partition corresponding to $S_{[m+2,n]}$ is a refinement of that of $S_{[m+1,n]}$, since $S_{[m+2,n]}$ is a subgroup of $S_{[m+1,n]}$.

We claim that each group equivalence class relatively to $S_{[m+1,n]}$ contains several consecutive words from the lexicographic array.

We prove this by induction on $m$. Consider an equivalence class $L$ relatively to $S_{[m+1,n]}$  which is split relatively to $S_{[m+2,n]}$. Then there are two consecutive factors $v$ and $v'$ from $L$ which are in different equivalence classes relatively to $S_{[m+2,n]}$. Consider  prefixes of $v$ and $v'$ of length $m$ and $m+1$. Note that $v$ and $v'$ are abelian equivalent, since they are in the same group equivalence class relatively to $S_{[m+1,n]}$.

Since $v$ and $v'$ are in the same class of equivalence relatively to  $S_{[m+1,n]}$  and this group fixes the first $m$ indices, the prefixes of $v$ and $v'$ of length $m$ coincide, i.e., $v = uw$ and $v' = uw'$, where $|u|=m$. Then $w\sim_{ab} w'$, since $v$ and $v'$ are abelian equivalent. If $v_{m+1}=v'_{m+1}$, then $v$ and $v'$ are in the same equivalence class also relatively to $S_{[m+2,n]}$ (since it permutes all the elements indexed from $m+2$ to $n$). So, we must have  $v_{m+1}=0$, $v'_{m+1}=1$, and due to properties of the lexicographic array (see Subsection \ref{subsec:lex_array}) we have  $v = u01z$, $v = u10z$, where $|z|=n-m-2$. Since $S_{[m+2,n]}$ fixes the index $m+1$, $v$ and $v'$ are indeed in distinct classes of equivalence. It remains to notice that due to the structure of the lexicographic array there is exactly one such pair of consecutive $v$ and $v'$ which differ at the positions $m$ and $m+1$ for each $m$; so exactly one class is split into two classes. 

Hence for each $t$, $2 \le t \le n + 1$, there exists a subgroup $G$ of $S_n$ for which $p^G(n) = t$. \end{proof}

\section{A necessary condition for universal group complexity and the Thue-Morse word}

In this section we provide a necessary condition for universal group complrxity property. Using this necessary condition, we show that the Thue-Morse word does not have the universal group complexity property.

\begin{proposition}\label{pr:necessary_condition} Let $w$ be a binary word such that its set of factors is closed under the morphism $E: 0\mapsto 1, 1\mapsto 0$ and  that $p_w(n)-a_w(n)\geq 2$ for at least one odd $n$. Then $w$ does not have universal group complexity. Moreover, this holds for each odd length $n$ for which the condition $p_w(n)-a_w(n)\geq 2$ is satisfied.
\end{proposition}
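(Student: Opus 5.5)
The plan is to show directly that for an odd length $n$ with $p_w(n)-a_w(n)\ge 2$, the value $m=p_w(n)-1$ is never attained. This value lies in the admissible range $[a_w(n),p_w(n)]$, since $p_w(n)-1\ge a_w(n)+1$. So it suffices to prove that for every subgroup $G\leqslant S_n$ we have either $p_w^G(n)=p_w(n)$ or $p_w^G(n)\le p_w(n)-2$.

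The key observation is that the complement morphism $E$ commutes with the action of any permutation: for $g\in S_n$ and $v\in A^n$ we have $g(E(v))=E(g(v))$, because $E$ acts letterwise and $g$ only moves positions. Now fix $G$ and suppose $p_w^G(n)<p_w(n)$. Then some $\sim_G$-class of factors of length $n$ contains two distinct factors $u\neq v$ with $g(v)=u$ for some $g\in G$. By the closure hypothesis, $E(u)$ and $E(v)$ are also factors of $w$. They are distinct, and $g(E(v))=E(u)$, so they lie in one $\sim_G$-class of size at least $2$.

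Next I check that this second class is different from the class of $u$ and $v$; this is where the parity of $n$ enters. Elements of a $\sim_G$-class are abelian equivalent, so $|u|_1=|v|_1$. Moreover $|E(u)|_1=n-|u|_1$, and since $n$ is odd this is $\neq |u|_1$. Hence $E(u)$ is not abelian equivalent to $u$, and so it cannot lie in the class of $u$. We therefore have two disjoint $\sim_G$-classes among the factors of length $n$, each of size at least $2$.

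Finally, the number of classes equals $p_w(n)-\sum_{C}(|C|-1)$, where the sum runs over all classes $C$. Two disjoint classes of size $\ge 2$ each contribute at least $1$, so $p_w^G(n)\le p_w(n)-2$. Thus $p_w(n)-1$ is not attained for this $n$, and $w$ fails universal group complexity at every odd $n$ satisfying the hypothesis. The only delicate point is the parity argument separating the two classes; everything else is a direct consequence of the commutation of $E$ with permutations.
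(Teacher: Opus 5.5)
Your proof is correct and uses the same idea as the paper's. Since $E$ commutes with every permutation, identifications made by $\sim_G$ come in antipodal pairs, and for odd $n$ a word and its antipode are never abelian equivalent, so each such pair of identifications lies in two distinct classes. The only difference is that the paper states the stronger conclusion that no value of the form $p_w(n)-2k-1$ is attained, whereas you exclude just $p_w(n)-1$, which already suffices for the proposition.
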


\begin{remark} Note that the condition $p_w(n)-a_w(n)\geq 2$ for at least one odd $n$ is technical and serves to exclude very specific periodic counterexamples like $(01)^\infty$.\end{remark}

\begin{proof} For an odd length $n$, the set of factors of $w$ can be split into pairs of antipodal words, i.e. which are images of one another under the morphism $E$. Consider two factors $v$ and $u$ which are not antipodal, and denote by $v'$ and $u'$ their antipodal words. It is straightforward to see that for each subgroup $G$ of $S_n$ we either have $v\sim_G u$ and $v'\sim_G u'$ or $v\nsim_G u$ and $v'\nsim_G u'$. Thus group complexity cannot have values of the form $p_w(n)-2k-1$ for $k\in \mathbb{N}$. 
\end{proof}

The Thue-Morse word can be defined as the word generated by the morphism  $\mu: 0 \mapsto 01, 1 \mapsto 10$: 
\[
t = 011010011001011010010110\ldots
\]
Since the Thue-Morse word satisfies the conditions from Proposition \ref{pr:necessary_condition}, we have the following:

\begin{corollary} The Thue-Morse word does not have universal group complexity.
\end{corollary}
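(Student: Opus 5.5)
The plan is to apply Proposition~\ref{pr:necessary_condition} directly, so the proof reduces to checking its two hypotheses for the Thue-Morse word $t$.

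\emph{Closure under $E$.} First I will show that the set of factors of $t$ is closed under $E\colon 0\mapsto 1, 1\mapsto 0$. The morphism $\mu$ commutes with $E$, since $\mu(E(a)) = E(\mu(a))$ for each letter $a$. Hence $\mu^k(1) = E(\mu^k(0))$ for every $k$. The word $t$ is uniformly recurrent, and both letters $0$ and $1$ occur in $t$, so both $\mu^k(0)$ and $\mu^k(1)$ occur in $t$ for every $k$. Now take any factor $u$ of $t$. It is a factor of some prefix $\mu^k(0)$, so $E(u)$ is a factor of $E(\mu^k(0)) = \mu^k(1)$, which is itself a factor of $t$. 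Note that we do not need recurrence for $\mu^k(1)$ to occur. Indeed, $t = \mu^{k}(t) = \mu^k(0)\mu^k(1)\cdots$, so $\mu^k(1)$ appears right after the prefix $\mu^k(0)$.

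\emph{An odd length with gap at least $2$.} Next I will exhibit an odd $n$ with $p_t(n) - p^{ab}_t(n) \geq 2$. (In the proposition this quantity is written $a_w(n)$; it means the abelian complexity.) The smallest candidate is $n = 3$. The factors of $t$ of length $3$ are
\[
001,\ 010,\ 011,\ 100,\ 101,\ 110,
\]
which can be read off the displayed prefix $011010011001\ldots$. The words $000$ and $111$ do not occur, because $t$ is cube-free, and more simply because $t$ is a concatenation of the blocks $01$ and $10$. So $p_t(3) = 6$. The abelian classes are determined by the number of $1$'s, which is either $1$ or $2$, so $p^{ab}_t(3) = 2$. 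The gap is therefore $4 \geq 2$.

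With both conditions verified, Proposition~\ref{pr:necessary_condition} shows that $t$ does not have universal group complexity. Concretely, for $n=3$ the value $p_t(3) - 1 = 5$ cannot be attained as $p_t^G(3)$ for any $G \leqslant S_3$.

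The only slightly delicate step is the closure under $E$, which rests on the commutation identity $\mu E = E\mu$. The length-$3$ computation is a finite check.
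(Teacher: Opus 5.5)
Your proof is correct and follows the paper's route: the paper simply applies Proposition~\ref{pr:necessary_condition} to the Thue-Morse word. You additionally verify the two hypotheses that the paper only asserts, namely closure of the factor set under $E$ via $\mu E = E\mu$ and $t=\mu^k(0)\mu^k(1)\cdots$, and the gap $p_t(3)-p^{ab}_t(3)=6-2=4$ at the odd length $3$, which shows concretely that the value $5$ is missed.
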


\section{Ternary words of minimal complexity}
	\ 
	

    In this section, we study the universal group complexity property for ternary words of minimal complexity. In Subsection \ref{subsec:ternary} we show that the universal group complexity property is satisfied for some ternary words of minimal complexity, but not for all of them. In Subsection \ref{subsec:four}, we study separately the value 4 of group complexity, which turns out to be the most tricky case.
	
	\subsection{Ternary words of minimal complexity} \label{subsec:ternary}
	
	Consider the alphabet $\mathbb T = \{0, 1, 2\}$. In \cite{classification} it has been shown that words of minimal complexity on the alphabet $\mathbb T$ (i.e., such that $p_x(n) = n + 2$) could be split into three classes as follows.

 \begin{theorem}[\cite{classification}]\label{th:classification} Let $x$ be an aperiodic word of minimal complexity on $\mathbb T = \{0, 1, 2\}$. Then the set of factors of $x$ coincides (up to renaming letters) with a word from one of the  following classes:
	\begin{enumerate}
		\item $2s$, where $s$ is a Sturmian word;
		\item $\varphi(s)$, where $s$ is a Sturmian word, and $\varphi\colon \begin{cases}
			0 \mapsto 02\\
			1 \mapsto 12
		\end{cases}$;
		\item $\psi(s)$, where $s$ is a Sturmian word, and  $\psi\colon \begin{cases}
			0 \mapsto 0\\
			1 \mapsto 12
		\end{cases}$.
	\end{enumerate}
\end{theorem}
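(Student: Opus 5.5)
We would argue through the Rauzy graph of order $1$ and then pass to a binary ``desubstituted'' word whose complexity is forced to be $n+1$. Since $p_x(1)=3$ and $p_x(2)=4$, the word $x$ has exactly four factors of length $2$. Since $p_x(n+1)-p_x(n)=1$ for all $n\ge 1$, for every length $n\ge 1$ there is exactly one right special factor, with exactly two right extensions, and symmetrically exactly one left special factor.

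\textbf{Step 1 (non-recurrent case).} Suppose some factor, without loss of generality a letter, say $2$, occurs only finitely often. We first show that it occurs exactly once and only at position $1$. Write $x=u\,y$ with $y$ a suffix not containing $2$. Then $y$ is binary and aperiodic, so $p_y(n)\ge n+1$ by the theorem of Morse and Hedlund. Each occurrence of $2$ contributes at least one factor of length $n$ not occurring in $y$, and $p_x(n)=n+2$. Hence $p_y(n)=n+1$ and there is only one extra factor of each length, which forces $x=2y$ with $y$ Sturmian. This gives class~1. The same counting handles a general non-recurrent factor by reducing to the case of a letter.

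\textbf{Step 2 (recurrent case: enumerate Rauzy graphs).} Now every letter occurs infinitely often, so the Rauzy graph on vertices $\{0,1,2\}$ with the four edges of length-$2$ factors is strongly connected. It has exactly one vertex of out-degree $2$ and one of in-degree $2$. Up to renaming, we enumerate such digraphs: a $3$-cycle plus an extra edge, or two $2$-cycles through a common vertex.
\begin{itemize}
\item Edges $\{02,20,12,21\}$: every other letter is $2$, so $x=\varphi(s)$ for a binary word $s$.
\item A $3$-cycle with a loop, e.g.\ $\{00,01,12,20\}$: $1$ is always followed by $2$ and $2$ by $0$, so $x$ factors over the blocks $0$ and $12$, giving $x=\psi(s)$.
\item A $3$-cycle with a reverse chord, e.g.\ $\{01,12,20,10\}$: here $x$ factors over the blocks $01,012$. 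Cutting at a different position (a conjugate morphism) shows that it has the same set of factors as a $\psi$-image up to renaming. Alternatively, a direct count of length-$3$ factors rules it out or reduces it to one of the above.
\end{itemize}
We also discard configurations that force a periodic word.

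\textbf{Step 3 (the preimage is Sturmian).} We expect this to be the main obstacle. In classes~2 and~3 we must show that $s$ is Sturmian. Aperiodicity of $s$ follows from that of $x$, so it suffices to show $p_s(n)\le n+1$. The plan is to relate factors of $x$ of length $\approx |\varphi(v)|$ to factors $v$ of $s$, in the spirit of a recognizability argument.
\begin{itemize}
\item For $\varphi$, the relation is clean: $p_x(2n)$ and $p_x(2n+1)$ are controlled by $p_s(n)$ and $p_s(n+1)$. More precisely, $p_x(2n+1)=2p_s(n)$ up to a bounded correction, and $p_x=n+2$ then forces $p_s(n)=n+1$.
\item For $\psi$, the block lengths differ, so the count is messier. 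Here we would instead argue via special factors: a right special factor of $s$ lifts to a right special factor of $x$ under $\psi$. Two distinct right special factors of $s$ of the same length would then produce two distinct right special factors of $x$ of some common length, contradicting uniqueness.
\end{itemize}
Applying the same special-factor argument to $\varphi$ gives a uniform proof for both classes.
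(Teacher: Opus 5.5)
The paper gives no proof of this statement. It is imported from \cite{classification} and used as a black box, so there is no in-paper argument to compare with. Your outline is a sound, self-contained route that uses only Morse--Hedlund:
\begin{enumerate}
\item split off the non-recurrent case;
\item classify the order-$1$ Rauzy graphs in the recurrent case;
\item desubstitute;
\item lift right special factors to force the preimage to be Sturmian.
\end{enumerate}
Your graph enumeration is complete. A strongly connected digraph on three vertices with four arcs has at most one loop. It is therefore either two $2$-cycles through a common vertex, a $3$-cycle with a loop, or a $3$-cycle with one reversed arc. All three are realized, so nothing is discarded as periodic. The enumeration also reproduces the paper's later split: the loop case is type IIIa ($00$ a factor of $s$) and the reversed-arc case is type IIIb ($11$ a factor of $s$). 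Compared with citing, your route costs a page but makes the section on ternary words independent of the external classification.

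A few steps need to be made precise.

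(1) The crux of Step 3 is not that $\psi(u)$ is right special; that is clear from $\psi(u)0,\psi(u)1$ being factors of $x$. The crux is that distinct right special factors $u\neq v$ of $s$ of length $n$ give \emph{distinct} right special factors of $x$ of a common length. Since $|\psi(u)|=n+|u|_1$ may differ from $|\psi(v)|$, you must truncate to $L=\min(|\psi(u)|,|\psi(v)|)$, and you have not shown that the truncations differ.
\begin{itemize}
\item They do differ, because $\{0,12\}$ is a suffix code decoded deterministically from the right: a final $0$ is the block $0$, and a final $2$ closes the block $12$. So if $\psi(u)$ were a suffix of $\psi(v)$, then $u$ would be a suffix of $v$, hence $u=v$.
\item For $\varphi$ the lifts have equal length $2n$, and distinctness is immediate.
\item Since $s$ is binary, at most one right special factor per length then gives $p_s(n)\le n+1$.
\end{itemize}

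(2) For $\varphi$ the exact counts are $p_x(2k)=2p_s(k)$ and $p_x(2k+1)=p_s(k)+p_s(k+1)$. Your claim ``$p_x(2n+1)=2p_s(n)$ up to a bounded correction'' is not what holds, although the even identity alone already gives $p_s(k)=k+1$.

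(3) In case (iii), your alternative that a length-$3$ count rules it out is false, since the graph $\{01,12,20,10\}$ is realized. Under the renaming $0\mapsto1\mapsto2\mapsto0$ it becomes $\{01,12,20,21\}$. Then $x$ factors directly over $\{0,12\}$, and no conjugate morphism is needed.

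(4) In Step 1, reducing an arbitrary non-recurrent factor to a letter uses the strict monotonicity $p_y(n+1)>p_y(n)$ for aperiodic $y$. This pushes $p_y(n)\le n+1$ from large $n$ down to $p_y(1)=2$. It is simpler to note that a non-recurrent word has a prefix occurring only once. The same argument then shows that $x$ with its first letter removed is Sturmian, and that first letter is the third letter.

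Finally, ``exactly one left special factor'' fails in class~1, where the prefix of $x$ has no left extension. You only use it in the recurrent case, and there the in-degree count alone already suffices.
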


\begin{definition}
    We say that an aperiodic word of minimal complexity on $\mathbb T = \{0, 1, 2\}$ is \emph{of type I} (resp., \emph{type II, type III}) if it is as in item 1. (resp., item 2., item 3.) from Theorem \ref{th:classification}. We say that a word of type III is \emph{of type IIIa} if the word $s$ from Theorem \ref{th:classification} has $0$ as the most frequent letter, otherwise we say it is \emph{of type IIIb}.
\end{definition}



The goal of this section is to prove the following theorem:


\begin{theorem}\label{th:ternary} Let $x$ be an aperiodic word of minimal complexity on $\mathbb T = \{0, 1, 2\}$. 
	\begin{enumerate}
		\item If $x$ is of type I, 
        then $x$ has universal group  complexity.
		\item  If $x$ is of type II, 
        then $x$ does not have universal group complexity. More precisely, the universal group complexity property holds for each length $n$ except for $n=2$.
		\item If $x$ is of type III, 
        then for each length $n$ and each $k$ such that $p_x^{ab}(n)< k < p_x(n)$, $k\neq 4$, there exists a subgroup $G_n^k$ of $S_n$ with $p_x^{G_n^k}=k$. 

	\end{enumerate}
\end{theorem}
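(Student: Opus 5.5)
We treat the three types separately. In each case we reduce to the lexicographic-array machinery of Theorem~\ref{th:sturmian}, applied to the underlying Sturmian word $s$. Throughout, $p_x(n)=n+2$, so the work is to compute $p^{ab}_x(n)$ and then realize every intermediate value.

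\textbf{Type I} ($x=2s$). The factors of length $n$ are the $n+1$ factors of $s$ together with the prefix $2u$, where $u$ is the length-$(n-1)$ prefix of $s$. Since $2$ occurs only once, $2u$ forms its own abelian class, so $p^{ab}_x(n)=3$. Now take the subgroups $S_{[m+1,n]}$, $0\le m\le n-1$. On the factors of $s$ they give exactly $m+2$ classes, by the proof of Theorem~\ref{th:sturmian}. The factor $2u$ is never equivalent to any factor of $s$, because it is the only one containing $2$. Hence we get $m+3$ classes, which covers every value from $3$ to $n+2$. The boundary value $m=n-1$ gives the trivial group, which must be checked separately, but it is immediate.

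\textbf{Type II} ($x=\varphi(s)$, where $2$ occupies every other position). Factors of length $n$ come in two phases: those starting with $2$ and those ending with $2$. For $n=2$ the factors are $02,12,20,21$, with abelian classes $\{02,20\}$ and $\{12,21\}$. The only nontrivial subgroup of $S_2$ is $S_2$ itself, so the only attainable values are $2$ and $4$, and $3$ is missed. For $n\ge 3$ we separate cases.

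For even $n=2k$, each factor is either $2\varphi(v)$ with the last $2$ dropped, or $\varphi(v)$, with $v$ a factor of $s$ of length $k$. Permutations of the odd positions alone, or of the even positions alone, preserve the phase structure. We plan to let a subgroup act as $S_{[\cdot]}$ restricted to the non-$2$ positions. The non-$2$ letters of a factor form a factor of $s$, so the Sturmian argument of Theorem~\ref{th:sturmian}, transported to these positions, gives a refinement chain that splits one class at a time. We combine this with the choice of whether to identify the two phases, using a permutation such as the reversal or a shift that exchanges odd and even positions whenever the abelian content allows. Together these should realize every value between $p^{ab}$ and $n+2$.

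For odd $n$, the number of $2$s distinguishes the phases, which simplifies matters: $p^{ab}_x(n)=4$ is expected, namely $2\times 2$ Sturmian abelian classes. The two phase families are then refined independently by chains $S_{[\cdot]}$ on the non-$2$ positions. Because the two families are refined independently, we can interleave their splittings to reach every total.

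\textbf{Type III} ($x=\psi(s)$). We use the fact that $1$ is always followed by $2$. We first compute $p^{ab}_x(n)$, which should be $3$ or $4$ depending on the phase at which a window cuts through a $12$ block. We then order the factors lexicographically after recoding $12$ as a single block. Subgroups of the form $S_{[m+1,n]}$ fix a prefix and fully permute a suffix, and we expect them to give class counts $m+c$ for a constant $c$. As $m$ varies, this yields every value from $5$ up to $n+2$. The exclusion $k\ne 4$ is exactly where this chain may jump from $3$ to $5$, and the statement does not require that value.

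\textbf{Main obstacle.} The hardest part is the bookkeeping in types II and III: verifying that the chain of $S_{[m+1,n]}$-type groups splits exactly one class at each step. The inserted $2$s break the clean ``$01\leftrightarrow10$'' adjacency of the Sturmian lexicographic array, and I would handle this by conjugating the groups so that they act only on the positions carrying Sturmian letters.
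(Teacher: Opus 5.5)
\paragraph{Types I and II.} Your treatment of type I and of odd lengths in type II matches the paper's argument. In type I the extra factor $2u$ is a singleton class. For odd $n$ in type II, independent Sturmian chains on the odd and even positions give all sums $p+q$ from $4$ to $2k+3$.

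For even $n=2k$ you should state that $p^{ab}_x(2k)=2$. More importantly, the value $3$ is never actually produced. Adjoining a reversal or a cyclic shift to your chain groups generates much more than you intend. For example, the square of the shift acts as a $k$-cycle on the Sturmian letters; together with $S_{[2,k]}$ acting diagonally it generates all of $S_k$, which gives only $2$ classes.

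Your phase-identification idea does work with the involution $\tau=(1\,2)(3\,4)\cdots(n-1\;n)$:
\begin{itemize}
\item $\tau$ commutes with the diagonal action $\hat K$ of any $K\leqslant S_k$ on the position pairs $\{2i-1,2i\}$;
\item $\tau$ swaps the two phases while preserving the Sturmian content;
\item both phases carry exactly the set $F_k(s)$.
\end{itemize}
Hence $\hat K\times\langle\tau\rangle$ gives exactly $p_s^K(k)$ classes. By Theorem~\ref{th:sturmian} this is every value $2,\dots,k+1$, in particular $3$ for $k\geqslant 2$. That is a legitimate alternative to the paper's $S_{[1,n-2]}\times S_{[n-1,n]}$-plus-balancedness argument, but you have to specify it this way.

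\paragraph{Type III: the genuine gap.} Here all the content is deferred to ``we expect class counts $m+c$'', and the tool you propose for verifying it does not apply. In $\psi(s)$ the letter $2$ sits at different positions in different factors, and even the number of $2$'s varies among factors of the same length. So there is no fixed set of positions ``carrying Sturmian letters'' onto which you could conjugate. Likewise, recoding $12$ as a single block sends factors of length $n$ to Sturmian factors of \emph{different} lengths, so the lexicographic array of $\psi(s)$ is not a transported Sturmian array.

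Your target claim is also miscalibrated. The chain does not split exactly one class per step: for $n\geqslant 3$, already $S_{[n-1,n]}$ separates all $n+2$ factors. The reason is that no two factors have the form $Aab$, $Aba$ with $a\neq b$, because $1$ is always followed by $2$ and $2$ is always preceded by $1$.

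What is missing is a description of consecutive factors in the lexicographic array of $\psi(s)$. The paper proves (Theorem~\ref{th:scheme}, by induction using Lemma~\ref{lm:expansion}) that consecutive factors differ by one of $A012B\to A120B$, $A0\to A1$, $A01\to A12$, $12B\to 20B$, the last three each occurring exactly once per length. From this scheme one obtains:
\begin{itemize}
\item $p^{ab}_x(n)\in\{3,4\}$, which you only assert;
\item the lower bound $p^{S_{[2,n]}}_x(n)\geqslant 5$: three classes from the first letter, plus one extra abelian split each for the steps $A0\to A1$ and $A01\to A12$;
\item for $2\leqslant k\leqslant n-2$, with $A$ the right special factor of length $k-1$, the last factor starting with $A0$ and the first starting with $A1$ differ by the first rule. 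So some $S_{[k,n]}$-class splits under $S_{[k+1,n]}$.
\end{itemize}
Pigeonholing the $n-2$ strictly increasing values between $5$ and $n+2$ then forces $p^{S_{[k,n]}}_x(n)=k+3$. Without the lower bound $5$, or an exact one-split-per-step argument (which you have not supplied), nothing rules out the chain starting at $4$ and skipping a value in $\{5,\dots,n+1\}$.
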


\begin{remark} For $x$  of type III, 
the abelian complexity is either 3 or 4 (see Lemma \ref{lm:abelian}). So, the universal group complexity holds for lengths $n$ with $p_x^{ab}(n)=4$. For length $n$ such that $p_x^{ab}(n)= 3$, we do not know precisely when the value 4 is attained by group complexity: there are examples when it is attained and examples when it is not. For particular results on that see Subsection \ref{subsec:four} and Open problem 3. \end{remark}

We now proceed with propositions treating different cases of the theorem and some auxiliary statements, and we finalize the proof in the end of this section.

\begin{proposition}\label{pr:case1}
 Let $x$ be an aperiodic word of minimal complexity of type I. 
 Then $x$ has universal group  complexity.
\end{proposition}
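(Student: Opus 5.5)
The plan is to reduce the statement directly to Theorem~\ref{th:sturmian}. By Theorem~\ref{th:classification} and the definition of type I, the set of factors of $x$ coincides (up to renaming letters) with the set of factors of $2s$ for a Sturmian word $s$. Renaming letters does not affect factor, abelian or group complexity, so I may assume $x=2s$. First I will describe $F_x(n)$. Every factor of $x$ of length $n$ either is a factor of $s$, or is the prefix $2u$ of $x$ with $|u|=n-1$. The letter $2$ occurs in $x$ only in the first position, so this prefix is the unique factor of length $n$ containing $2$. Hence $F_x(n)=F_s(n)\sqcup\{2u\}$, which is consistent with $p_x(n)=(n+1)+1=n+2$.

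Next, fix an arbitrary subgroup $G\leqslant S_n$. Any permutation of positions preserves the number of occurrences of each letter. So $2u$ cannot be $G$-equivalent to any word in $F_s(n)$, since those words contain no $2$, and it forms a singleton class. The $G$-classes of the remaining factors are exactly the $G$-orbits restricted to $F_s(n)$. Therefore
\[
p_x^G(n)=p_s^G(n)+1 \quad\text{for every } G\leqslant S_n .
\]
Taking $G=S_n$ and using $p_s^{ab}(n)=2$ gives $p_x^{ab}(n)=3$. Taking $G$ trivial gives $p_x(n)=n+2$. So the values that must be realized are exactly $3,\dots,n+2$.

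Finally, the proof of Theorem~\ref{th:sturmian} shows that for every $t$ with $2\le t\le n+1$ there is a subgroup $G$ (one of the groups $S_{[m+1,n]}$, or the trivial group) with $p_s^G(n)=t$. By the displayed identity, the same $G$ gives $p_x^G(n)=t+1$, and these values range over all of $3,\dots,n+2$. I do not expect a genuine obstacle. The only points needing care are the observation that the letter $2$ occurs only once, so exactly one extra factor appears, and the degenerate small lengths, for instance $n=1$, where $p_x^{ab}(1)=p_x(1)=3$ and the claim is trivial.
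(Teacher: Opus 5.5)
Your proposal is correct and follows essentially the same argument as the paper. The paper also writes $F_x(n)$ as $F_s(n)$ together with the single prefix beginning with $2$, notes that this prefix forms a singleton class for every $G\leqslant S_n$, and then invokes Theorem~\ref{th:sturmian}. You make explicit two steps the paper leaves implicit: the identity $p_x^G(n)=p_s^G(n)+1$, and the fact that the target range is therefore exactly $3,\dots,n+2$.
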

    
\begin{proof}	It is easy to see that the set of factors of $x$ of each length $n$ is given by the set of factors of $s$ of length $n$ plus the prefix of $x$ of length $n$ (beginning with $2$). Clearly, this factor is the only element in its group equivalence class for each subgroup of $S_n$. So, by Theorem \ref{th:sturmian}, we have that for such words the universal group complexity property holds. \end{proof}

	\begin{proposition} \label{pr:case2}
		Let $w$ be a ternary minimal complexity word of type II.
        Then $w$ has universal group complexity for each length $n\geq 3$.  \end{proposition}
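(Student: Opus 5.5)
The plan is to use the rigid parity structure of $w=\varphi(s)$. Every other letter of $w$ is $2$, and the letters in between form a factor of $s$. So for each length $n$ the factors of $w$ split into two \emph{types}: type A, of the form $a_1 2 a_2 2\cdots$, and type B, of the form $2a_12a_2\cdots$, where $a_1a_2\cdots$ is a factor of $s$.

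\textbf{Counting and abelian complexity.} For $n=2k+1$, type A consists of the $k+2$ words $a_12\cdots 2a_{k+1}$ and type B of the $k+1$ words $2a_12\cdots a_k2$. For $n=2k$ each type has $k+1$ words. The total is $n+2$ in both cases.

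For odd $n$ the two types contain different numbers of $2$'s. Within each type the abelian class is determined by the abelian class of the underlying Sturmian factor. Since Sturmian words have abelian complexity $2$, we get $p^{ab}_w(n)=4$.

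For even $n$ both types contain exactly $k$ letters $2$, and the underlying Sturmian factors all have length $k$. Hence $p^{ab}_w(n)=2$, the poor and rich classes each containing words of both types.

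\textbf{Realising values $\ge 4$ with parity-preserving groups.} Let $O$ and $E$ be the sets of odd and even positions. Consider groups $G=G_O\times G_E$, where $G_O$ acts on $O$ and $G_E$ acts on $E$. For such $G$, words of different types are never equivalent. In a type A word the letters at $O$ form a Sturmian factor and the letters at $E$ are all $2$, and symmetrically for type B. Hence $p^G_w(n)$ is the number of $G_O$-classes of Sturmian factors read on $O$ plus the number of $G_E$-classes read on $E$.

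Identify $O$ and $E$ with $\{1,\dots,|O|\}$ and $\{1,\dots,|E|\}$ in increasing order. Take $G_O$ and $G_E$ of the form $S_{[m+1,\cdot]}$. The proof of Theorem~\ref{th:sturmian} shows that $G_O$ then yields any value in $\{2,\dots,|O|+1\}$ on the type A words, and independently $G_E$ yields any value in $\{2,\dots,|E|+1\}$ on the type B words. Summing gives every integer from $4$ to $n+2$, for both parities of $n$. For odd $n$ this already covers the whole range $[4,n+2]$.

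\textbf{The remaining values $2$ and $3$ for even $n=2k\ge 4$.} The value $2$ is the abelian complexity, given by $G=S_n$. The value $3$ is the main obstacle, and it is exactly where $n=2$ fails. We need a group that merges one abelian class across the two types but splits the other.

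My first attempt would be to exploit the asymmetry between the poor and rich classes coming from the unique right special factor, using the lexicographic arrays of Subsection~\ref{subsec:lex_array}. Concretely, I would try a group containing $S_O\times S_E$ together with an odd permutation that swaps $O$ and $E$ but fixes some position. Such a permutation can identify a type A word with a type B word only when the letter at that fixed position agrees. One then checks that, among the four groups of words (type $\times$ content), exactly one cross-type merge survives.

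If no uniform construction works, I would instead enumerate the factors of length $4$ of $\varphi(s)$ and check small groups by hand, say $S_{[2,n]}$ or the stabiliser of $\{1,n\}$. I would then try to extend to general even $n$ by replacing the distinguished positions with the structure of the special factor.
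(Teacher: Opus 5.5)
Up to the value $3$, your argument is correct and is the paper's. The counts $k+2$ and $k+1$ are right, as are the abelian complexities $4$ (odd $n$) and $2$ (even $n$). The products $G_O\times G_E$ built from the groups $S_{[m+1,\cdot]}$ of Theorem~\ref{th:sturmian} do give every value from $4$ to $n+2$.

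\textbf{The gap.} The value $3$ for even $n=2k\ge 4$ is the only genuinely new point of the proposition, and you leave it unresolved. Moreover, the construction you sketch for it cannot succeed.
\begin{itemize}
\item A permutation mapping $O$ onto $E$ fixes no position, so it is not well defined as described.
\item More fundamentally, no group $H$ with $S_O\times S_E< H\le S_n$ gives complexity $3$. If $H$ preserves the partition $\{O,E\}$, then $H=S_O\wr S_2$. It contains the permutation exchanging $2i-1$ and $2i$ for all $i$, which merges poor type A words with poor type B words \emph{and} rich with rich, giving $2$.
\item If $H$ does not preserve $\{O,E\}$, it is transitive. It is also primitive, because the only $S_O\times S_E$-invariant partitions into equal parts are the trivial ones and $\{O,E\}$. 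Since $H$ contains a transposition, $H=S_n$, and again the value is $2$.
\end{itemize}
Since $S_O\times S_E$ alone gives $4$, reaching $3$ means merging exactly one of the two cross-type pairs. That requires a group that breaks the parity structure, not one that enlarges $S_O\times S_E$.

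\textbf{How the paper closes it.} The paper uses $S_{[1,n-2]}\times S_{[n-1,n]}$. A factor's class is then determined by two things: the abelian class ($P$ or $R$) of the length-$(k-1)$ prefix of the underlying Sturmian factor, and its last letter. Balancedness excludes having both $P0$ and $R1$. The right special factor $x$ of length $k-1$ gives $x0$ and $x1$ in one prefix class. A factor whose prefix lies in the other class supplies a third class, so there are exactly three.

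\textbf{Your fallback candidate also works, with an argument you did not give.} Take the stabiliser $\langle(1\,n)\rangle\times S_{[2,n-1]}$ of $\{1,n\}$.
\begin{itemize}
\item A type A word over the Sturmian factor $f$ has class determined by (abelian class of $f$, first letter of $f$). A type B word has class determined by (abelian class of $f$, last letter of $f$).
\item Since $F_k(s)$ is closed under reversal, the number of classes equals the number of pairs (abelian class, first letter) over $F_k(s)$.
\item Besides (poor, $0$) and (rich, $1$), exactly one of (poor, $1$), (rich, $0$) occurs. For $k\ge 2$, the step in the lexicographic array from the last factor starting with $0$ to the first starting with $1$ has the form $01y\to 10y$, so it stays in one abelian class. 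All poor factors precede all rich ones, so the two extra pairs cannot both occur.
\end{itemize}
Without one of these verifications, the even case of the proposition remains unproved.
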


In other words, the proposition says that words of this form satisfy the universal group complexity property except for the length $2$.

        
		\begin{proof}
            Let $u$ be a Sturmian word such that $w = \varphi(u)$, where $\varphi\colon \begin{cases}
			0 \mapsto 0\\
			1 \mapsto 12
		\end{cases}$. We consider two cases depending on parity of $n$.
			
			First consider the case when $n = 2k+1$ is odd. The set of factors of $w$ is obtained from the set of factors of $u$ as follows. We have $k + 2$ factors of form $f_12f_22\cdots2f_{k+1}$, where $f_1f_2\cdots f_{k+1}$ is a factor of $u$ of length $k + 1$ and we have $k + 1$ factors of form $2e_12e_2\cdots e_k2$, where $e_1e_2\cdots e_k$ is a factor of $u$ of length $k$. All these factors are distinct, so we have all $2k + 3$ factors of length $2k + 1$ of the word $w$. 
			
			By Theorem \ref{th:sturmian},  we can obtain any integer between $2$ and $k+2$ as group complexity 
            for the word $u$ and length $k + 1$. Let $G\leqslant S_{k+1}$ be the group that gives complexity $p$ for the word $u$. Using $G$, we construct a group $G'\leqslant S_{2k+1}$ acting on factors of $w$ of length $2k+1$ as follows. 
            For each permutation $\sigma\in G$, we define $\sigma'\in G'$ in the following way: if $\sigma(i)=j$, then $\sigma'(2i+1)=2j+1$. In other words, we transfer the action of $G$ to the action of $G'$ on odd indices.
            For example, for the permutation $\sigma=(134)(25)\in G$ we have $\sigma'=(157)(39)$. 
            Symmetrically, we define a subgroup acting on even indices. Let $H\leqslant S_k$ be the group that gives  complexity $q$ for the word $u$. We then define the group $H'\leqslant S_{2k+1}$: for each $\sigma\in H$, we define $\sigma'\in H'$: if $\sigma(i)=j$, then $\sigma'(2i)=2j$. By  definition, $G'$ splits the set of factors of $w$ of length $2k+1$ with $2$'s at even positions into $p$ classes, and does not split the two abelian classes of words with $2$'s at odd positions. Symmetrical situation holds for $H'$. The cartesian product $G' \times H'$ acts on factors of $w$ of length $2k+1$ and gives the complexity $p + q$: factors with different amount of 2's are never group equivalent, moreover, there are $p$ equivalence classes of factors with $k$ occurrences of $2$ and $q$ classes of factors with $k+1$ occurrences of $2$. Hence choosing $p$ and $q$ we can obtain all integers between $ 4$ and $2k + 3$ for the group complexity. It remains to notice that in this case we have abelian complexity equal to $4$. Indeed, there are two classes regarding the number of $2$'s, and for each of these classes there are rich and poor words (as defined for Sturmian words).

			Now consider the case when $n = 2k$ is even. Similarly we can obtain complexities from $4$ to $2k + 2$. In this case the number of $2$'s is the same for all factors of $w$ of length $n$, and we have factors with $2$'s at odd indices and with $2$'s at even indices. The abelian complexity of $w$ is equal to $2$ for the length $n=2k$ as it is equal to the abelian complexity of $u$ for the length $k$. We now show that we can obtain group complexity equal to $3$ with the group that permutes the first $2k-2$ symbols, and independently can switch the last two symbols (the group $S_{[1,n-2]}\times S_{[n-1,n]}$). To show this, we split the factors into at most four classes $P0, P1, R0, R1$ as follows. We decompose a factor of length $n$ into a prefix of length $n-2$ and a suffix of length $2$. For the prefix, there are two abelian classes of factors of length $n - 2$: rich ($R$) and poor ($P$) regarding the number of $1$'s. The last two symbols are $a$ and $2$ (in some order), where $a\in \{0,1\}$. We then call the class $Pa$ or $Ra$ depending on the prefix and the suffix. Since $u$ is balanced, we cannot have both classes $P0$ and $R1$ in $w$, so the number of classes is at most three. On the other hand, at least one of the classes $P0$ and $R1$ is present in $w$, since otherwise we would have the same number of $1$'s for all factors of $u$ and abelian complexity would be $1$.

			Now, we claim that if we have the class $P0$ in $w$, then we also have $P1$. Indeed, in the word $u$, which is Sturmian, for each length, there are two factors from classes $P$ and $R$ which are consecutive 
            in the lexicographic array, otherwise the abelian complexity is 1. So, we have that the last symbol of the factor in $R$ must be $1$. The claim follows.
			Symmetrically, if we have class $R1$, then we have the classes $R0$ and $P1$.
	\end{proof}
    
We note that the universal group complexity property does not hold for the length $n = 2$. Indeed, the factors are $02, 12, 20$ and $21$, so we have $p(2) = 4$ and $p^{ab}(2) = 2$. Since $S_2$ has only two subgroups, we cannot obtain $3$ as group complexity. However, if we unify the letter $2$ with the least frequent letter of the Sturmian word, then we obtain a word satisfying the universal complexity property for each $n$:

   \newpage

    \begin{proposition}
        Let $s$ be a Sturmian word, $a$ be its least frequent letter and $\varphi_a\colon
        \begin{cases}
            0\mapsto0a\\
            1\mapsto1a
        \end{cases}$. Then $\varphi_a(s)$ has universal group complexity.

    \end{proposition}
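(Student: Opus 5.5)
The plan is to reuse the odd/even decomposition from the proof of Proposition~\ref{pr:case2} for the large values of complexity, and to treat separately the few small values that the product construction misses. Up to renaming letters we assume $a=1$, so in $s$ the letter $1$ is isolated (the factor $11$ does not occur). Every factor of $w=\varphi_a(s)$ of length $n$ is then of one of two types.
\begin{itemize}
\item \emph{Type A}: letters of $s$ at odd positions and $a$ at even positions.
\item \emph{Type B}: $a$ at odd positions and letters of $s$ at even positions.
\end{itemize}
For $n=2k+1$ there are $k+2$ factors of type A, coming from factors of $s$ of length $k+1$, and $k+1$ factors of type B, coming from factors of length $k$. For $n=2k$ there are $k+1$ factors of each type. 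The first step is to check that for $n\ge 3$ (except for tiny degenerate cases, checked by hand) no word is of both types. This holds because a word of both types would force a factor of $s$ consisting only of $1$'s of length at least $2$. Consequently $p_w(n)=n+2$ for $n\geq 3$, as for type II words.

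Step 1, the large values. We take the groups $G'$ and $H'$ exactly as in the proof of Proposition~\ref{pr:case2}: $G'$ transports a subgroup of $S_{\lceil n/2\rceil}$ to the odd positions and $H'$ transports one to the even positions. Theorem~\ref{th:sturmian} lets us prescribe $p$ classes among the type A factors and $q$ classes among the type B factors, with $p$ and $q$ ranging over $2,\dots,(\#\text{factors of that type})$. Since $G'\times H'$ preserves the parity of positions, a type A word and a type B word could only be equivalent if the even-position letters of the type B word were all equal to $a$. For $n\ge 4$ this is excluded by the fact that $11$ is not a factor of $s$. Hence $G'\times H'$ realizes every value from $4$ to $n+2$, and the case $n=3$ is checked directly.

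Step 2, the abelian complexity. Here the unification of $2$ with $a$ matters: the number of $a$'s in a factor equals its number of $a$-filler positions plus its number of $1$'s in the $s$-part. For odd $n$ the type A counts are $k+r$ or $k+r+1$, and the type B counts are $k+1+r'$ or $k+2+r'$, where $r$ and $r'$ are the poor counts in $s$ for lengths $k+1$ and $k$. Because $s$ is balanced, $r\in\{r',r'+1\}$, so these four values overlap and $p^{ab}_w(n)\in\{2,3\}$. The even case is similar. I would make this precise, obtaining $p^{ab}_w(n)=2$ or $3$ according to whether the Sturmian rich/poor switch happens at the same place for lengths $k$ and $k+1$.

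Step 3, the remaining values $3$ (and $2$, which is $p^{ab}$ when it is $2$). This is the main obstacle. For the value $3$ when $p^{ab}_w(n)=2$, I would first try the group $S_{[2,n]}$: a class is then determined by the first letter together with the number of $a$'s. Using the lexicographic-array structure of $s$ and balancedness, I would show that exactly one of the two abelian classes contains factors with both first letters, giving exactly $3$ classes. If this does not work for some parity of $n$, the fallback is the group $S_{[1,n-2]}\times S_{[n-1,n]}$, counted via the classes $P0$, $P1$, $R0$, $R1$ as in the even case of Proposition~\ref{pr:case2}; there, balancedness forbids one of the four combinations. With Steps 1 to 3, all values between $p^{ab}_w(n)$ and $p_w(n)$ are attained, and small $n$ (namely $n=1,2,3$) are verified directly. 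In particular, $n=2$ now works: the factors are $0a$, $a0$ and $aa$ (or $1a$, $a1$), and they yield the right counts.
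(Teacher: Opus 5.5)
You have a genuine gap in Step 3, at the case you yourself call the main obstacle: the value $3$ for odd $n=2k+1$ with $p^{ab}_w(n)=2$. Steps 1 and 2 are fine. For even $n$ your fallback $S_{[1,n-2]}\times S_{[n-1,n]}$ does transfer from Proposition~\ref{pr:case2}: for both types, the orbit is determined by whether the length-$(k-1)$ prefix of the underlying factor of $s$ is poor or rich, together with its last letter.

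For odd $n$ you only announce that $S_{[2,n]}$ gives exactly three classes ``by the lexicographic array and balancedness''. No such argument can work unless it uses the hypothesis $p^{ab}_w(n)=2$, because the claim is false in general. Take the Fibonacci word with $a=1$ and $n=8$: the rich factor $0101$ of length $4$ begins with $0$. So the factors $01110101$, $01110111$, $10111010$, $10111011$ realize all four pairs (first letter, abelian class), giving $4$ orbits. The fallback cannot rescue the odd case either. For $n=5$, the factors $01110$, $01011$, $11011$, $11101$ of $\varphi_1(f)$ lie in four different orbits of $S_{[1,3]}\times S_{[4,5]}$. So, as written, you have no justified group for odd $n$. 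Your Step 2 criterion, about the switch ``happening at the same place'', is too vague to supply the missing link.

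What is missing is the connection between $p^{ab}_w(n)=2$ and the first letter of the rich factors. Let $r_j$ be the number of $1$'s in poor factors of $s$ of length $j$. For $n=2k+1$, $p^{ab}_w(n)=2$ means exactly $r_{k+1}=r_k+1$.
\begin{itemize}
\item Let $x$ be the right special factor of $s$ of length $k$. Since $x0$ is the last poor factor of length $k+1$, we have $|x|_1=r_k+1$.
\item If $x=0y$, then $y1$, a suffix of the factor $x1$, is a factor of length $k$ with $r_k+2$ ones. This contradicts balancedness, so $x$ begins with $1$.
\item The rich factors of length $k+1$ are exactly those from $x1$ onward in the lexicographic array, so they all begin with $1$.
\end{itemize}
Consequently $S_{[2,n]}$ has exactly three orbits: (first letter $0$, poor), (first letter $1$, poor), (first letter $1$, rich). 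The type B words supply both classes with first letter $1$. The lexicographically smallest factor of length $k+1$ supplies the first orbit. No rich word begins with $0$.

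With this lemma your route is complete. It is also shorter than the paper's proof, which splits into three subcases. Either the right special factor of length $k$ begins with $1$ (fix position $1$), or the one of length $k-1$ does (fix position $2$), or both begin with $0$ (use $S_{[1,2]}\times S_{[3,n]}$). The argument above shows that the first subcase always applies when $p^{ab}_w(n)=2$.
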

    \begin{proof}
        Without loss of generality we may assume that $a=1$, as the other case is symmetric.
     
        For $n \leqslant 3$ we have  factors $0, 1; 01, 10, 11; 010, 011, 101, 110, 111$; one can check directly that $\varphi_1(s)$ has universal group complexity for each length $n\leqslant3$.

        For $n \geqslant 4$ we can obtain complexities from $4$ to $n+2$ using groups from Proposition \ref{pr:case2}. 
        Factors with $1$'s at each odd position and with $1$'s at each even postition are distinct as $n \geqslant 4$ and we do not have two consequent $1$'s in $s$. As $\varphi_1(s)$ can be obtained from $\varphi(s)$ by identifying $1$ and $2$, factor complexity of $\varphi_1(s)$ cannot be greater than $n + 2$.

        For even values of $n$ the abelian complexity of $\varphi_1(s)$ is equal to $2$, and we can obtain group complexity $3$ using the same groups as in Proposition \ref{pr:case2}. 

        It remains to prove that we can obtain complexity $3$ in the case of odd values of $n = 2k + 1$ and abelian complexity equal to $2$. Let $l$ be the number of $1$'s in a poor factor of $s$ of length $k$. Then a factor of $\varphi_1(s)$ of length $n$ can have $l + k + 1$ or $l + k + 2$ occurrences of $1$ (as abelian complexity is equal to $2$). So, a factor of $s$ 
        of length $k+1$ can have $l + 1$ or $l + 2$ occurrences of $1$.

        Consider the list of factors of $s$ of length $m$. Suppose that there exists a position $i$ ($1 \leqslant i \leqslant m$) such that all poor factors have $1$ at position $i$ or  all rich factors have $1$ at  position $i$. We will now show that for the word $\varphi_1(s)$ the group complexity equal to $3$ can be achieved for the lengths $2m + 1$ and $2m - 1$ (recall that we now consider only lengths with abelian complexity equal to 2).
        For the length $2m - 1$ we can use the group that permutes letters at all positions except for $2i - 1$ (this group is isomorphic to $S_{2m-2}$). 
       We have three classes of group equivalence relatively to this group: 
        rich 
        factors with $1$ at position $2i - 1$ (as the abelian complexity of $\psi_1(s)$ is 2 for this length, we have two abelian classes of factors, poor and rich factors),  
        poor factors with $1$ at position $2i - 1$, 
        and one class with $0$ at position $2i - 1$. 
        Factors from the latter class have 1's in even positions, and odd positions are filled with  factors of $s$ of length $m$ with $0$ at position $i$. Since we assumed that either every factor of $s$ of length $m$ with $0$ at position $i$ is rich or they all are poor, we have only one group class in $\varphi_1(s)$ with $0$ at position $2i - 1$.
            For the length $2m + 1$ the proof is similar: we can use the group that permutes letters at all positions except for $2i$ (this group is isomorphic to $S_{2m}$). 

        If the right special factor of $s$ of length $m-1$ starts with $1$, we have all rich factors beginning with $1$ and hence we can obtain complexity $3$ for $\varphi_1(s)$ and lengths $2m+1$ and $2m-1$. 
        So, it remains to consider the lengths $n = 2k + 1$ such that right special factors of $s$ of lengths $k - 1$ and $k$ start with $0$. In this case we can obtain complexity 3 using the group $S_{[1, 2]} \times S_{[3, 2k + 1]}$. Poor factors of the word $\varphi_1(s)$ always start with $01$ or $10$, hence they are in the same group equivalence class relatively to the group $S_{[1, 2]} \times S_{[3, 2k + 1]}$. Rich factors can start with $01/10$ or with $11$, so the set of rich factors is split into two distinct group classes. 
    \end{proof}
          
	Let $u$ be an infinite word. For each integer $n$, we consider the lexicographic array of $u$, i.e., the set of factors of $u$ of length $n$ sorted in lexicographic order.
    We now define a \emph{scheme} of $u$ as a set of rules such that for each $n$ each factor (except for the first one) in the lexicographic array can be obtained from the previous one by applying exactly one rule. By a \emph{rule} $AxB \to AyB$, where $x,y\in\Sigma^*$, $|x|=|y|$, we mean that in the word $u$ we can find a factor containing $x$, replace it with $y$, so that the resulting word is another factor  of $u$, and moreover these two factors are consecutive in the lexicographic array.
    A rule of the form $Ax \to Ay$ (resp., $xB \to yB$) can only be applied to the suffix (resp., prefix) $x$. For example, the scheme of any Sturmian word is the set of the following two rules: $A01B \rightarrow A10B$ and $A0 \rightarrow A1$; here $A$ and $B$ are finite words \cite{scheme}. A \emph{$k$-scheme} is a set of rules that holds starting from length $k$.

We now introduce the notion of a \emph{corresponding factor}, i.e., a factor of a Sturmian word $s$ corresponding to a factor of a ternary word of minimal complexity $\psi(s)$: we replace each occurrence of 12 by 1 and, if the factor begins with 2, then we replace this occurrence of 2 with 1. For example, $2012012001 \mapsto 10101001$. 
            \begin{lemma}\label{lm:expansion}
                Let $s$ be a Sturmian word, and $\psi(s)$ a minimal complexity  word, and let $m$, $n$ be integers. Suppose that $X$ and $Y$ are two factors of $\psi(s)$ of length $n$, such that their corresponding factors $\tilde X$ and $\tilde Y$ are of length $m$. If there exists a factor $\tilde Z$ of $s$ of length $m$ such that $\tilde X < \tilde Z < \tilde Y$, then there exists a factor $Z$ of $\psi(s)$ of length $n$ such that $X < Z < Y$.
            \end{lemma}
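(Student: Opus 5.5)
The plan is to reduce the lemma to two properties of the map $\Phi\colon X\mapsto \tilde X$ that sends a factor of $\psi(s)$ to its corresponding factor. We fix the order $0\prec 1\prec 2$ on $\mathbb T$ and $0\prec 1$ on the alphabet of $s$. The two properties are:
\begin{enumerate}
\item[(a)] \emph{Weak monotonicity.} For factors $X,Y$ of $\psi(s)$ of the same length $n$ whose corresponding factors have the same length $m$, $X\le Y$ implies $\tilde X\le \tilde Y$.
\item[(b)] \emph{A surjectivity statement.} For the given $n$ and $m$ and every factor $\tilde Z$ of $s$ of length $m$ lying strictly between $\tilde X$ and $\tilde Y$, there is a factor $Z$ of $\psi(s)$ of length $n$ with $\Phi(Z)=\tilde Z$.
\end{enumerate}
Given (a) and (b), the lemma follows at once. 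Take $Z$ from (b). If $Z\le X$, then (a) gives $\tilde Z\le\tilde X$, a contradiction; symmetrically $Z\ge Y$ is impossible. Since $Z\neq X,Y$ (their images differ), we get $X<Z<Y$.

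For (a), I would first describe the shape of a factor of $\psi(s)$. Every factor $X$ can be written uniquely as $X=\alpha\,\psi(u)\,\beta$. Here $\alpha\in\{\varepsilon,2\}$, where $\alpha=2$ is the tail of an image $12$ of a letter $1$. Likewise $\beta\in\{\varepsilon,1\}$, where $\beta=1$ is a truncated image of $1$. Then $\tilde X$ equals $1u$ or $u$ according to $\alpha$, with $\beta$ contributing a final $1$. Now compare $X<Y$ at the first differing position and translate back through $\psi$. The letter $0$ maps to $0$ and $1$ maps to $12$, so the first difference between $\psi(u)$ and $\psi(v)$ occurs exactly where $u$ and $v$ first differ, as $0$ versus $1$. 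This is the usual order preservation of $\psi$. The only delicate point is the beginning. A factor starting with $2$ and a factor starting with $12$ both have corresponding factors starting with $1$, and both are larger than factors starting with $0$. Here I would check by a short case analysis on the first letters ($0$, $1$, $2$) that the order is still weakly respected. When the first letters are $1$ and $2$, the inequality $12\cdots<2\cdots$ must be matched by $\tilde X\le\tilde Y$. I would verify this using that $X$ and $Y$ have equal length and equal $|\tilde{\cdot}|$. This gives the relation $|X|_1+|X|_0+(\text{correction for }\alpha,\beta)=n$, which ties the number of $1$'s to the boundary letters.

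Part (b) is where I expect the main obstacle. Fix an occurrence of $\tilde Z$ in $s$. The candidates for $Z$ are $\psi(\tilde Z)$, possibly with its initial $1$ replaced by $2$ (i.e.\ starting inside $\psi(1)=12$), and possibly with its final $2$ removed. These have lengths $m+|\tilde Z|_1-\delta$ with $\delta\in\{0,1,2\}$, where each unit of $\delta$ is available only when $\tilde Z$ begins, respectively ends, with $1$. On the other hand, $n=m+|\tilde X|_1-\delta_X=m+|\tilde Y|_1-\delta_Y$. Since $s$ is balanced, $|\tilde Z|_1$ differs from $|\tilde X|_1$ by at most one. So the task is to show that the required $\delta_Z=n-m-|\tilde Z|_1$ lies in $\{0,1,2\}$ and is realizable given the first and last letters of $\tilde Z$. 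I would do this using the lexicographic array of $s$ for length $m$ (Subsection~\ref{subsec:lex_array}). Between $\tilde X$ and $\tilde Y$, consecutive factors differ by $01\to10$ or by the single switch $x0\to x1$. The first kind keeps $|\cdot|_1$ and changes boundary letters only when the swap is at the very beginning or end. This lets me track how $(|\cdot|_1$, first letter, last letter$)$ evolves from $\tilde X$ to $\tilde Y$ and check that the length constraint that is satisfiable at both ends remains satisfiable at $\tilde Z$. If this case analysis gets messy, the fallback is to use the factor-complexity count of $\psi(s)$ together with monotonicity to argue that no factor of $s$ strictly between $\tilde X$ and $\tilde Y$ can be skipped.
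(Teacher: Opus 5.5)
Your reduction rests on two claims, and both are false, so the argument does not go through.

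\textbf{Claim (a) fails.} It fails exactly in the case you flagged as delicate, first letters $1$ and $2$. In the Fibonacci word $f$, the factors $101001$ and $100101$ give the factors $X=12012001$ (a prefix of $\psi(101001)$) and $Y=20012012$ (a suffix of $\psi(100101)$) of $\psi(f)$. Both have length $8$, and $\tilde X=101001$, $\tilde Y=100101$ both have length $6$. So $X<Y$ but $\tilde X>\tilde Y$, and no counting of lengths or of $1$'s can rescue this case.

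In fact your scheme proves too much. From (a) and (b) you would get $X<Y$ from $\tilde X<\tilde Z<\tilde Y$ alone, and the statement read literally is false. Take the Sturmian word obtained from $f$ by exchanging $0$ and $1$. The factors $X=201201212$ and $Y=121201201$ of its $\psi$-image satisfy $\tilde X=101011<101101<110101=\tilde Y$, all three being factors of length $6$, yet $X>Y$. So the lemma tacitly needs the hypothesis $X<Y$. The paper's case split (``$Y$ and hence $X$ do not start with $2$'') assumes it, and it holds wherever the lemma is applied. The true form of (a) is that $\Phi$ is monotone on all pairs except those where one factor begins with $1$ and the other with $2$.

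\textbf{Claim (b) fails too, even assuming $X<Y$.} In $f$ take $m=5$ and $n=6$, with $X=012001$ (a prefix of $\psi(01001)$, so $\tilde X=01001$) and $Y=200120$ (a suffix of $\psi(10010)$, so $\tilde Y=10010$). Then $\tilde Z=01010$ lies strictly between them. But any factor whose corresponding factor is $01010$ must begin and end with $0$, hence equals $\psi(01010)=0120120$, which has length $7$. The required correction $\delta=1$ is realizable at $\tilde X$ and at $\tilde Y$ but not at $\tilde Z$. The lemma still holds here: $Z=012012$ works, and its corresponding factor $0101$ has length $4$.

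\textbf{The missing idea} is not to insist on $\Phi(Z)=\tilde Z$. The paper takes $Z$ to be the length-$n$ prefix of $\psi(\tilde Z)$, and proceeds as follows.
\begin{itemize}
\item This prefix exists: $|\tilde Z|_1\ge|\tilde X|_1$, since poor factors precede rich ones in the Sturmian lexicographic array, and $|\psi(\tilde X)|\ge n$ when $X$ does not begin with $2$.
\item Order preservation of $\psi$ plus truncation gives $X\le Z\le Y$. Equality is excluded by a short last-letter argument using $|\tilde Z|=|\tilde X|$.
\item Factors beginning with $2$ are handled by prepending $1$.
\item In the mixed case ($X$ begins with $1$, $Y$ with $2$), $Z$ begins with $1$, so $Z<Y$ is automatic. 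No monotonicity of $\Phi$ across the $1/2$ boundary is ever needed.
\end{itemize}
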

            \begin{proof}
                We study cases according to the position of the border between factors that start with $1$ and factors that start with $2$ in the lexicographic array of length $n$:
            
                Case 1. $Y$ and hence $X$ do not start with $2$. Then we define $Z$ as a prefix of length $n$ of $\psi(\tilde Z)$. Since $\tilde Z > \tilde X$, the factor $\tilde{Z}$ cannot have less $1$'s than $\tilde X$ (see Subsection \ref{subsec:lex_array} on lexicographic arrays of Stumian words), so $|\psi(\tilde{Z})| \geqslant n$. 
                Note that the morphism $\psi$ preserves the lexicographic order, i.e., if for some words $u$ and $v$ we have $u<v$, then $\psi(u)<\psi(v)$. 
                So, we have $\psi(\tilde X) < \psi(\tilde Z) < \psi(\tilde Y)$. Taking prefixes of these words of length $n$, we obtain $X \leqslant Z \leqslant Y$. 
                
                Now we have to prove that the equality is not possible, i.e., we cannot have $Z=X$ or $Z=Y$. In Case 1 we remove no more than one letter, i.e., $\psi(\tilde X)=X$ or $\psi(\tilde X)=Xa$, and the same holds for $\psi(\tilde Z)$. So, if $X = Z$, then $\psi(\tilde X) < \psi(\tilde Z)$ implies either $X < Zb$ or $Xa < Zb$. Note that for the morphism $\psi$, the last letter of $\psi(w)$ is always determined by the penultimate letter: if the penultimate letter is $0$ or $2$, then the last one is $0$, and if the penultimate letter is $1$, then the last one is $2$. This implies that the case $Xa < Zb$ is impossible. If we have $X < Zb$, then $b = 0$: $Zb$ cannot end with $1$, and if $b = 2$, then $Z = X$ ends with $1$, which is impossible as well. So, the remaining case is $\psi(\tilde{X})=X<\psi(\tilde{Z})=Z0$, hence $\tilde Z = \tilde X0$. We reach a contradiction since $\tilde Z$ and $\tilde X$ must be of the same length by the conditions of the lemma. So, we proved that $Z\neq X$. The proof of $Z \neq Y$ is symmetric.

                Case 2. $X$ and hence $Y$ starts with $2$. This case is reduced to Case 1 as follows. Consider factors $1X$ and $1Y$ satisfying Case 1. We proved that there exists a factor $Z'$ such that $1X < Z' < 1Y$, which means that $Z'$ starts with $1$. If we remove it, we obtain the needed factor $Z$.

                Case 3. $X$ starts with $1$, $Y$ starts with $2$. We can assume that $\tilde Z$ is the factor following $\tilde X$ factor in the lexicographic array for  $s$. It can be obtained applying rule $A01B \rightarrow A10B$ or $A0 \rightarrow A1$ to $\tilde{X}$. In either case, we define $Z$ as a prefix of length $n$ of $\psi(\tilde{Z})$ and we have $Z > \psi(\tilde X)$; moreover, $Z$  starts with $1$, so $Z < Y$.
            \end{proof}
            
            
            
            
	
	The following theorem provides a scheme for a ternary minimal complexity word of type III. 

    
	\begin{theorem}\label{th:scheme} 
    A ternary minimal complexity word of type III has a $2$-scheme consisting of the following four rules: 
 \begin{enumerate}   
    \item $A012B\rightarrow A120B$,
    
    \item $A0 \rightarrow A1$, 
    
    \item $A01 \rightarrow A12$, 
    
    \item $12B \rightarrow 20B$.
\end{enumerate}    

     Moreover, each of the rules 2--4 is applied exactly once for each length.
    \end{theorem}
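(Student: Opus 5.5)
Throughout, the order on $\mathbb T$ is $0\prec 1\prec 2$, and $x=\psi(s)$ for a Sturmian word $s$ (it is enough to work with the set of factors). Fix $n\ge 2$ and take two factors $X<Y$ of $x$ of length $n$ that are consecutive in the lexicographic array. I will show that $Y$ is obtained from $X$ by one of the rules 1--4. The tool is the correspondence $X\mapsto\tilde X$ with factors of $s$, combined with Lemma~\ref{lm:expansion}.

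\textbf{Step 1: local shape of factors of $\psi(s)$.} Every occurrence of $1$ is followed by $2$, every $2$ is preceded by $1$ (except possibly at the first position), and $0$ is followed by $0$ or $1$. Hence a factor is determined by its first letter together with its corresponding factor $\tilde X$. The lexicographic array then splits into three blocks: factors starting with $0$, with $1$, and with $2$. Inside the first two blocks, $X<Y$ if and only if $\tilde X\le\tilde Y$; this uses the fact that $\psi$ preserves the order, as noted in the proof of Lemma~\ref{lm:expansion}. The block starting with $2$ is the block starting with $1$ shifted by one position: $2B$ is a factor exactly when $12B$ is.

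\textbf{Step 2: consecutive factors inside a block.} Let $X,Y$ be consecutive and both start with $0$, or both with $1$. First suppose $|\tilde X|=|\tilde Y|$. By Lemma~\ref{lm:expansion}, $\tilde X$ and $\tilde Y$ are consecutive in the Sturmian array, so $\tilde Y$ comes from $\tilde X$ by $A01B\to A10B$ or by $A0\to A1$. Applying $\psi$, the rule $01\to10$ becomes $012\to120$, which is rule 1, and the rule $A0\to A1$ becomes rule 2 or rule 3 after truncation to length $n$. Rule 2 arises when the truncated image ends in $0$ versus $1$. Rule 3 arises when the last letter of $X$ is a $1$ coming from a $12$ cut off at the end, compared with $2$.

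Now suppose $|\tilde X|\ne|\tilde Y|$. This happens when the number of $2$'s differs, which is possible because of truncation at the end. I will handle it by extending both words by one letter to the right, or by comparing the suffix structure, and show that we still land in rule 2 or rule 3. Whenever a $2$ begins the word, the same analysis applies to the block starting with $2$ through the shift $2B\leftrightarrow 12B$.

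\textbf{Step 3: transitions between blocks and the count.} The last factor starting with $0$ is followed by the first starting with $1$, and the last starting with $1$ by the first starting with $2$. For the second transition, the largest factor starting with $1$ is $12B$ with $B$ maximal. I would show that the smallest factor starting with $2$ is $20B$. The idea is that after $2$ the next symbol $0$ is the smallest option, and that $B$ coincides with the continuation, using the fact that in $s$ the maximal factor starting with $1$ is followed by $0$. This gives rule 4. The first transition should be of the form $A0\to A1$ applied at the start only if $n=1$, so here I expect it to be an instance of rule 1 or rule 2. Here the type (IIIa/IIIb) may matter: the largest factor starting with $0$ might be $01\ldots$ and the next one $1\ldots$, which must be checked against rule 1 with empty $A$.

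Finally, for the uniqueness claim, count. There are $n+2$ factors, so $n+1$ transitions. Rule 1 corresponds to the $01\to10$ steps in the Sturmian array and is applied many times. Rules 2 and 3 each come from the single right special factor of the relevant lengths, and rule 4 from the single $1$-to-$2$ boundary; counting right special factors of $x$ (exactly one per length) gives uniqueness for rules 2 and 3. \textbf{The main obstacle} is Step 2 in the case of unequal lengths of corresponding factors, and the precise claim at the $0$/$1$ block boundary; I would first work through small lengths for the Fibonacci image to fix the exact shapes before doing the general argument.
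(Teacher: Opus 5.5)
Your route differs from the paper's. The paper inducts on $n$: it writes consecutive factors of length $n+1$ as $Xu$, $Yv$, gets rule 2 when $X=Y$, and otherwise combines the known step $X\vdash Y$ with a case analysis on the last letters $(u,v)$. Your equal-length case is correct: by Lemma~\ref{lm:expansion}, $\tilde X$ and $\tilde Y$ are consecutive in the Sturmian array, and $A01B\to A10B$, $A0\to A1$ become rules 1 and 2.

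\textbf{The unequal-length case is the gap.} You postpone it, but it is the heart of the theorem, and the proposal gives no argument for it. Note that equal lengths produce \emph{only} rules 1 and 2: for $X=A01$, $Y=A12$ one has $\tilde X=\tilde A01$ and $\tilde Y=\tilde A1$. So rule 3, including the step $01\to12$ at $n=2$, arises exclusively when $|\tilde X|\neq|\tilde Y|$, not in the equal-length case as you write. Conversely, rule 2 never arises there. What must be shown is that two consecutive factors with $|\tilde X|\neq|\tilde Y|$ are necessarily $A01$ and $A12$, i.e., that every configuration matching no rule is impossible. Lemma~\ref{lm:expansion} cannot be applied directly because it assumes equal lengths, and ``extending by one letter or comparing suffix structure'' is not an argument. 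In the paper this is exactly the work of the subcases $u\neq v$: the pairs $(u,v)\in\{(0,1),(0,2),(1,0),(2,1)\}$ are excluded using Lemma~\ref{lm:expansion}, balancedness of $s$, and uniqueness of the right special factor.

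Within your framework it can be done as follows. Set $m=\min(|\tilde X|,|\tilde Y|)$; the first difference of $\tilde X,\tilde Y$ lies among the first $m$ letters. Compare length-$m$ prefixes using ``poor before rich'' and balance, together with $|\psi(\tilde X)|,|\psi(\tilde Y)|\in\{n,n+1\}$. This forces $|\tilde X|=|\tilde Y|+1$, with the length-$m$ prefix of $\tilde X$ poor and $\tilde Y$ rich. An expansion argument as in Lemma~\ref{lm:expansion} then gives $\tilde X=\tilde R0c$ and $\tilde Y=\tilde R1$, where $\tilde R$ is the right special factor of $s$ of length $m-1$. Then $c=1$ gives rule 3. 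The case $c=0$ is impossible: by balance and the Sturmian scheme, $\tilde R01$ is then a factor of $s$, so $\psi(\tilde R)01$ lies strictly between $X$ and $Y$.

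\textbf{Step 3 is also only a sketch.} For the $1\to2$ boundary you must prove that the largest factor of length $n$ beginning with $1$ is $12B$ and the smallest beginning with $2$ is $20B$, for the same $B$. The property of $s$ you invoke does not give this. One way: among factors of $s$ of fixed length beginning with $1$, the largest is a prefix of $1c$ and the smallest a prefix of $10c$, where $c$ is the infinite word whose prefixes are the left special factors of $s$. Since $2C$ is a factor of $x$ iff $12C$ is, the order-preserving $\psi$ yields both extremes, with $B$ the prefix of length $n-2$ of $\psi(c)$. In the paper this boundary needs no separate treatment, since rule 4 there is inherited through the induction.

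Your expectation for the $0\to1$ boundary is off. Rule 2 cannot change the first letter when $n\geq2$. The boundary is $01\to12$ (rule 3) for $n=2$ and $012B\to120B$ (rule 1 with empty $A$) for $n\geq3$. It needs no separate argument: Case 1 of Lemma~\ref{lm:expansion} only requires that neither word begins with $2$, so this boundary is covered once both length cases are handled for all pairs not beginning with $2$. Your counting argument for ``exactly once'' is fine once the scheme is established.
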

		\begin{proof} We let $n$ denote the length of the lexicographic array.
We start with the second statement. Rule 2 is applied when $A$ is a left special factor of length $n-1$. As in a minimal complexity word of type III we have a unique right special factor of each length, which is extended to the right by $0$ and $1$, Rule 2 is applied exactly once. Similarly for Rule 3 with $A$ being right special of 
length $n-2$, and for Rule 4 with $B$ being left special factor of length $n-2$.
        
			We now prove the first statement. We write $X \vdash Y$ for factors $X$ and $Y$ if in the lexicographic array the word $X$ is followed by the word $Y$, and $Y$ obtained from $X$ via one of the four rules from the statement of the theorem.

           First consider $n=2$. For words of type IIIa we have four factors $00 \vdash 01 \vdash 12 \vdash 20$. The applied rules are 2, 3 and 4, respectively. For words of type IIIb we have four factors $01 \vdash 12 \vdash 20 \vdash 21$. The applied rules are 3, 4 and 2, respectively.
			
			  The proof for length $n\geqslant3$ is by induction on length. 
			
			For the base $n = 3$ and type IIIa, we have two possibilities for the list of factors, depending on whether the word contains the factor $000$ or not:
            
            \begin{itemize} 
            
            \item  If $000$ is not a factor: $001 \vdash 012 \vdash 120 \vdash 200 \vdash 201$. The applied rules are 3, 1, 4 and 2, respectively.
			
			\item If $000$ is a factor: $000 \vdash 001 \vdash 012 \vdash 120 \vdash 200$. The applied rules 2, 3, 1 and 4, respectively.

            \end{itemize}

            For type IIIb, we have $012 \vdash 120 \vdash 121 \vdash 201 \vdash 212$. The applied rules are 1, 2, 4 and 3, respectively.

			We proceed with the induction step $n \mapsto n+1$. Consider two consecutive factors of length $n + 1$: $Xu$ and $Yv$, where $X$ and $Y$ are factors of length $n$, and $u$ and $v$ are letters.
			
			\begin{center}
				\begin{tikzpicture}[scale=0.6]
					\draw[black, very thick] (0,0) rectangle (6,1);
					\node [anchor=center] at (3, 0.5) {$Y$};
					\draw[red, very thick] (6,0) rectangle (7,1);
					\node [anchor=center] at (6.5, 0.5) {$v$};
					\draw[black, very thick] (0,1) rectangle (6,2);
					\node [anchor=center] at (3, 1.5) {$X$};
					\draw[red, very thick] (6,1) rectangle (7,2);
					\node [anchor=center] at (6.5, 1.5) {$u$};
				\end{tikzpicture}
			\end{center}
			
			Case I: $X = Y$. Recall that for a word of minimal complexity we have exactly one right special factor of length $n$ which can be extended to the right in two different ways. 
            For a word of type IIIa, after each occurrence of 1 in both words we have an occurrence of 2, and after an occurrence of 2 we have an occurrence of 0, hence our right special factor ends with $0$ and can be continued by 0 or 1, therefore $u = 0, v = 1$, which corresponds to rule 2. For a word of type IIIb, we similarly have $u = 0, v = 1$, as an occurrence of $1$ is followed by an occurrence of $2$ and an occurrence of $0$ is followed by an occurrence of $1$.
			
			Case II: $X \neq Y$. If $X$ and $Y$ are not consecutive in the lexicographic order, then there is at least one factor between $Xu$ and $Yv$, which is false, so $Y$ follows $X$ in the lexicographic array of length $n$.
			
			Case II-1: $u = v$.
			
			If $u = v = 0$, then, since $X \vdash Y$, we have that $Y$ is obtained from $X$ by applying rules 1 or 4 (neither $X$ nor $Y$ can end with 1, since each occurrence of 1 is followed by an occurrence of 2). Then we have $X0 \vdash Y0$ with the same rule.
			
			\begin{center}
				\begin{tikzpicture}[scale=0.6]
					\draw[black, very thick] (0,0) rectangle (6,1);
					\node [anchor=center] at (3, 0.5) {$Y$};
					\draw[red, very thick] (6,0) rectangle (7,1);
					\node [anchor=center] at (6.5, 0.5) {$0$};
					\draw[black, very thick] (0,1) rectangle (6,2);
					\node [anchor=center] at (3, 1.5) {$X$};
					\draw[red, very thick] (6,1) rectangle (7,2);
					\node [anchor=center] at (6.5, 1.5) {$0$};
				\end{tikzpicture}
			\end{center}
			
			If $u = v = 1$, then, since occurrences of 1's are preceded by occurrences of 0's 
            or 2's, we again can apply only rules $1$ and $4$ for $X$ and $Y$.
			
			\begin{center}
				\begin{tikzpicture}[scale=0.6]
					\draw[blue, thick] (5,0) rectangle (6,2);
					\node [anchor=center] at (5.5, 0.5) {$0/2$};
					\node [anchor=center] at (5.5, 1.5) {$0/2$};
					
					\draw[black, very thick] (0,0) rectangle (6,1);
					\node [anchor=center] at (0.5, 0.5) {$Y$};
					\draw[red, very thick] (6,0) rectangle (7,1);
					\node [anchor=center] at (6.5, 0.5) {$1$};
					\draw[black, very thick] (0,1) rectangle (6,2);
					\node [anchor=center] at (0.5, 1.5) {$X$};
					\draw[red, very thick] (6,1) rectangle (7,2);
					\node [anchor=center] at (6.5, 1.5) {$1$};
				\end{tikzpicture}
			\end{center}
			
			In the case $u = v = 2$ the proof is symmetric.
			
			\begin{center}
				\begin{tikzpicture}[scale=0.6]
					\draw[blue, thick] (5,0) rectangle (6,2);
					\node [anchor=center] at (5.5, 0.5) {$1$};
					\node [anchor=center] at (5.5, 1.5) {$1$};
					
					\draw[black, very thick] (0,0) rectangle (6,1);
					\node [anchor=center] at (0.5, 0.5) {$Y$};
					\draw[red, very thick] (6,0) rectangle (7,1);
					\node [anchor=center] at (6.5, 0.5) {$2$};
					\draw[black, very thick] (0,1) rectangle (6,2);
					\node [anchor=center] at (0.5, 1.5) {$X$};
					\draw[red, very thick] (6,1) rectangle (7,2);
					\node [anchor=center] at (6.5, 1.5) {$2$};
				\end{tikzpicture}
			\end{center}
			
			Case II-2: $u \neq v$.
			

            If $u=1$ and $v=2$, this occurrence of $1$ is preceded by $0$ or $2$ and the occurrence of $2$ is preceded by 1.
			
			\begin{center}
				\begin{tikzpicture}[scale=0.6]
					\draw[blue, thick] (5,0) rectangle (6,2);
					\node [anchor=center] at (5.5, 0.5) {$1$};
					\node [anchor=center] at (5.5, 1.5) {$0/2$};
					
					\draw[black, very thick] (0,0) rectangle (6,1);
					\node [anchor=center] at (0.5, 0.5) {$Y$};
					\draw[red, very thick] (6,0) rectangle (7,1);
					\node [anchor=center] at (6.5, 0.5) {$2$};
					\draw[black, very thick] (0,1) rectangle (6,2);
					\node [anchor=center] at (0.5, 1.5) {$X$};
					\draw[red, very thick] (6,1) rectangle (7,2);
					\node [anchor=center] at (6.5, 1.5) {$1$};
				\end{tikzpicture}
			\end{center}
			
			We have that $X \vdash Y$; the only rule that can be applied in this case is rule 2. 
            So, we have $X1 \vdash Y2$ using rule 3.
			
			If $u=2$ and $v=0$, then 2 is preceded by 1 
            , and 0 is preceded by either 0 or 2: 
			
			\begin{center}
				\begin{tikzpicture}[scale=0.6]
					\draw[blue, thick] (5,0) rectangle (6,2);
					\node [anchor=center] at (5.5, 0.5) {$0/2$};
					\node [anchor=center] at (5.5, 1.5) {$1$};
					
					\draw[black, very thick] (0,0) rectangle (6,1);
					\node [anchor=center] at (0.5, 0.5) {$Y$};
					\draw[red, very thick] (6,0) rectangle (7,1);
					\node [anchor=center] at (6.5, 0.5) {$0$};
					\draw[black, very thick] (0,1) rectangle (6,2);
					\node [anchor=center] at (0.5, 1.5) {$X$};
					\draw[red, very thick] (6,1) rectangle (7,2);
					\node [anchor=center] at (6.5, 1.5) {$2$};
				\end{tikzpicture}
			\end{center}
			
			Then $X \vdash Y$ using rule 3, and $X2 \vdash Y0$ using rule 1.
			
			We will now prove that the case $u=0$ and $v=1$ is impossible. 

            In this case, we have that $X$ and $Y$ end with 0 or 2:

            \begin{center}
				\begin{tikzpicture}[scale=0.6]
					\draw[blue, thick] (5,0) rectangle (6,2);
					\node [anchor=center] at (5.5, 1.5) {$0/2$};
					
					\node [anchor=center] at (5.5, 0.5) {$0/2$};
					
					\draw[black, very thick] (0,0) rectangle (6,1);
					\node [anchor=center] at (0.5, 0.5) {$Y$};
					\draw[red, very thick] (6,0) rectangle (7,1);
					\node [anchor=center] at (6.5, 0.5) {$1$};
					\draw[black, very thick] (0,1) rectangle (6,2);
					\node [anchor=center] at (0.5, 1.5) {$X$};
					\draw[red, very thick] (6,1) rectangle (7,2);
					\node [anchor=center] at (6.5, 1.5) {$0$};
				\end{tikzpicture}
			\end{center}

            Hence we have $X \vdash Y$ using rule 1 or 4. If this is rule 1, then $\tilde{X}$ and $\tilde{Y}$ are of the same length, but are not equal, so in the lexicographic array of $s$ between $\tilde{X}0$ and $\tilde{Y}1$ there is a factor $\tilde{Z}$ of $s$. Applying Lemma \ref{lm:expansion}, we obtain a factor of $\psi(s)$ between $X0$ and $Y1$ in the lexicographic array of $\psi(s)$. A contradiction with the initial assumption that $X0$ and $Y1$ are consecutive factors in the lexicographic array for length $n+1$.

            If this is rule $4$, then we consider a picture: 

            \begin{center}
				\begin{tikzpicture}[scale=0.6]
					\draw[blue, thick] (0,0) rectangle (1,2);
					\draw[blue, thick] (1,0) rectangle (2,2);
					\node [anchor=center] at (0.5, 1.5) {$1$};
					\node [anchor=center] at (1.5, 1.5) {$2$};
					\node [anchor=center] at (0.5, 0.5) {$2$};
					\node [anchor=center] at (1.5, 0.5) {$0$};

					\draw[black, very thick] (0,0) rectangle (6,1);
					\node [anchor=center] at (4, 0.5) {$A$};
					\draw[red, very thick] (6,0) rectangle (7,1);
					\node [anchor=center] at (6.5, 0.5) {$1$};
					\draw[black, very thick] (0,1) rectangle (6,2);
					\node [anchor=center] at (4, 1.5) {$A$};
					\draw[red, very thick] (6,1) rectangle (7,2);
					\node [anchor=center] at (6.5, 1.5) {$0$};
				\end{tikzpicture}
			\end{center}

            The factor $\tilde{A}$ of the word $s$ is right special, so, as Sturmian words have exactly one right special factor of each length, we have that either $0\tilde A$ or $1 \tilde A$ is right special as well.
            
            In the first case we have  that $0\tilde A 0$ is a factor of $s$ and $0A0$ is a factor of $\psi(s)$. For a word of type IIIa, we can extend both of $X0$ and $Y1$ to the left (extensions are uniquely defined) and obtain $012A0$ and $120A1$. Their corresponding factors are $01\tilde A 0$ and $10\tilde A 1$, respectively. These factors are not consecutive in the lexicographic array of the word $s$, so we have a factor in between. By Lemma \ref{lm:expansion} we can obtain a factor $aZ$ (where $a$ is a letter and $Z$ is a factor) between $0X0$ and $1Y1$ in the lexicographic array of $\psi(s)$. If $a = 0$, we have $X0 < Z$, and since $Y$ starts with 2 and $Z$ does not, we have $Z < Y1$. 
            If $a = 1$, we have $Z < Y1$ and since $Z$ starts with 2 and $X$ does not, we have $X0 < Z$. 
            For a word of type IIIb, the factor $0A0$ is extended to the left in a unique way by 2, so we obtain a factor $20A0$. In each of the cases we obtain a contradiction with the initial assumption that $X0$ and $Y1$ are consecutive factors in the lexicographic array for length $n+1$. 
            
            In the case if $1 \tilde A$ is a right special factor, we have that $1 \tilde A 1$ is a factor of $s$ and hence $12A1$ is a factor of $\psi(s)$; this again contradicts the initial assumption that $X0$ and $Y1$ are consecutive factors in the lexicographic array for length $n+1$.

			Now consider the case $u=0$, $v=2$. For words of type IIIb it means that $X$ ends with $2$ and $Y$ ends with $1$, and we do not have a rule for $X \vdash Y$. So, we only need to consider words of type IIIa in this case. For a word of type IIIa the factor $Y$ must end with $01$, and the word $X$ cannot end with $1$. So, we can only apply rule 2 for $X\vdash Y$. Then $X$ ends with $00$. Consider the extensions of these factors to the right:
			
			\begin{center}
				\begin{tikzpicture}[scale=0.6]
					\draw[blue, thick] (4,0) rectangle (5,2);
					\draw[blue, thick] (5,0) rectangle (6,2);
					\node [anchor=center] at (4.5, 0.5) {$0$};
					\node [anchor=center] at (4.5, 1.5) {$0$};
					\node [anchor=center] at (5.5, 0.5) {$1$};
					\node [anchor=center] at (5.5, 1.5) {$0$};
					
					\draw[black, very thick] (0,0) rectangle (6,1);
					\node [anchor=center] at (0.5, 0.5) {$Y$};
					\draw[red, very thick] (6,0) rectangle (7,1);
					\node [anchor=center] at (6.5, 0.5) {$2$};
					\draw[black, very thick] (0,1) rectangle (6,2);
					\node [anchor=center] at (0.5, 1.5) {$X$};
					\draw[red, very thick] (6,1) rectangle (7,2);
					\node [anchor=center] at (6.5, 1.5) {$0$};
					
					\draw[green, very thick] (7,0) rectangle (8,1);
					\node [anchor=center] at (7.5, 0.5) {$0$};
					\draw[green, very thick] (7,1) rectangle (8,2);
					\node [anchor=center] at (7.5, 1.5) {$?$};
				\end{tikzpicture}
			\end{center}
			
			Note that there are two corresponding factors of the same length: $\tilde{A}000$ and $\tilde{A}010$, where $A$ is a prefix of $X$ and $Y$ defined by $X = A00, Y = A01$. Factors $\tilde{A}000$ and $\tilde{A}010$ are not consecutive in the lexicographic array of a Sturmian word, so there is a factor between them, which is $\tilde{A}001$. This factor is a corresponding factor to $X1$; a contradiction with the fact that $X0$ and $Y2$ are consecutive factors in the lexicographic array for length $n+1$.
			
			Now consider the case $u=1$, $v=0$. Then $X$ and $Y$ end either with $0$ or with $2$. 
			
			\begin{center}
				\begin{tikzpicture}[scale=0.6]
					\draw[blue, thick] (5,0) rectangle (6,2);
					\node [anchor=center] at (5.5, 0.5) {$0/2$};
					\node [anchor=center] at (5.5, 1.5) {$0/2$};
					
					\draw[black, very thick] (0,0) rectangle (6,1);
					\node [anchor=center] at (0.5, 0.5) {$Y$};
					\draw[red, very thick] (6,0) rectangle (7,1);
					\node [anchor=center] at (6.5, 0.5) {$0$};
					\draw[black, very thick] (0,1) rectangle (6,2);
					\node [anchor=center] at (0.5, 1.5) {$X$};
					\draw[red, very thick] (6,1) rectangle (7,2);
					\node [anchor=center] at (6.5, 1.5) {$1$};
				\end{tikzpicture}
			\end{center}
			
			We have $X \vdash Y$ using rule 1 or rule 4. If we have $X \vdash Y$ using rule 1, then $\tilde{X}1$ and $\tilde{Y}0$ are of the same length, and $\tilde{X}1$ is smaller in the lexicographic order. However, it contains more 1's; a contradiction. 		If we have $X \vdash Y$ using rule 4, then consider the following picture: 
            			
			\begin{center}
				\begin{tikzpicture}[scale=0.6]
					\draw[blue, thick] (0,0) rectangle (1,2);
					\draw[blue, thick] (1,0) rectangle (2,2);
					\node [anchor=center] at (0.5, 1.5) {$1$};
					\node [anchor=center] at (1.5, 1.5) {$2$};
					\node [anchor=center] at (0.5, 0.5) {$2$};
					\node [anchor=center] at (1.5, 0.5) {$0$};
					
					\draw[black, very thick] (0,0) rectangle (6,1);
					\node [anchor=center] at (4, 0.5) {$A$};
					\draw[red, very thick] (6,0) rectangle (7,1);
					\node [anchor=center] at (6.5, 0.5) {$0$};
					\draw[black, very thick] (0,1) rectangle (6,2);
					\node [anchor=center] at (4, 1.5) {$A$};
					\draw[red, very thick] (6,1) rectangle (7,2);
					\node [anchor=center] at (6.5, 1.5) {$1$};
				\end{tikzpicture}
			\end{center}
			The corresponding factors for $X1$ and $Y0$ are $1\tilde{A}1$ and $10\tilde{A}0$, respectively. So,  $1\tilde{A}1$ and $0\tilde{A}0$ are factors of a balanced word $s$, which is impossible. 			
			
			Now we consider the case $u=2$ and $v=1$. Then $X$ ends with $1$ and $Y$ ends with 0 or 2.
			
			\begin{center}
				\begin{tikzpicture}[scale=0.6]
					\draw[blue, thick] (5,0) rectangle (6,2);
					\node [anchor=center] at (5.5, 0.5) {$0/2$};
					\node [anchor=center] at (5.5, 1.5) {$1$};
					
					\draw[black, very thick] (0,0) rectangle (6,1);
					\node [anchor=center] at (0.5, 0.5) {$Y$};
					\draw[red, very thick] (6,0) rectangle (7,1);
					\node [anchor=center] at (6.5, 0.5) {$1$};
					\draw[black, very thick] (0,1) rectangle (6,2);
					\node [anchor=center] at (0.5, 1.5) {$X$};
					\draw[red, very thick] (6,1) rectangle (7,2);
					\node [anchor=center] at (6.5, 1.5) {$2$};
				\end{tikzpicture}
			\end{center}
            
            We have $X \vdash Y$ using rule 3, i.e., $X = A01$, $Y=A12$. Then the corresponding factors for $X2$ and $Y1$ are $\tilde A 01$ and $\tilde A 11$, respectively. They are not consecutive factors in the lexicographic array of the word $s$, so we can apply Lemma \ref{lm:expansion} to a factor between them and obtain a factor between $X2$ and $Y1$. A contradiction with the assumption that $X0$ and $Y1$ are consecutive factors in the lexicographic array for length $n+1$.
            
			Here we provide the tree for the case study for the proof of the theorem:

\[
\scalebox{0.7}{
\begin{tikzcd}[ampersand replacement=\&]
	\&\&\&\& \bullet \&\&\&\& \\
	\&\& {X = Y} \&\&\&\& {X\neq Y} \\
	\&\&\& {u\neq v} \&\&\&\& {u=v} \\
	{1\rightarrow2} \& {2\rightarrow0} \& {0\rightarrow1} \& {0\rightarrow2} \& {1\rightarrow0} \& {2\rightarrow1} \& 0 \& 1 \& 2 \\
	\& {\textrm{Rule 1}} \& {\textrm{Rule 4}} \&\& {\textrm{Rule 1}} \& {\textrm{Rule 4}}
	\arrow[from=1-5, to=2-3]
	\arrow[from=1-5, to=2-7]
	\arrow[from=2-7, to=3-4]
	\arrow[from=2-7, to=3-8]
	\arrow[from=3-4, to=4-1]
	\arrow[from=3-4, to=4-2]
	\arrow[from=3-4, to=4-3]
	\arrow[from=3-4, to=4-4]
	\arrow[from=3-4, to=4-5]
	\arrow[from=3-4, to=4-6]
	\arrow[from=3-8, to=4-7]
	\arrow[from=3-8, to=4-8]
	\arrow[from=3-8, to=4-9]
	\arrow[from=4-3, to=5-2]
	\arrow[from=4-3, to=5-3]
	\arrow[from=4-5, to=5-5]
	\arrow[from=4-5, to=5-6]
\end{tikzcd}
}
\]

			
		\end{proof}

    
We now prove the following lemma about the structure of the last column of the lexicographic array of a word of Type III:
    
	\begin{lemma} \label{lm:four}


        The last column of the lexicographic array of a ternary minimal complexity word of type III of each length has one of the following forms: $0^+1^+2^+0^+$, $1^+2^+0^+1^+$ or $2^+0^+1^+2^+$.
    \end{lemma}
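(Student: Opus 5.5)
We will read off the last column directly from the scheme of Theorem~\ref{th:scheme}. For length $n\geqslant 2$, consecutive factors in the lexicographic array differ by exactly one of the four rules. We record how each rule acts on the last letter.
\begin{itemize}
\item Rule 1 ($A012B\to A120B$) with $B$ nonempty leaves the last letter unchanged. With $B$ empty it turns the last letter $2$ into $0$.
\item Rule 2 ($A0\to A1$) turns $0$ into $1$.
\item Rule 3 ($A01\to A12$) turns $1$ into $2$.
\item Rule 4 ($12B\to 20B$) leaves the last letter unchanged for $n\geqslant 3$.
\end{itemize}
So, reading the last column top to bottom, the only possible changes of letter are $0\to1$, $1\to2$ and $2\to0$. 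These are cyclic successor steps in the cyclic order $0\to1\to2\to0$.

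Next we bound the number of changes of each kind. By the second statement of Theorem~\ref{th:scheme}, Rules 2 and 3 are each applied exactly once for each length. Hence the column contains exactly one change $0\to1$ and exactly one change $1\to2$. A change $2\to 0$ comes only from Rule 1 applied to the suffix $012$, that is, $X=A012\vdash Y=A120$.

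We claim this happens at most once. The argument is that the letter sequence in the column always moves forward cyclically. Between two changes $2\to0$ it would have to pass through $0\to1\to2$ again, which would need Rule 2 a second time. So any sequence of cyclic steps containing exactly one $0\to1$ and one $1\to2$ contains at most two $2\to0$ steps. Moreover, two of them occur only if the column starts with a block of $2$'s, goes $2\to0\to1\to2\to0$, and ends with $0$'s. That would be the pattern $2^+0^+1^+2^+0^+$, which has five blocks.

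The main obstacle is excluding this five-block pattern, and also confirming that each block in the three listed forms is nonempty. The plan is as follows.

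First, the first and last rows of the array are respectively the lexicographically smallest and largest factors. The column must contain $0$, $1$ and $2$, since each letter ends some factor, and it has exactly one $0\to1$ and one $1\to2$ step. So it suffices to show that the number of $2\to0$ steps equals exactly $1$ when the column starts and ends with different letters. When it starts and ends with the same letter, the count is forced by the cyclic steps.

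Second, we count the $2\to0$ steps directly, using that the last letter changes once per cycle. The letters of a column ending in $c$ are counted with a cyclic walk. The total number of steps in a cyclic walk is $\equiv c_{\text{last}}-c_{\text{first}}\pmod 3$. With exactly one $0\to1$ and one $1\to2$ step, this gives exactly one $2\to0$ step unless first and last letters are $2$ and $0$. In that case there are $0$ or $2$ such steps, so that case must be ruled out.

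Third, to rule it out we use the first letter. For type IIIa the smallest factor starts with $0$ and the largest with $2$. A factor ending in $2$ ends in $12$. We aim to show that the smallest factor cannot end in $12$ unless the array is small; the $n=2,3$ cases listed in the proof of Theorem~\ref{th:scheme} serve as a check. Failing that, we argue via the corresponding Sturmian factors and the lexicographic array of $s$ (Subsection~\ref{subsec:lex_array}): the minimal factor of $s$ is a prefix-type word $0\cdots$ whose image ends in $0$. A symmetric argument for the maximal factor excludes ending with $0$ when starting with $2$. This leaves exactly the three forms $0^+1^+2^+0^+$, $1^+2^+0^+1^+$ and $2^+0^+1^+2^+$.
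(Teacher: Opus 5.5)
Your Steps 1 and 2 are sound. They are exactly the ingredient the paper uses in its induction step: by Theorem~\ref{th:scheme} the last letter changes only by $0\to1$, $1\to2$ or $2\to0$, and the first two changes occur exactly once. This leaves five candidates: $0^+1^+2^+$, the three forms of the lemma, and $2^+0^+1^+2^+0^+$. So the lemma amounts to showing that the first and last rows end with the same letter. A first letter $0$ with last letter $2$ forces zero $2\to0$ steps, and a first letter $2$ with last letter $0$ forces two. Your request for ``exactly one $2\to0$ step when the column starts and ends with different letters'' is self-contradictory.

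This remaining step is what is missing. Step~3 is only a plan (``we aim to show \dots; failing that \dots''), and the claim it rests on is false: the smallest factor can end in $12$ for large $n$. For $\psi$ applied to the Fibonacci word (type IIIa), the array of length $4$ is $0012, 0120, 1200, 1201, 2001, 2012$. For type IIIb at $n=3$ the smallest factor is $012$. In fact every length at which the column has the legitimate form $2^+0^+1^+2^+$ is a counterexample, so your claim would contradict the lemma itself. Likewise, the image of the smallest Sturmian factor need not end in $0$: here $0012=\psi(001)$. The case $0^+1^+2^+$ is also never addressed.

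What you need is an argument that controls both ends of the column at once. The paper uses induction on $n$ with base $n=3$. The array of length $n+1$ is obtained by extending each row of the array of length $n$ by one letter, keeping the order. A row ends in $2$ exactly when its length-$n$ prefix ends in $1$; otherwise it ends in $0$ or $1$. Combining this with the nonemptiness of the four blocks at length $n$, the allowed transitions and the single $0\to1$ step:
\begin{itemize}
\item $0120$ becomes $0120$ or $1201$;
\item $1201$ becomes $2012$;
\item $2012$ becomes $0120$ or $1201$.
\end{itemize}

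In your framework the same observation closes the gap more directly. The first and last rows of length $n+1$ extend those of length $n$. So if the latter share the last letter $c$, then both new endpoints equal $2$ (when $c=1$) or neither does (when $c\in\{0,2\}$). Each of the two bad candidates has exactly one endpoint equal to $2$, so both are excluded. The remaining candidates again have equal endpoints, so the induction propagates. It starts at $n=2$: the arrays are $00,01,12,20$ for type IIIa and $01,12,20,21$ for type IIIb.
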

		
    \begin{proof} 
			
			The proof is by induction on the length $n$ of the lexicographic array.
			
			Base case is $n = 3$; by a straightforward case study we get that the last column is either $12001$ or $01200$ for words of type IIIa and $20112$ for words of type IIIb.

			For brevity, we will denote a block of the same letter with this letter (\emph{reduced column}).
			
			Suppose that the statement holds for length $n$, i.e., the reduced last column is $0120$, $1201$ or $2012$. 
            We are going to prove that the reduced last column for length $n+1$ is also of this form.
			
			First suppose that the reduced column is of the form $0120$. Consider the extension of  all factors by one letter on the right. Note that 1 must be followed by 2, 2 and 0 must be followed by either 0 or 1. By Theorem \ref{th:scheme}, if we have in the lexicographic array two consecutive factors $X\vdash Y$ and they do not end with the same letter, then $X$ ends with $1$ (resp., $2$, $0$) and $Y$ ends with $2$ (resp., $0$, $1$). In addition, Theorem \ref{th:scheme} implies that any for any pair of two consecutive factors of the form $X0\vdash Y1$ we have that $X=Y$ is a right special factor. So, since we have exactly one right special factor of each length $n$ (extended by $0$ and $1$), the reduced last column has the form either $0120$, or $1201$, according to the location of the right special factor. 
			
			In the remaining cases the proof is symmetric. In the case when the reduced last column for length $n$ is $1201$, for length $n+1$ we obtain 
            $2012$. 
			In the case  when the reduced last column for length $n$ is $2012$, for length $n+1$ we obtain $0120$ or $1201$, according to the location of the right special factor.
	\end{proof}

Since each column of a lexicographic array can be considered as the last column of a smaller array (with some factors repeated), we have the following corollary:
    	
	\begin{corollary} \label{cor:each_col}
		Each column of the lexicographic array of a ternary minimal complexity word of type III 
        of each length 
        has one of the following forms: $0^+1^+2^+0^+$, $1^+2^+0^+1^+$ or $2^+0^+1^+2^+$.
    \end{corollary}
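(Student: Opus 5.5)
The plan is to reduce Corollary~\ref{cor:each_col} to Lemma~\ref{lm:four} by viewing the $j$-th column of the lexicographic array of length $n$ as the last column of the lexicographic array of length $j$, with some rows repeated.

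Fix $n$ and a column index $j\le n$. For every factor $v$ of length $n$, its prefix $v_1\cdots v_j$ is a factor of length $j$. Every factor of length $j$ occurs as such a prefix, since in a recurrent (indeed uniformly recurrent) word every factor extends to the right. The key observation is that the map sending a factor to its length-$j$ prefix is monotone, in the weak sense, for the lexicographic order: if $v<v'$ and both have length $n$, then $v_1\cdots v_j\le v'_1\cdots v'_j$. This is immediate from the definition of lexicographic order, since the first differing position either lies inside the prefix or after it. Consequently, the sequence of length-$j$ prefixes read down the array of length $n$ is exactly the lexicographic array of length $j$, with each entry repeated one or more times. Each entry appears at least once by surjectivity, and equal prefixes occupy a contiguous block by monotonicity.

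Reading the $j$-th column of the length-$n$ array is therefore the same as reading the last column of the length-$j$ array, with each entry repeated a positive number of times. Repeating letters does not change the reduced (block) form, so the shapes $0^+1^+2^+0^+$, $1^+2^+0^+1^+$, $2^+0^+1^+2^+$ are preserved. For $j\ge 3$ the claim then follows from Lemma~\ref{lm:four}.

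The only real obstacle is the small columns $j=1,2$, where Lemma~\ref{lm:four} was stated from the base case $n=3$ onward. These I would handle by direct inspection of the length-2 arrays already computed in the proof of Theorem~\ref{th:scheme}:
\begin{itemize}
\item for type IIIa, the array $00,01,12,20$ has last column $0120$ and first column $0012$;
\item for type IIIb, the array $01,12,20,21$ has last column $1201$ and first column $0122$.
\end{itemize}
The first-column cases are exactly where the stated shapes are in danger, because the first column has only the blocks $0,1,2$ in increasing order, and a pattern $0^+1^+2^+$ lacks the fourth block. So I would check carefully whether a fourth block is really required by the intended reading (e.g.\ allowing the final block to be empty), or whether the corollary should be restricted to columns $j\ge 2$. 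Under that reading the statement follows for all columns.
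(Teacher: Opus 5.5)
Your argument is correct and is the same as the paper's. The paper's entire justification is the one-line remark that each column is the last column of a shorter lexicographic array with some rows repeated, and your monotonicity and surjectivity argument for the length-$j$ prefix map is exactly what makes that remark rigorous. Your worry about small columns is justified and can be settled outright. Column $j=2$ is fine by the length-$2$ arrays you list. The first column, however, is always $0^+1^+2^+$, since the array is sorted and all three letters begin factors, so the statement literally holds only for columns $j\geq 2$ unless column~$1$ is read cyclically (the same failure affects Lemma~\ref{lm:four} at $n=1$). This is a defect in the statement that the paper's one-line proof overlooks, not a gap in your proof.
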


    We recall that the subgroup $S_{[k, n]}$ of $S_n$ consists of all permutations of all elements from $k$ to $n$.
	
	\begin{lemma} \label{lm:five}
		The group complexity of a ternary minimal complexity word of type III of length $n$ relatively to  the subgroup $S_{[2, n]}$ is at least $5$. \end{lemma}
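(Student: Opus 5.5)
The idea is to count classes as pairs: two factors are $S_{[2,n]}$-equivalent if and only if they have the same first letter and their suffixes of length $n-1$ are abelian equivalent. So $p^{S_{[2,n]}}(n)$ is the number of distinct pairs $(a,[v]_{ab})$ such that $av$ is a factor of length $n$. The bound $5$ will come from combining two facts: the first column contains all three letters, which gives $3$ pairs, and the scheme of Theorem~\ref{th:scheme} forces two additional splits inside these first-letter classes.

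For the first contribution, apply Corollary~\ref{cor:each_col} to the first column of the lexicographic array of length $n$. It has one of the forms $0^+1^+2^+0^+$, $1^+2^+0^+1^+$ or $2^+0^+1^+2^+$. In each form all three letters occur, so there are at least three distinct first letters and at least $3$ classes.

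For the second contribution, I treat small lengths ($n\le 3$, say) by direct inspection of the explicit arrays listed in the proof of Theorem~\ref{th:scheme}. For $n\geq 3$ I use the scheme itself. Each pair of consecutive factors in the array is related by one of the rules 1--4, and by Theorem~\ref{th:scheme} rules 2 and 3 are each applied exactly once for length $n$.
\begin{itemize}
\item Rule 2 ($A0\to A1$) acts only on the last position. Rule 3 ($A01\to A12$) acts only on the last two positions. For $n\ge 3$ neither touches position $1$, so each relates two factors with the same first letter.
\item Neither rule preserves abelian content. Rule 2 changes the content of the length-$(n-1)$ suffix by $+e_1-e_0$. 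Rule 3 changes it by $+e_2-e_0$.
\end{itemize}
Hence each application produces, for a single first letter, two factors lying in different $S_{[2,n]}$-classes.

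It then remains to combine the two splits. If rules 2 and 3 occur among factors with different first letters $a\ne b$, then $a$ and $b$ each carry at least two suffix contents, while the third letter carries at least one; this gives at least $2+2+1=5$ classes. If both occur with the same first letter $a$, I need to check that they produce at least three distinct suffix contents for $a$, which gives $3+1+1=5$. The two rules yield pairs $\{c,\,c+e_1-e_0\}$ and $\{c',\,c'+e_2-e_0\}$. Since the vectors $e_1-e_0$ and $e_2-e_0$ are distinct and nonzero, these two $2$-element sets cannot coincide, so their union has at least $3$ elements.

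The step I expect to need care is the claim that rules 2 and 3 really keep the first letter fixed, including the boundary cases where $A$ is short. For $n\ge 3$ this holds because both rules act only on positions $n-1$ and $n$. The genuinely delicate cases are therefore the small lengths, which I check directly from the base-case arrays in the proof of Theorem~\ref{th:scheme}.
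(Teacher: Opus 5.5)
Your argument is essentially the paper's. Both get three classes from the first letter, plus one extra split from each of the unique applications of rules 2 and 3. For $n\ge 3$ these rules fix position $1$ and change the content of the suffix by $e_1-e_0$ and $e_2-e_0$ respectively. The only difference is how the two splits are combined. The paper observes that, along the rows with a fixed first letter, the number of $1$'s changes only at rule 2 and the number of $2$'s only at rule 3, so the classes are runs of consecutive rows. You instead count distinct pairs $(a,[v]_{ab})$, which is enough for the lower bound.

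Three small points need repair.

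\emph{The set-coincidence step.} Two sets $\{c,c+d_1\}$ and $\{c',c'+d_2\}$ coincide exactly when $d_1=d_2$ (with $c=c'$) or $d_1=-d_2$ (with $c'=c+d_1$). So ``distinct and nonzero'' is not enough; you must also note that $e_1-e_0\neq e_0-e_2$.

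\emph{The first column.} The first column of a lexicographic array is sorted, i.e.\ it is $0^+1^+2^+$. That is not among the forms in Corollary~\ref{cor:each_col}, so the corollary cannot be invoked there. The fact you need is trivial anyway: every occurrence of a letter in $x$ begins a factor of length $n$, so all three letters appear as first letters.

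\emph{Small lengths.} These are not delicate but genuinely exceptional. For $n\le 2$ there are only $n+2\le 4$ factors (e.g.\ $00,01,12,20$ for $n=2$ in type IIIa), so no inspection can produce $5$ classes. The lemma must be read for $n\ge 3$, which is how it is used in Proposition~\ref{pr:no-four}. For $n=3$ your general argument already applies, since rules 2 and 3 act only on positions $2$ and $3$, so no separate check is needed.
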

		\begin{proof}
			We will show that the lexicographic array can be split into five classes of lexicographically consecutive factors such that factors from different classes are not group equivalent.
         		
			First we note that splitting into three classes is straightforward: words starting with different letters cannot be in the same group equivalence class for the group $S_{[2,n]}$.

            By Theorem \ref{th:scheme}, in the lexicographic array of length $n$ each of the rules 2--4 is applied exactly once.  
                    
			Now, consider the class containing two factors that have the (unique) right special factor $A$ 
            of length $(n-1)$ as a prefix, i.e., factors $A0$ and $A1$. These two factors are not abelian equivalent (since they contain different numbers of occurrences of 1's), so this class is further split into two classes that are not group equivalent. By Theorem \ref{th:scheme}, the number of 1's is changed only when applying rules 2 and 4. Here we apply rule 2, and rule 4 changes the first letter. So, the factors which are lexicographically smaller than $A0$ and start with the same letter as $A$ are not abelian equivalent to those which are lexicographically bigger than $A1$ and start with the same letter as $A$.
            
			Finally, when we apply rule 3 to a factor, the obtained word is not equivalent to initial factor, so this class is also split into two classes (with different numbers of occurrences of 2's). 
            With a symmetric argument as above (with rule 3 instead rules 2, 4 and number of 2's instead of number of 1's) we have that factors which are lexicographically smaller than $A01$ and start with the same letter as $A$ are not abelian equivalent to those which are lexicographically bigger than $A12$ and start with the same letter as $A$. 
	\end{proof}


    \begin{example}\label{ex:5psif1} As an example for Lemma \ref{lm:five}, consider the word $\psi(f)$, where $f$ is the Fibonacci word and $\psi\colon \begin{cases}
			0 \mapsto 0\\
			1 \mapsto 12
		\end{cases}.$ Its lexicographic array of factors of length $5$ is split into five classes; we illustrate the borders between the classes with horizontal lines:
            \begin{center}
				\begin{tabular}{lllll}
					0 & 0 & 1 & 2 & 0 \\ 
					0 & 1 & 2 & 0 & 0 \\ \hline
					0 & 1 & 2 & 0 & 1 \\ \hline
					1 & 2 & 0 & 0 & 1 \\ \hline
					1 & 2 & 0 & 1 & 2 \\ \hline
					2 & 0 & 0 & 1 & 2 \\ 
					2 & 0 & 1 & 2 & 0 \\ 
				\end{tabular}
			\end{center}
\end{example}
    
	\begin{proposition}\label{pr:no-four} For each length $n$ and for each integer $k$, $2\leqslant k \leqslant n - 1$, the group complexity of a ternary minimal complexity word of type III relatively to the subgroup $S_{[k, n]}$ is equal to $k + 3$. In particular, the values $5, \dots, n + 2$ are achieved by group complexity. \end{proposition}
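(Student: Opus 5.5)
We argue by induction on $k$, following the proof of Theorem~\ref{th:sturmian} but using the scheme of Theorem~\ref{th:scheme} in place of the two Sturmian rules. The base case $k=2$ needs two inequalities. The lower bound $p^{S_{[2,n]}}(n)\geqslant 5$ is Lemma~\ref{lm:five}. For the upper bound we check that the five blocks of lexicographically consecutive factors exhibited there are each contained in a single $S_{[2,n]}$-class. Two factors with the same first letter that are abelian equivalent are $S_{[2,n]}$-equivalent. So it suffices that inside each block the first letter is constant and no step changes the content. Within a block, consecutive factors are related by rule 1. Rules 2 and 3 are exactly the splitting points, and rule 4 changes the first letter. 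Rule 1 preserves content, so each block lies in one class and $p^{S_{[2,n]}}(n)=5$.

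For the inductive step from $k$ to $k+1$ with $k+1\leqslant n-1$, note that $S_{[k+1,n]}\leqslant S_{[k,n]}$. Hence the partition for $S_{[k+1,n]}$ refines the one for $S_{[k,n]}$. We maintain the invariant that every class for $S_{[k,n]}$ consists of consecutive factors of the lexicographic array. Two factors are equivalent under $S_{[k,n]}$ exactly when their prefixes of length $k-1$ coincide and their suffixes from position $k$ on are abelian equivalent.

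Now take consecutive $v\vdash v'$ in the same $S_{[k,n]}$-class. They share the prefix of length $k-1$. Since Corollary~\ref{cor:each_col} makes each column monotone in blocks, the invariant is preserved. The pair $v,v'$ is separated by $S_{[k+1,n]}$ exactly when $v_k\neq v'_k$. By the scheme, a step inside a class that changes position $k$ but not positions $1,\dots,k-1$, and preserves content, must be an application of rule 1 whose window $012\to120$ starts at position $k$. Rules 2 and 3 change content, and rule 4 changes position 1, so none of them occurs inside a class when $k\geqslant 2$.

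It remains to show that for each $k$ with $2\leqslant k\leqslant n-2$ there is exactly one rule-1 step with window starting at position $k$ in the length-$n$ array. Then exactly one class splits and the complexity increases by exactly one.

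This counting claim is the main obstacle. My first approach is to transfer it to Sturmian words via corresponding factors. A rule-1 step $A012B\to A120B$ is the image of a Sturmian step $\tilde A01\tilde B\to\tilde A10\tilde B$, and in the Sturmian array each position carries exactly one $01\to10$ swap, as noted in the proof of Theorem~\ref{th:sturmian}. The difficulty is the misalignment of positions under $\psi$, because a $1$ becomes $12$.

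As an alternative, I would count directly. The array has $n+2$ factors, so there are $n+1$ steps. Rules 2, 3 and 4 take exactly three of them by Theorem~\ref{th:scheme}, which leaves $n-2$ rule-1 steps. The possible starting positions of a rule-1 window are $1,\dots,n-2$. Injectivity would then give a bijection between rule-1 steps and starting positions, including position 1, which does not affect the classes for $k\geqslant2$. To prove injectivity, I would show by induction on $n$ (as in Lemma~\ref{lm:four}) that two rule-1 steps at the same position would force two distinct right special factors, or two left special factors, of the same length. This contradicts minimal complexity.

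Granting the claim, the complexity is $5+(k-2)=k+3$ for $S_{[k,n]}$ with $2\leqslant k\leqslant n-1$. This realizes every value from $5$ to $n+2$.
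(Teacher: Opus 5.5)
\textbf{There is a genuine gap.} Your reduction is sound. The $S_{[k,n]}$-classes are lexicographically consecutive blocks, and passing to $S_{[k+1,n]}$ splits a class exactly at the rule-1 steps whose window $012\to120$ starts at position $k$. So the complexity increases by precisely the number of such steps.

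One small correction: consecutiveness of the refined classes comes from lexicographic order itself. Inside a block sharing a prefix of length $k-1$, column $k$ is nondecreasing. Corollary~\ref{cor:each_col} is not what is needed, and its column patterns are not monotone.

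The gap is that the proof stops exactly where the content of the proposition lies. You explicitly grant that for each $2\le k\le n-2$ there is exactly one such step. Neither sketched route is carried out: not the transfer to $s$ (which you note is misaligned), nor the induction on $n$ producing two special factors. As written, the inductive step only yields $p^{S_{[k+1,n]}}(n)\ge p^{S_{[k,n]}}(n)$, so the value $k+3$ is not established.

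The missing argument is short and needs no induction; your hunch about right special factors is the right one.
\begin{itemize}
\item Uniqueness: suppose $P012B\vdash P120B$ with $|P|=k-1$. Because the two factors are consecutive, $P012B$ is the largest factor with prefix $P0$ and $P120B$ is the smallest with prefix $P1$. So $P$ is right special, hence it is \emph{the} right special factor of length $k-1$. The step is therefore determined by it.
\item Existence: take the right special factor $A$ of length $k-1$. The largest factor with prefix $A0$ and the smallest with prefix $A1$ are consecutive, and they first differ at position $k$ with $2\le k\le n-2$. By Theorem~\ref{th:scheme} they are related by rule 1 with window at $k$.
\end{itemize}
With your own count of $n-2$ rule-1 steps for $n-2$ window positions, existence at every position would already force uniqueness by pigeonhole. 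Position $1$ is covered by the passage from first letter $0$ to first letter $1$.

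The paper uses only the existence half. It obtains $p^{S_{[k+1,n]}}(n)>p^{S_{[k,n]}}(n)$ and then squeezes this strictly increasing sequence between $p^{S_{[2,n]}}(n)\ge 5$ (Lemma~\ref{lm:five}) and $p^{S_{[n-1,n]}}(n)\le p(n)=n+2$. These are $n-2$ distinct values in the $(n-2)$-element set $\{5,\dots,n+2\}$, so each equals $k+3$. That route also makes your exact base-case computation $p^{S_{[2,n]}}(n)=5$ unnecessary.
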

		\begin{proof}
			We will prove that for the group $S_{[k+1, n]}$ the complexity is bigger than for the group  $S_{[k, n]}$, for $2 \leqslant k \leqslant n - 2$.
			
			We let $A$ denote the (unique) right special factor of length  $k - 1$. This factor is extended to the right either as $A0\cdots$ or as $A1\cdots$. 
            
            We let $X$ (resp., $Y$) denote the lexicographically largest (resp., smallest) factor beginning with $A0$ (resp., $A1$). 
            In the lexicographic array $X$  is followed by $Y$, hence $X \vdash Y$. We have $k - 1 \leqslant n - 3$, i.e., in $X$ after the prefix $A$ there are at least three symbols, so rule 1 ($A012B \vdash A120B$) is applied. So, these factors are in the same group complexity class for $S_{[k, n]}$ and in different classes for $S_{[k+1, n]}$, as they have different prefixes of length $k$.
								
			By Lemma \ref{lm:five}, for $S_{[2, n]}$ the complexity is at least $5$; for $S_{[k, n]}$ it is at least $k + 3$ for $k\leqslant n - 1$. All values are distinct for distinct $k$, and the upper bound is given by factor complexity $n + 2$, which corresponds to the group complexity for $k = n - 1$. Thus the complexity for $S_{[k, n]}$ is equal to $k + 3$ for each $k$ between $2$ and $n - 1$.
	\end{proof}

	\begin{lemma}\label{lm:abelian}
		The abelian complexity of a ternary minimal complexity word 
        of type III is equal to 3 or 4. \end{lemma}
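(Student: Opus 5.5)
The plan is to write $x=\psi(s)$ with $s$ Sturmian and $\psi\colon 0\mapsto 0,\ 1\mapsto 12$, and to show separately that the abelian complexity is at least $3$ and at most $4$.

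\textbf{Lower bound.} Since $x$ has minimal complexity, its unique right special factor $A$ of length $n-1$ extends to both $A0$ and $A1$. These two factors have different numbers of $0$'s, so they are not abelian equivalent. Next, consider the pair $A'01\vdash A'12$ given by Rule~3 of Theorem~\ref{th:scheme}. These two factors have different numbers of $2$'s. Moreover, the number of $2$'s in $A'01$ equals the number in $A0$ (both equal $|A'0|_2$, up to the obvious identification of prefixes). This is not quite immediate, so I will argue more simply. Within each factor, every $1$ is followed by a $2$ except possibly a final $1$, and every $2$ is preceded by a $1$ except possibly an initial $2$. Hence a factor of length $n$ is determined, up to abelian equivalence, by three data: the number of $1$'s in its corresponding Sturmian factor, whether it starts with $2$, and whether it ends with $1$. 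This gives $|v|_2 - |v|_1 \in \{-1,0,1\}$. Among the factors we have $A1$, which ends with $1$, and factors starting with $2$ (Rule~4 guarantees some exist). Both $|v|_2-|v|_1=-1$ and $|v|_2-|v|_1\ge 0$ occur, and $A0$, $A1$ differ in $|v|_0$. Together these produce three distinct abelian classes.

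\textbf{Upper bound.} The abelian class of $v$ is determined by the pair $(|v|_1,|v|_2)$, since $|v|_0=n-|v|_1-|v|_2$. Write $v$ as $\varepsilon\,\psi(u)\,\delta$, where $\varepsilon\in\{\emptyset,2\}$, $\delta\in\{\emptyset,1\}$, and $u$ is a factor of $s$. Then
\[
|v|_1=|u|_1+[\delta=1],\qquad |v|_2=|u|_1+[\varepsilon=2],\qquad n=|u|+|u|_1+[\varepsilon=2]+[\delta=1].
\]
So $d:=|v|_2-|v|_1\in\{-1,0,1\}$, and $|v|_1$ together with $d$ determines the class. Fix $d$; then $|u|+|u|_1$ is fixed. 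The balance of $s$ lets $|u|_1$ take only two adjacent values for each length, and $|u|+|u|_1$ is strictly increasing in the appropriate sense. From this I will show that the pairs $(|u|, |u|_1)$ with $|u|+|u|_1=N$ fixed give at most two values of $|v|_1$ overall across all $d$. More concretely, the plan is to prove that $|v|_1+|v|_2 = n-|v|_0$ takes at most two consecutive values, and that $|v|_1$ takes at most two consecutive values. Both facts follow from the balance of $x$ with respect to the letters $0$ and $1$: the image $\psi$ of a balanced word is balanced on each letter, because $|v|_0$ and $|v|_1$ are controlled by the balanced Sturmian word. Two binary choices give at most $4$ classes.

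\textbf{Main obstacle.} The key step is verifying that $|v|_0$ and $|v|_1$ each vary by at most $1$ over all factors of length $n$. I plan to check this via the corresponding factors: factors of $x$ of equal length have corresponding Sturmian factors whose lengths differ by at most $1$, and then I apply balance of $s$ together with the relation $|v|_0 = |u|_0$. If the lengths of the corresponding factors differ, I will need a short case analysis using $n=|u|+|u|_1+\dots$.
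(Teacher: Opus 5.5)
Both halves of your argument have a genuine gap, and the upper-bound gap rests on a false claim.

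\textbf{Upper bound.} Your key step is that $|v|_0$ (equivalently $|v|_1+|v|_2$) takes at most two values on factors of length $n$. This is false, because the letter $0$ is not balanced in $x=\psi(s)$. For $x=\psi(f)$ with $f$ the Fibonacci word, the length-$5$ factors $00120$ and $12012$ (both listed in Example~\ref{ex:5psif1}) contain three and one $0$'s. Already for $n=2$, every word of type IIIa has the factors $00$ and $12$. So your ``two binary choices'' count collapses.

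Your heuristic $|v|_0=|u|_0$ does not transfer balance from $s$. The corresponding factors ($0010$ and $101$ in the example) have different lengths, since each $1$ of $u$ costs two letters of $v$, and balance of $s$ only compares factors of equal length. Equivalently, identifying $1$ with $2$ turns $\psi$ into $0\mapsto 0,\ 1\mapsto 11$, whose matrix has determinant $2$, so it is not a Sturmian morphism.

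The right pair of invariants is $(|v|_1,|v|_2)$, which also determines the abelian class. Identifying $2$ with $0$ yields the image of $s$ under $0\mapsto 0,\ 1\mapsto 10$. Identifying $1$ with $0$ and renaming $2$ as $1$ yields the image under $0\mapsto 0,\ 1\mapsto 01$. Both are Sturmian morphisms, so both projections are balanced, each count takes at most two values, and there are at most $4$ classes.

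The paper avoids counting altogether. By Theorem~\ref{th:scheme}, rule 1 preserves the abelian class and rules 2--4 are each applied exactly once, so the lexicographic array splits into at most four runs of constant abelian class.

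\textbf{Lower bound.} This does not close either. Your claim that $A1$ realizes $|v|_2-|v|_1=-1$ fails when $A$ begins with $2$: for $\psi(f)$ and $n=3$ one has $A=20$, and $A1=201$ has difference $0$.

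More seriously, the facts you collect do not produce a third class. If $A$ does not begin with $2$, then $A0$ has difference $0$ and $A1$ has difference $-1$. So ``$-1$ occurs'', ``some value $\geq 0$ occurs'' and $A0\not\sim_{ab}A1$ are all witnessed by these two factors alone.

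The idea you abandoned is the one that works, and it needs no comparison of $A'01$ with $A0$. The factors $A0$ and $A1$ lie in two different classes and both contain exactly $|A|_2$ letters $2$. The rule-3 pair $A'01$, $A'12$ exists by Theorem~\ref{th:scheme} (for $n\geq 2$; $n=1$ is trivial), and its two members differ by one in their number of $2$'s. Hence one of them has a number of $2$'s different from $|A|_2$ and lies in a third class. This is essentially the paper's argument, which reads both bounds off the single applications of rules 2--4.
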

		
	\begin{proof}
			Consider the lexicographic array of factors of length $n$ and rules which are used for obtaining a word in each row from the previous one.

            
            Among the four rules of the scheme for a word of type III only rules 2--4 change the abelian complexity. Applying rule 3 changes the number of 2's and hence changes the abelian class. Applying rule 4 changes the number of 1's and hence changes the abelian class as well. Thus we have at least three abelian classes. 
            Since each of these rules is applied once, we have at most four abelian classes. 
	\end{proof}


	\subsection{Group complexity 4 in words of minimal complexity over ternary alphabet}\label{subsec:four}

     Theorem \ref{th:ternary} gives the classification of  group complexity values except for value 4 for the words as in item 3 of Theorem \ref{th:classification} and for lengths with abelian complexity 3. In this subsection we find a criterion for abelian complexity equal to 4 (Proposition \ref{pr:abelian-four-ends-zero}). We also show in Theorem \ref{th:transitive} that there are infinitely many lengths for which group complexity 4 cannot be attained for non-transitive subgroups of $S_n$.
     

   We now analyze in which order the rules in the scheme of a ternary minimal complexity word of type III can be used for the lengths with abelian complexity equal to 4. 
    
	\begin{lemma}\label{lm:abelian-rules}
		The abelian complexity of a ternary minimal complexity word of type III is equal to $4$ for some length $n$ if and only if rule 3 is applied between rules 2 and 4.
    \end{lemma}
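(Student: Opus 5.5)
The plan is to reduce the statement to a bookkeeping argument on Parikh vectors along the lexicographic array, using Theorem~\ref{th:scheme}. Fix a length $n\geq 2$ and list the factors of length $n$ in lexicographic order. By Theorem~\ref{th:scheme}, each factor except the first is obtained from its predecessor by exactly one of rules 1--4. Moreover, each of rules 2, 3, 4 is used exactly once, so they occur at three distinct transitions of the array.

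Rule 1 ($A012B\to A120B$) does not change the Parikh vector. So the array splits into at most four consecutive blocks, separated by the three transitions where rules 2, 3, 4 are applied, and the Parikh vector is constant on each block. Hence the abelian classes are exactly the distinct Parikh vectors among these four blocks.

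Next I will compute the change in the Parikh vector $(|\cdot|_0,|\cdot|_1,|\cdot|_2)$ caused by each rule:
\begin{itemize}
\item rule 2 ($A0\to A1$) adds $r_2=(-1,1,0)$;
\item rule 3 ($A01\to A12$) adds $r_3=(-1,0,1)$;
\item rule 4 ($12B\to 20B$) adds $r_4=(1,-1,0)=-r_2$.
\end{itemize}
If $v$ is the Parikh vector of the first block and the rules occur in the order $(i,j,k)$, the four blocks have vectors $v$, $v+r_i$, $v+r_i+r_j$, $v+r_i+r_j+r_k$. The task is then to count how many of these are distinct for each of the six orders.

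If rule 3 lies between rules 2 and 4 (orders $2,3,4$ and $4,3,2$), the vectors are $v$, $v\pm r_2$, $v\pm r_2+r_3$, $v+r_3$. These are pairwise distinct because $r_2$ and $r_3$ are linearly independent. Indeed, all pairwise differences are $\pm r_2$, $\pm r_3$ or $\pm r_2\pm r_3$, all nonzero. This gives abelian complexity $4$.

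Otherwise rules 2 and 4 are adjacent in the order, and their contributions cancel. For the orders $2,4,3$ and $4,2,3$ we get $v,\,v\pm r_2,\,v,\,v+r_3$. For the orders $3,2,4$ and $3,4,2$ we get $v,\,v+r_3,\,v+r_3\pm r_2,\,v+r_3$. In each case exactly three distinct vectors appear, giving abelian complexity $3$, which is consistent with Lemma~\ref{lm:abelian}. Together these two cases prove the equivalence.

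I do not expect a serious obstacle. The only points needing care are the following.
\begin{itemize}
\item The transitions are really single rule applications. This is guaranteed by the $2$-scheme for $n\geq 2$; the case $n=1$ is either excluded or checked by hand.
\item Non-adjacent blocks with equal Parikh vectors genuinely merge into one abelian class. This holds because abelian equivalence depends only on the Parikh vector.
\end{itemize}
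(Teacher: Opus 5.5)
Your proposal is correct and follows the same route as the paper. Both use Theorem~\ref{th:scheme} to see that only the three single applications of rules 2--4 can change the abelian class, and then check the six possible orders of these rules. Your explicit vectors $r_2=-r_4=(-1,1,0)$ and $r_3=(-1,0,1)$ make the paper's verbal case check precise: in the orders 2--3--4 and 4--3--2 the four vectors are distinct by linear independence, while in the other orders the adjacent $r_2$ and $r_4$ cancel.
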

		
	\begin{proof} First note that by Theorem \ref{th:scheme}, for each length rules 2--4 are applied exactly once, and applying rule 1 does not change the abelian class of the factor.

            If rule 3 is applied between rules 2 and 4, then, since it splits the set of factors into two abelian classes with distinct number of 2's, we have four abelian classes in total: each of the remaining rules splits each of the two classes into two classes with distinct numbers of 0's and 1's, and all the four classes are distinct as well.
            
			If the order is 2--4--3: First we replace one occurrence of 0 with 1 (rule 2) getting the second abelian class, then replace one occurrence of 1 with 0 (rule 4), getting again the first abelian class, and finally we replace one occurrence of 2 with 1 (rule 3), getting the third abelian class. So, with this order we  have three abelian classes.

			If the order is 3--4--2, we obtain three classes similarly to the above: first we add 2 and remove 0, then we add 0 and remove 1, and finally add 1 and remove 0. 
            
            In the remaining two cases (3--2--4 and 4--2--3) similarly to the above we get three abelian classes.
\end{proof}

    For a word $w$ and a integer $n$ we let $F_n(w)$ denote the set of factors $w$ of length $n$.

    \begin{lemma}\label{lm:eta}

Let $\eta: \mathbb{T}^* \to \mathbb{T}^*$ be a map defined as a composition $\chi \circ \rho$, where $\rho$ is the reversal map and $\chi$ is the morphism $\begin{cases}
            0 \mapsto 0\\
            1 \mapsto 2\\
            2 \mapsto 1
        \end{cases}$. Then for each $n$ and for each ternary minimal complexity word $w$ of type III the map $\eta$ is an involution (in the sense that $\eta^2=Id$) on the set $F_n(w)$.
    \end{lemma}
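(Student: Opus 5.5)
We need to show two things about $\eta$: that $\eta^2=Id$ on all of $\mathbb{T}^*$, and that $\eta$ maps $F_n(w)$ into itself. The first is formal. Reversal is an involutive anti-morphism and $\chi$ is an involutive morphism acting letterwise, so $\chi$ commutes with $\rho$. Hence $\eta^2=\chi\rho\chi\rho=\chi^2\rho^2=Id$. In particular $\eta$ is injective, so once we know $\eta(F_n(w))\subseteq F_n(w)$, the finiteness of $F_n(w)$ gives that $\eta$ is a bijection of $F_n(w)$ onto itself, hence an involution on it.

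The substance is the closure property. By Theorem~\ref{th:classification} the set of factors of $w$ equals the set of factors of $\psi(s)$ for a Sturmian word $s$, with $\psi(0)=0$, $\psi(1)=12$. It is therefore enough to show that every factor of $\psi(s)$ is mapped by $\eta$ to a factor of $\psi(s)$.

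I would use the classical fact that the language of a Sturmian word is closed under reversal: it is determined by the slope, and reversing corresponds to the symmetric rotation. I take this as known, citing \cite{Lothaire_2002}.

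Next comes a direct computation on images of letters. We have $\eta(\psi(0))=0=\psi(0)$ and $\eta(\psi(1))=\chi(21)=12=\psi(1)$. Since $\eta$ is an anti-morphism, for every word $u$,
\[
\eta(\psi(u))=\psi(\tilde u),
\]
where $\tilde u$ is the reversal of $u$.

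Now let $v$ be a factor of $\psi(s)$ of length $n$. Choose a factor $u$ of $s$ such that $v$ is a factor of $\psi(u)$; this is possible by taking $u$ long enough around the occurrence of $v$. Then $\eta(v)$ is a factor of $\eta(\psi(u))=\psi(\tilde u)$, because $\eta$ maps factors of a word to factors of its image, reversing their order. Since $\tilde u$ is a factor of $s$ by reversal closure, $\psi(\tilde u)$ is a factor of $\psi(s)$, and therefore so is $\eta(v)$. Length is clearly preserved, so $\eta(F_n(w))\subseteq F_n(w)$.

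The only non-formal input is the reversal-closure of Sturmian languages. If it had to be justified in the paper's own terms, I would argue via balance and minimal complexity, or via the standard description of factors as those of a mechanical word of slope $\alpha$, whose language is symmetric under $x\mapsto -x$. Beyond that, the main point to check is the boundary behaviour: a factor $v$ may begin with $2$ or end with $1$, cutting through an image $12$. Choosing $u$ so that $\psi(u)$ strictly covers $v$ handles this, since $\eta$ then maps the cut pieces consistently.
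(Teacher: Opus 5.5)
Your proof is correct and follows the same route as the paper, whose entire proof is that the claim ``follows immediately'' from the closure of Sturmian languages under reversal. Your identity $\eta(\psi(u))=\psi(\tilde u)$, obtained from $\eta(0)=0$ and $\eta(12)=12$, is exactly the step the paper leaves implicit. Like the paper, you also implicitly fix the letter names so that the language of $w$ is literally that of $\psi(s)$; this convention is needed, because $\eta$ is not preserved under an arbitrary renaming of letters.
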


    \begin{proof}
        The claim follows immediately from the fact that the set of factors of a Sturmian word is closed under reversal.    
%
%
%
%
    \end{proof}




    \begin{remark}
        For a ternary minimal complexity word $w$ of type III, we have the following relation for the numbers of factors ending with letter 1 and ending with letter 2  for each $n \geqslant 1$: $$|F_{n+1}(w) \cap \mathbb T^*2| = |F_{n}(w) \cap \mathbb T^*1|.$$ This follows from the fact that every factor from $F_{n}(w) \cap \mathbb T^*1$ can be continued only with $2$, and each factor $F_{n+1}(w) \cap \mathbb T^*2$ has $1$ as the second last symbol. 
    \end{remark}

    The following lemma gives some further relations between the numbers of factors beginning or ending with letters 1 and 2:

    \begin{lemma}\label{lm:edges}
        Let $w$ be a ternary minimal complexity word of type III. For each $n\geqslant 1$ the following equalities hold: $$|F_{n}(w) \cap 1\mathbb T^*| = |F_{n}(w) \cap \mathbb T^*2|,$$ $$|F_{n}(w) \cap 2\mathbb T^*| = |F_{n}(w) \cap \mathbb T^*1|,$$ $$|F_{n}(w) \cap 0\mathbb T^*| = |F_{n}(w) \cap \mathbb T^*0|.$$
    \end{lemma}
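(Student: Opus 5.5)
The plan is to use the involution $\eta=\chi\circ\rho$ from Lemma~\ref{lm:eta}, together with the local structure of type III words: every $1$ is followed by $2$, and every $2$ is preceded by $1$.

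\textbf{Second equality.} Since $\eta$ is a bijection of $F_n(w)$ onto itself (Lemma~\ref{lm:eta}), it suffices to track how $\eta$ moves first and last letters. If $v=a_1\cdots a_n$, then $\eta(v)=\chi(a_n)\cdots\chi(a_1)$. So the first letter of $\eta(v)$ is $\chi$ of the last letter of $v$. Because $\chi$ swaps $1$ and $2$, $\eta$ maps $F_n(w)\cap\mathbb T^*1$ bijectively onto $F_n(w)\cap 2\mathbb T^*$. This gives the second equality directly.

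\textbf{First equality.} By the same argument, $\eta$ maps $F_n(w)\cap\mathbb T^*2$ bijectively onto $F_n(w)\cap 1\mathbb T^*$, which gives the first equality. Likewise $\chi(0)=0$, so $\eta$ maps $F_n(w)\cap\mathbb T^*0$ onto $F_n(w)\cap 0\mathbb T^*$, giving the third equality.

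\textbf{Checking Lemma~\ref{lm:eta}.} The only real content is that $\eta$ preserves $F_n(w)$, so I would verify this briefly rather than take it on faith. A factor of $w=\psi(s)$ has the form (suffix of $\psi(a)$)$\,\psi(u)\,$(prefix of $\psi(b)$). Reversal gives $\rho(\psi(u))=\psi'(\rho(u))$, where $\psi'\colon 0\mapsto 0,\ 1\mapsto 21$. Applying $\chi$ turns $21$ into $12$, so $\chi\circ\psi'=\psi$. Hence $\eta(\psi(u))=\psi(\rho(u))$. The set of factors of a Sturmian word is closed under reversal, so $\rho(u)$ is again a factor of $s$.

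\textbf{Boundary letters (main obstacle).} The one step needing care is the truncated letters at the ends: a leading $2$ or a trailing $1$. A leading $2$ becomes a trailing $\chi(2)=1$ after $\eta$, which is again a partial image of $\psi(1)=12$. Symmetrically, a trailing $1$ becomes a leading $2$. So $\eta$ maps factors to factors, extended consistently on both sides, and the three equalities follow.

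The degenerate cases $n=1$ and the possible absence of some letter classes need no separate treatment, since the bijections hold for every $n$.
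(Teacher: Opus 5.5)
Your proof is correct and follows the paper's approach. The paper also applies the involution $\eta$ from Lemma~\ref{lm:eta}, but it splits $F_n(w)\cap 1\mathbb T^*$ by last letter (sending $1\mathbb T^*0$ to $0\mathbb T^*2$, $1\mathbb T^*1$ to $2\mathbb T^*2$, and $1\mathbb T^*2$ to itself) and then sums, whereas you observe directly that $\eta$ turns first letter $a$ into last letter $\chi(a)$, which also handles $n=1$ without a separate check. Your explicit check that $\eta$ preserves $F_n(w)$ fills in detail the paper leaves to a one-line remark: it uses $\chi\circ\psi'=\psi$ and the reversal-closure of the Sturmian factor set, applied to the extended context $1u$, $u1$ or $1u1$ in $s$.
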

    \begin{proof} For $n = 1$ the claim is obvious. Consider $n \geqslant 2$.
        
        The involution $\eta$ from Lemma \ref{lm:eta} maps factors from the set $F_{n}(w) \cap 1\mathbb T^*0$ 
        to the set $F_{n}(w) \cap 0\mathbb T^* 2$ and factors from $F_{n}(w) \cap 1\mathbb T^*1$ to $F_{n}(w) \cap 2\mathbb T^* 2$. Hence $$|F_{n}(w) \cap 1\mathbb T^*| = |F_{n}(w) \cap 1\mathbb T^*0| + |F_{n}(w) \cap 1\mathbb T^* 1| + |F_{n}(w) \cap 1\mathbb T^* 2| = $$ $$ |F_{n}(w) \cap 0\mathbb T^* 2| + |F_{n}(w) \cap 2\mathbb T^* 2| + |F_{n}(w) \cap 1 \mathbb T^*2| = |F_{n}(w) \cap\mathbb T^*2|.$$ This proves the first part.

        The second part can be proved in a similar way using the equalities $$|F_{n}(w) \cap 2\mathbb T^* 0| = |F_{n}(w) \cap 0\mathbb T^* 1|$$ and $$|F_{n}(w) \cap 2\mathbb T^* 2| = |F_{n}(w) \cap 1\mathbb T^* 1|.$$
    \end{proof}

    
    \begin{lemma}\label{lm:borders}
        Let $w$ be a ternary minimal complexity word of type III. Then for each $n \geqslant 1$ we have either $$|F_{n}(w) \cap 2\mathbb T^*| = |F_{n}(w) \cap 1\mathbb T^*|$$ or $$|F_{n}(w) \cap 2\mathbb T^*| = |F_{n}(w) \cap 1\mathbb T^*| + 1.$$
    \end{lemma}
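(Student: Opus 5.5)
The plan is to reduce the statement to a comparison of the numbers of factors beginning with $1$ at two consecutive lengths, and then to use the uniqueness of the right special factor.

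\emph{Step 1: a bijection between $F_{n}(w)\cap 2\mathbb T^*$ and $F_{n+1}(w)\cap 1\mathbb T^*$.} Since $w=\psi(s)$, every occurrence of $2$ in $w$ comes from an image $\psi(1)=12$, so it is immediately preceded by $1$. The only exception would be a $2$ at the very first position of $w$. By Theorem~\ref{th:classification} we work with the set of factors of $\psi(s)$, and $s$ is uniformly recurrent, hence so is $\psi(s)$. Therefore every factor $2v$ of length $n$ has an occurrence that is not at the beginning of the word, and that occurrence is preceded by $1$; thus $12v$ is a factor. Conversely, every $1$ is followed by $2$, so every factor of length $n+1$ beginning with $1$ has the form $12v$. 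Deleting the first letter then gives a factor $2v$ of length $n$. These two maps are mutually inverse, so
$$|F_{n}(w) \cap 2\mathbb T^*| = |F_{n+1}(w) \cap 1\mathbb T^*|.$$

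\emph{Step 2: comparing $F_{n+1}(w)\cap 1\mathbb T^*$ with $F_n(w)\cap 1\mathbb T^*$.} Consider the map $F_{n+1}(w)\cap 1\mathbb T^* \to F_n(w)\cap 1\mathbb T^*$ that deletes the last letter. It is surjective, because every factor of a recurrent infinite word extends to the right. The fibre over a factor $u$ has size equal to the number of right extensions of $u$. This number is $1$ unless $u$ is right special. As recalled in the preliminaries, a word of minimal complexity has exactly one right special factor of each length, and it has exactly two extensions (by $0$ and $1$), which is consistent with $p(n+1)-p(n)=1$. Hence
$$|F_{n+1}(w) \cap 1\mathbb T^*| - |F_{n}(w) \cap 1\mathbb T^*| \in \{0,1\}.$$
The difference is $1$ exactly when the right special factor of length $n$ begins with $1$. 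Combining this with Step 1 gives the claim, including the case $n=1$, where it can also be checked by hand.

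The only delicate point is Step 1, namely justifying that a factor beginning with $2$ can always be extended to the left by $1$. The left extension is not automatic at the first position of $w$, and this is why the argument passes through uniform recurrence of $\psi(s)$, or equivalently through Theorem~\ref{th:classification}, which lets us work only with the factor set of $\psi(s)$.
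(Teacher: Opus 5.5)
Your proof is correct. It has the same skeleton as the paper's: a shift identity between the numbers of factors beginning with $1$ and with $2$, followed by counting right extensions using the uniqueness of the right special factor. The difference is in how you obtain the shift identity.

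The paper derives $|F_{n}(w)\cap 1\mathbb T^*| = |F_{n-1}(w)\cap 2\mathbb T^*|$ by a detour through suffix counts, for $n\geqslant 2$:
\begin{itemize}
\item first Lemma~\ref{lm:edges}, which rests on the involution $\eta$ from Lemma~\ref{lm:eta}, i.e.\ on closure of the Sturmian language under reversal;
\item then the remark $|F_{n}(w)\cap \mathbb T^*2| = |F_{n-1}(w)\cap \mathbb T^*1|$;
\item then Lemma~\ref{lm:edges} again.
\end{itemize}
It then compares $|F_n(w)\cap 2\mathbb T^*|$ with $|F_{n-1}(w)\cap 2\mathbb T^*|$ via the right special factor of length $n-1$, and checks $n=1$ separately.

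You instead prove the left-hand analogue of that remark directly, using that every $2$ is preceded by $1$ and every $1$ is followed by $2$. You then compare factors beginning with $1$ at lengths $n$ and $n+1$ via the right special factor of length $n$.

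Your route is self-contained, needs no reversal symmetry, and covers $n=1$ uniformly. The paper's route is shorter on the page because it reuses Lemma~\ref{lm:edges}, which it needs anyway for Proposition~\ref{pr:abelian-four-ends-zero}.

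Two minor points:
\begin{itemize}
\item The first-position issue you flag needs no recurrence argument. Since only the factor set matters, you may work with $\psi(s)$ itself, and $\psi(s)$ never begins with $2$ because $\psi(0)=0$ and $\psi(1)=12$.
\item The right-extendability used in Step 2 holds in any infinite word.
\end{itemize}
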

    \begin{proof}
        The statement is straightforward for $n = 1$; consider $n \geqslant 2$. Since $|F_{n}(w) \cap 1\mathbb T^*| = |F_{n}(w) \cap\mathbb T^*2| = |F_{n - 1}(w) \cap\mathbb T^*1| = |F_{n-1}(w) \cap 2\mathbb T^*|$, the claim follows from the fact that we have exactly one right special factor of length $n - 1$.
        %
%
%
    \end{proof}

    \begin{proposition}\label{pr:abelian-four-ends-zero}
        Let $w$ be a ternary minimal complexity word $w$ of type III and $n$ be an integer. Then we have $p^{ab}_w(n) = 4$ if and only if the last column of the lexicographic array for length $n$ is of the form $0^+1^+2^+0^+$.
    \end{proposition}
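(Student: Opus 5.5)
We combine Lemma \ref{lm:abelian-rules} with the scheme of Theorem \ref{th:scheme}. By Lemma \ref{lm:abelian-rules}, $p^{ab}_w(n)=4$ holds exactly when, reading the lexicographic array of length $n$ from top to bottom, rule 3 is applied between rules 2 and 4. So it suffices to show that the order in which rules 2, 3, 4 occur determines the shape of the last column, and that rule 3 lies between the other two exactly when the last column is $0^+1^+2^+0^+$.

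First I will check how each rule affects the last letter of a factor. Rule 1 ($A012B\to A120B$) with $B$ nonempty leaves the last letter unchanged. With $B$ empty it would replace a terminal $2$ by $0$, but a terminal $012$ becomes $120$, and this is already the rule-3/rule-1 interplay, so I will restrict to the relevant cases. Rule 2 changes the last letter from $0$ to $1$. Rule 3 changes the last letter from $1$ to $2$. Rule 4 ($12B\to 20B$) leaves the last letter unchanged when $|B|\ge 1$, which holds for $n\ge 3$.

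The delicate point is rule 1 with $B=\varepsilon$, which turns a last letter $2$ into $0$. The column changes letter exactly three times, by Lemma \ref{lm:four}. The changes $0\to1$ and $1\to2$ come only from rules 2 and 3 respectively, and each of these is applied once. Hence the change $2\to0$ must come from rule 1 with empty $B$, and it also occurs exactly once.

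Next I read off the cases of Lemma \ref{lm:four}. In the form $0^+1^+2^+0^+$ the transitions are $0\to1$ (rule 2), then $1\to2$ (rule 3), then $2\to0$. In the form $1^+2^+0^+1^+$ they are rule 3, then $2\to 0$, then rule 2. In the form $2^+0^+1^+2^+$ they are $2\to0$, then rule 2, then rule 3. In the first form rule 2 precedes rule 3. In the other two, either rule 3 precedes rule 2, or rule 3 comes last.

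It remains to place rule 4, which changes the first letter from $1$ to $2$. Because the first column also has one of the forms of Corollary \ref{cor:each_col}, and the array is sorted lexicographically, the first column must be $0^+1^+2^+$ or a subword of it. So rule 4 is the transition from factors starting with $1$ to factors starting with $2$, and all factors beginning with $2$ come after it.

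The main obstacle is locating rule 4 relative to the last column. For the form $0^+1^+2^+0^+$ I must show that rule 4 comes after rule 3; for the other two forms I must show that rule 3 is not strictly between rules 2 and 4.

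My first attempt uses the lemmas on first and last letters. By Lemma \ref{lm:edges}, the number of factors ending in $2$ equals the number of factors starting with $1$, so the final block of factors starting with $2$ has the same size as the block of factors ending in $1$.

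In the case $0^+1^+2^+0^+$, the factors starting with $0$ have size $|F\cap 0\mathbb T^*|$, which equals the number of factors ending in $0$. That number is the size of the first $0$-block plus the size of the last $0$-block. Because the array is sorted, the rows ending with $1$ start with $0$ if rule 2 precedes rule 4. Counting rows then shows that the boundary of the first column between letters $1$ and $2$ (rule 4) lies after the $2$-block of the last column ends, i.e. after the $1\to2$ change (rule 3). I would carry out the same row counting in the two other cases, using Lemma \ref{lm:borders} to control the off-by-one, and show that rule 4 does not fall on the required side of rule 3 for an abelian complexity of 4.

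If the counting becomes messy, my fallback is to use the involution $\eta$ of Lemma \ref{lm:eta}. It reverses words and swaps $1$ and $2$, and so relates the last column to the first column. With it I would directly compare the position of the unique $A01\to A12$ step with the $12B\to20B$ step.
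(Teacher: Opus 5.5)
Your reduction is correct. By Lemma~\ref{lm:abelian-rules} it suffices to show that rule~3 lies between rules~2 and~4 exactly when the last column is $0^+1^+2^+0^+$. The three transitions of the last column are rule~2 ($0\to1$), rule~3 ($1\to2$) and rule~1 with empty $B$ ($2\to0$), and rule~4 is the $1\to2$ boundary of the sorted first column.

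\textbf{The gap.} The decisive step, locating rule~4, is never carried out, and the one argument you sketch for it is false. The claim that ``the rows ending with $1$ start with $0$ if rule~2 precedes rule~4'' already fails for $\psi(f)$, $n=5$ (Example~\ref{ex:5psif1}). There the row $12001$ ends with $1$ and starts with $1$, although rule~2 ($01200\to01201$) precedes rule~4 ($12012\to20012$). Your intermediate conclusion that rule~4 lies after the $2$-block of the last column \emph{ends} is also wrong there: rule~4 sits inside that block. Only ``after the block starts'' is true. Moreover, in this case Lemma~\ref{lm:edges} alone cannot exclude the order 2--4--3, so the bound from Lemma~\ref{lm:borders} is needed here as well, not only for the other two forms. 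Those two forms are not treated at all.

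\textbf{The repair.} Your counting idea does work once it is done precisely. Write the last column as $0^a1^b2^c0^d$, $1^a2^b0^c1^d$ or $2^a0^b1^c2^d$.
\begin{itemize}
\item Since the first column is sorted, Lemma~\ref{lm:edges} determines it as $0^{a+d}1^c2^b$, $0^c1^b2^{a+d}$, $0^b1^{a+d}2^c$ respectively.
\item So rule~4 occurs after row $a+c+d$, $b+c$, $a+b+d$ respectively.
\item The pair (rule~2, rule~3) occurs after rows $(a,a+b)$, $(a+b+c,a)$, $(a+b,a+b+c)$ respectively.
\item Lemma~\ref{lm:borders} gives $b\in\{c,c+1\}$, $a+d\in\{b,b+1\}$, $c\in\{a+d,a+d+1\}$ respectively.
\end{itemize}
The three cases then close as follows.
\begin{itemize}
\item First form: $a<a+b\le a+c+d$. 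Equality is impossible, since the last letter changes at boundary $a+b$ while rule~4 does not change it for $n\ge 3$. This gives the order 2--3--4.
\item Second form: $a\le b<b+c<a+b+c$, giving the order 3--4--2.
\item Third form: $a+b<a+b+d<a+b+c$, because $c\ge a+d>d$. This gives the order 2--4--3.
\end{itemize}
Combined with Lemma~\ref{lm:abelian-rules}, this proves both directions.

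This route is more uniform than the paper's. The paper splits by the order of the rules, and in each excluded case uses $\eta$ to exhibit a specific witness factor (e.g.\ one in $0\mathbb T^*1$) before invoking Lemma~\ref{lm:borders}.
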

    \begin{proof}
        First, we prove the sufficiency of the condition. The form of the last column of the lexicographic array implies that rule 3 is applied after rule 2. Thus, we have three cases depending on the order of the rules:

        Case 1. Rules are applied in the following order: 2 -- 3 -- 4. Lemma \ref{lm:abelian-rules} says that the abelian complexity in this case is 4.

        Case 2.  Rules are applied in the following order:  4 -- 2 -- 3. In this case factors that start with $0$ or $1$ could end only with $0$, while there exist factors from $F_{n}(w) \cap 2\mathbb T^*$ that end with $0$, $1$ and $2$. So, we have a factor from $F_{n}(w) \cap 2\mathbb T^* 0$, and using $\eta$ from Lemma \ref{lm:eta} we obtain a factor from $F_{n}(w) \cap 0\mathbb T^*1$, which is not possible since in this case all factors beginning with 0 also end with 0.

        Case 3.  Rules are applied in the following order: 2 -- 4 -- 3. Due to the form of the last column, we have a factor from $F_{n}(w) \cap 2\mathbb T^* 0$. By Lemma \ref{lm:eta}, we also have a factor from $F_{n}(w) \cap 0\mathbb T^* 1$.
        Due to the form of the last column and the order of the rules, we have that every factor from $F_{n}(w) \cap 1\mathbb T^*$ ends with $1$ and that the set $F_{n}(w) \cap 2\mathbb T^* 1$ is non-empty. As the set $F_{n}(w) \cap 0\mathbb T^* 1$ is also non-empty, we have $|F_{n}(w) \cap 2\mathbb T^*| = |F_{n}(w) \cap \mathbb T^* 1| \geqslant |F_{n}(w) \cap 1\mathbb T^*| + 2$, which is impossible due to Lemma \ref{lm:borders}.

        We now prove the necessity of the condition. We know that the abelian complexity is $4$, and hence Lemma \ref{lm:abelian-rules} implies that at most two cases are possible for the order of rules:

        Case 1. Rules are applied in the order 2 -- 3 -- 4. If the last column is not of the form $0^+1^+2^+0^+$, then in this case it could only be of the form $2^+0^+1^+2^+$. Applying $\eta$ to the set of factors  $F_{n}(w) \cap 2\mathbb T^* 2$ (resp., $F_{n}(w) \cap 0\mathbb T^* 2$), we obtain the set of factors $F_{n}(w) \cap 1\mathbb T^* 1$ (resp., $F_{n}(w) \cap 1\mathbb T^* 0$). Hence every factor from $F_{n}(w) \cap \mathbb T^* 1$ starts with $1$ and the sets  $F_{n}(w) \cap 1\mathbb T^* 0$ and $F_{n}(w) \cap 1\mathbb T^*2$ are non-empty, i.e., $|F_{n}(w) \cap 1\mathbb T^*| \geqslant |F_{n}(w) \cap \mathbb T^* 1| + 2 = |F_{n}(w) \cap 2\mathbb T^*| + 2$, which contradicts Lemma \ref{lm:borders}. 
        
        Case 2. Rules are applied in the order 4 -- 3 -- 2. In this case the last column has to be of the form $1^+2^+0^+1^+$. Every factor from $F_{n}(w) \cap \mathbb T^* 2$ starts with $2$ and the sets $F_{n}(w) \cap 2\mathbb T^* 1$ and $F_{n}(w) \cap 2\mathbb T^* 0$ are non-empty, i.e., $|F_{n}(w) \cap 2\mathbb T^*| \geqslant |F_{n}(w) \cap \mathbb T^* 2| + 2 = |F_{n}(w) \cap 1\mathbb T^*| + 2$, which contradicts Lemma \ref{lm:borders}. So, this case is impossible.
    \end{proof}

    \begin{theorem}
		Let $w$ be a word obtained by applying a morphism  
		$\psi\colon \begin{cases}
			0 \mapsto 0\\
			1 \mapsto 12
		\end{cases}$ to a Sturmian word, such that 
        the lexicographically smallest  factor of length $n$ ends with 0. Then $w$ satisfies the universal group complexity property for the length $n$.   
		
    \end{theorem}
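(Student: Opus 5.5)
The plan is to reduce the statement to results already proved about type III words, by showing that the hypothesis forces abelian complexity $4$. Then no intermediate value is missing except possibly those supplied by Proposition~\ref{pr:no-four}.

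\textbf{Step 1: the hypothesis forces $p^{ab}_w(n)=4$.} Up to renaming letters, $w$ is of type III in the sense of Theorem~\ref{th:classification}. By Lemma~\ref{lm:four}, the last column of the lexicographic array for length $n$ has one of the forms $0^+1^+2^+0^+$, $1^+2^+0^+1^+$ or $2^+0^+1^+2^+$. The first entry of this column is the last letter of the lexicographically smallest factor, which by hypothesis is $0$. This rules out the last two forms, so the column is of the form $0^+1^+2^+0^+$. Proposition~\ref{pr:abelian-four-ends-zero} then gives $p^{ab}_w(n)=4$.

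\textbf{Step 2: realising the values.} Here $p_w(n)=n+2$, so we must realise every integer $m$ with $4\le m\le n+2$. The value $4$ is the abelian complexity, obtained with $G=S_n$. The value $n+2$ is the factor complexity, obtained with the trivial group. For $n\ge 3$, Proposition~\ref{pr:no-four} shows that $S_{[k,n]}$ with $2\le k\le n-1$ gives group complexity $k+3$. This covers every value from $5$ to $n+2$, so together with $S_n$ all values $4,\dots,n+2$ are attained.

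\textbf{Step 3: small lengths.} For $n=1$ the factors are $0,1,2$, so abelian and factor complexity both equal $3$ and there is nothing to prove. For $n=2$, the factor list from the proof of Theorem~\ref{th:scheme} shows that the four factors are pairwise non-abelian-equivalent, so $p^{ab}=p=4$. This step needs no appeal to the last column or to Proposition~\ref{pr:abelian-four-ends-zero}.

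\textbf{Main obstacle.} The only delicate point is Step 1: checking that Proposition~\ref{pr:abelian-four-ends-zero} applies at the length in question. It relies on the scheme of Theorem~\ref{th:scheme}, which holds from length $2$ on, so small $n$ must be checked separately as in Step 3. Apart from this, the argument only assembles earlier results.
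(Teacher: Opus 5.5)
Your proof is correct and follows the paper's argument: the hypothesis together with Lemma~\ref{lm:four} forces the last column to be $0^+1^+2^+0^+$, Proposition~\ref{pr:abelian-four-ends-zero} then gives abelian complexity $4$, and Proposition~\ref{pr:no-four} (the groups $S_{[k,n]}$) together with $S_n$ supplies all values $4,\dots,n+2$. Your separate check of $n\le 2$ is a useful extra the paper omits; note only that at $n=2$ the hypothesis forces the type IIIa list $00,01,12,20$, because for type IIIb the smallest factor $01$ ends with $1$ and $12\sim_{ab}21$.
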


\begin{proof}
			Follows from Lemma \ref{lm:four}, Propositions \ref{pr:no-four} and \ref{pr:abelian-four-ends-zero}.
\end{proof}



Recall that we denote by $C_n$ the cyclic group generated by the cycle $(1\dots n)$, i.e.,   $C_n=\left<(1\dots n)\right>$.

The following lemma provides a useful fact about group complexity for Sturmian words relatively to cyclic groups $C_n$:
					 
\begin{lemma}[\cite{CASSAIGNE201736}]\label{lm:cyclic-sturmian}
For each Sturmian word $s$ and each length $n$ such that there exists a bispecial factor of $s$ of length $n-2$, one has $p_s^{C_n} = 2$. Moreover, one class of factors has cardinality $n$ and the other one is singleton.
\end{lemma}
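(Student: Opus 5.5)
This lemma is cited from \cite{CASSAIGNE201736}. Here is how I would prove it directly with the tools of this paper.

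\emph{Setup.} Let $B$ be a bispecial factor of $s$ of length $n-2$. Sturmian words have exactly one bispecial factor of each length for which one exists, and it is a central word. Hence $B$ is a palindrome, and the four words $0B0,0B1,1B0,1B1$ are all factors of length $n$. Exactly three of them are factors of $s$: by balance, $0B0$ and $1B1$ cannot both occur. Say $0B1$, $1B0$ and $aBa$ occur.

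\emph{Step 1: the singleton.} Via the standard-pair description, the central word $B$ satisfies $B01=P$ and $B10=Q$ for the two standard words of lengths $n$ (or the symmetric version with the letters exchanged). The words $P$ and $Q$ have coprime letter counts $(q_m,p_m)$, so they are primitive. A standard fact is that every conjugate of $0B1$ (equivalently of $B10$) is a factor of $s$. The conjugates of a primitive word of length $n$ are pairwise distinct, so they give $n$ distinct factors. Since $s$ has exactly $n+1$ factors of length $n$, these $n$ words together with one more factor exhaust $F_s(n)$. The remaining factor is $aBa$, with the other letter count: it lies in the other abelian class, which also excludes it from the rotation class.

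\emph{Step 2: the two classes.} The group $C_n$ acts by rotations, so the $C_n$-orbit of $0B1$ is its set of conjugates, which has size $n$. The orbit of $aBa$ consists of rotations of a word with different Parikh vector, so it is disjoint from the first orbit. It meets $F_s(n)$ only in $aBa$ itself. Therefore $p_s^{C_n}(n)=2$, with one class of cardinality $n$ and one singleton.

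\emph{Main obstacle.} The key point is showing that all $n$ rotations of $0B1$ are factors of $s$. I would derive this from the standard-word recursion $s_m=s_{m-1}^{d_m}s_{m-2}$ from Subsection~\ref{subsec:cont_frac}. The squares $s_{m}s_{m}$ (or $s_{m-1}s_{m}$) occur in $s$, since $s_{m+1}$ begins with $s_m s_m$ or the structure $s_m s_{m-1} s_m$ appears. Every rotation of $s_m$ is a factor of such a square. One must also check the correspondence: lengths $n$ such that a bispecial of length $n-2$ exists are exactly the lengths $|s_{m-1}^{j}s_{m-2}|$, and $B$ is that word with its last two letters deleted. These are the intermediate standard words, so the recursion needs to be applied with partial quotients $j<d_m$. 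This bookkeeping is the step I expect to be most delicate.
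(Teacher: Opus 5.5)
Your reduction is correct, and since the paper gives no proof of this lemma (it only cites \cite{CASSAIGNE201736}), there is no route of the paper's to compare with. Once some primitive word $u$ of length $n$ has all $n$ rotations in $F_s(n)$, the equality $p_s(n)=n+1$ forces exactly two $C_n$-orbits, of sizes $n$ and $1$. You do not even need to identify the extra factor as $aBa$. Note also that balance only gives \emph{at most} three of the words $xBy$; the lower bound comes from the right special factor of length $|B|+1$, which has the form $xB$.

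The genuine gap is that the step carrying all the content, namely that every rotation occurs in $s$, is only asserted. The route you sketch for it fails as stated, because squares of intermediate standard words need not be factors. Take $n=2$, $B=\varepsilon$, $u=s_0s_{-1}=01$. If $d_1\ge 2$, every run of $0$'s between two $1$'s in $s$ has length $d_1$ or $d_1+1$, so $uu=0101$ does not occur. The same holds for every $u=0^j1$ with $j<d_1$. Likewise, $s_{m-1}s_m$ does not contain the rotations of $s_m$.

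What you actually need is only that $uu^-$ occurs, where $u^-$ is $u$ with its last letter deleted: its $n$ windows of length $n$ are exactly the rotations of $u$.
\begin{itemize}
\item For $m=1$, $uu^-=0^j10^j$ occurs, since all runs of $0$'s have length at least $d_1\ge j$.
\item For $m\ge 2$ you even get $uu$. The word $u=s_{m-1}^js_{m-2}$ is a suffix of $s_m$. Since $s_{m-2}$ is a prefix of $s_{m-1}$, $u$ is also a prefix of $s_{m-1}^{j+1}$, hence of $s_m$ when $j<d_m$ (and $u=s_m$ when $j=d_m$). So $uu$ is a factor of $s_ms_m$.
\item The word $s_ms_m$ occurs in $s$: in $s_{m+1}$ if $d_{m+1}\ge 2$, and otherwise in $s_{m+2}s_{m+1}$, which ends with $s_ms_ms_{m-1}$.
\end{itemize}
For primitivity, use the letter counts $(jq_{m-1}+q_{m-2},\,jp_{m-1}+p_{m-2})$, which are coprime by \eqref{eq:convergents}, rather than $(q_m,p_m)$.

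There is also a uniform argument that avoids this bookkeeping.
\begin{itemize}
\item $B$ is the only left and the only right special vertex of the Rauzy graph of order $|B|$. So that graph is two cycles glued at $B$, with $p_s(|B|+1)=n$ edges in total.
\item Hence the two return words of $B$ satisfy $|r_1|+|r_2|=n$, and both $r_1r_2B$ and $r_2r_1B$ occur, by aperiodicity.
\item Together, their windows of length $n$ give all rotations of $r_1r_2$.
\item $r_1r_2$ is primitive, because the $n$ windows of length $n-1$ of $r_1r_2B$ are the $n$ distinct edges of the graph.
\end{itemize}
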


We note here that every standard factor of a Sturmian word $s$ is of the form $pab$, where $p$ is a palidrome and $a, b$ are distinct letters (cf. Theorem 2.2.4 in \cite{Lothaire_2002}). Moreover,  $p$ as a bispecial factor of $s$ (follows from Corollary 2.2.28 in \cite{Lothaire_2002}), hence we have the following corollary from Lemma \ref{lm:cyclic-sturmian}:
\begin{corollary}\label{cr:cyclic-sturmian}
For each Sturmian word $s$ and each length $n$ such that there exists a standard factor of $s$ of length $n$, one has $p_s^{C_n} = 2$. Moreover, one class of factors has cardinality $n$ and the other one is singleton.
\end{corollary}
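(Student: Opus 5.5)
The corollary follows from Lemma~\ref{lm:cyclic-sturmian} once we exhibit, for each standard factor of length $n$, a bispecial factor of length $n-2$. The plan is therefore a short reduction rather than a new argument.

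Let $s_m$ be a standard factor of $s$ with $|s_m| = n$. By the structural fact recalled above (Theorem 2.2.4 in \cite{Lothaire_2002}), write $s_m = pab$, where $p$ is a palindrome and $\{a,b\} = \{0,1\}$. Then $|p| = n-2$. Next, invoke Corollary 2.2.28 in \cite{Lothaire_2002}: the palindromic prefix $p$ of a standard word is a bispecial factor of $s$. So $s$ has a bispecial factor of length $n-2$.

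Now Lemma~\ref{lm:cyclic-sturmian} applies verbatim at length $n$. It gives $p_s^{C_n} = 2$, with one orbit of size $n$ and one singleton orbit.

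The one point needing care is the range of $m$. For $s_{-1}=1$ and $s_0=0$ we have $n=1$, so no decomposition $pab$ exists. At $n=1$, however, $C_1$ is trivial and $p_s(1)=2$, so both classes are singletons. The phrase ``class of cardinality $n$'' is then satisfied in the degenerate sense $n=1$.

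For $m=1$, the word $s_1 = 0^{d_1-1}1$ has the form $pab$ with $p = 0^{d_1-1}$ when $d_1 \geq 2$. This $p$ is indeed bispecial (possibly empty, and $\varepsilon$ is trivially bispecial). If $d_1 = 1$, then $s_1 = 01$ and $p=\varepsilon$, which is again bispecial. Hence the main (mild) obstacle is only confirming that the cited structural facts also cover these small standard words. I would check them by hand rather than rely on the general statement.
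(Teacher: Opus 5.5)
Your proposal is correct and is exactly the paper's argument: write the standard factor as $pab$ with $p$ a palindrome (Theorem 2.2.4 in \cite{Lothaire_2002}), note that $p$ is a bispecial factor of length $n-2$ (Corollary 2.2.28 in \cite{Lothaire_2002}), and apply Lemma~\ref{lm:cyclic-sturmian}. Your extra small-case check is harmless, but note that with the paper's convention $s_1 = s_0^{d_1}s_{-1} = 0^{d_1}1$ rather than $0^{d_1-1}1$; only with $0^{d_1}1$ are your choice $p = 0^{d_1-1}$ and your claim $s_1 = 01$ for $d_1=1$ consistent.
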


					
Note that for every Sturmian word $s$ there exists infinitely many integers $n$ that satisfy Corollary \ref{cr:cyclic-sturmian}. 
					
					Now we prove a similar lemma for minimal complexity words of type III; moreover, we need only odd lengths:
					
					\begin{lemma}\label{lm:cycle} For each word $w$ obtained by applying a morphism  
						$\psi\colon \begin{cases}
			0 \mapsto 0\\
			1 \mapsto 12
		\end{cases}$ to a Sturmian word, there exist infinitely many odd integers $n$ such that $p_w^{ab}(n) = 3$ and $p_w^{C_n} = 3$. Moreover, there are three abelian classes: two classes are singleton, and the third class contains all cyclic shifts of one word.\end{lemma}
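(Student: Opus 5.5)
The plan is to transfer Corollary~\ref{cr:cyclic-sturmian} from the Sturmian word $s$ (with $w=\psi(s)$) to $w$ through the morphism $\psi$, using standard factors.

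\textbf{Choice of lengths.} Let $(s_m)$ be the standard sequence of $s$, with $|s_m|_1=p_m$, $|s_m|_0=q_m$ and $N_m=|s_m|=p_m+q_m$. Put $n_m=|\psi(s_m)|=q_m+2p_m$. Then $n_m\equiv q_m \pmod 2$. By \eqref{eq:convergents}, consecutive $q_m$ are coprime, so they are never both even, and infinitely many $q_m$ are odd. This gives infinitely many odd lengths $n=n_m$. (If the most frequent letter of $s$ is $1$, the roles of $p_m$ and $q_m$ are swapped; I would track this by an analogous parity argument.)

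\textbf{Description of the factors of length $n$.} By Corollary~\ref{cr:cyclic-sturmian}, $F_{N}(s)$ consists of the $N$ distinct conjugates of the primitive word $s_m$, plus one extra factor $e$ lying in a different abelian class. I then want to show that $F_n(w)$ contains all $n$ conjugates of $\psi(s_m)$. These conjugates are pairwise distinct, because $\psi(s_m)$ is primitive: $\psi$ is injective and recognizable, so primitivity is inherited.

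To get these conjugates, I would use that every conjugate $v$ of $s_m$ is a factor of $s$. Moreover, it occurs in the circular context: a factor of $s$ of length $N+1$ that is a circular factor of $s_m$. Its $\psi$-image therefore contains, as length-$n$ windows, the conjugates of $\psi(s_m)$ starting at that letter. I expect this to be the main obstacle. It requires checking, via the structure $s_{m+1}=s_m^{d_{m+1}}s_{m-1}$ and the palindromic form $s_m=pab$, that for every rotation the one-letter circular extension is a genuine factor. The likely way around it is to pass to the next standard length, or to use $s_m s_{m-1}$ as a prefix of $s_m s_m$ up to the last two letters. In the worst case I would allow one exceptional rotation and compensate with the factor counting below.

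Since $p_w(n)=n+2$, there remain exactly two further factors $X$ and $Y$ of $w$ of length $n$.

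\textbf{Abelian and cyclic classes.} All conjugates of $\psi(s_m)$ share the Parikh vector $(q_m,p_m,p_m)$, so they form a single $C_n$-class and lie in one abelian class. Every factor of $w$ of length $n$ has its number of $2$'s in $\{p_m,p_m\pm1\}$, together with a balancing constraint coming from the balance of $s$. I would show that $X$ and $Y$ are not abelian equivalent to $\psi(s_m)$, nor to each other. For this I would use Lemma~\ref{lm:abelian}, which gives $p^{ab}_w(n)\ge 3$: if, say, $X$ were abelian equivalent to the big class, there would be at most two abelian classes, a contradiction. If instead $X\sim_{ab}Y$, I would identify $X$ and $Y$ explicitly as the two factors whose corresponding factors in $s$ involve $e$. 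I would then check that their numbers of $2$'s differ, one being obtained by Rule~3 and the other by Rule~4 of Theorem~\ref{th:scheme}.

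Hence $p^{ab}_w(n)=3$. Since $C_n$-classes refine abelian classes and the conjugates form one $C_n$-class, we also get $p_w^{C_n}=3$, with the claimed structure: two singleton classes and one class consisting of all cyclic shifts of $\psi(s_m)$.
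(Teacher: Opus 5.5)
Your route is the paper's: take odd lengths $n=|\psi(s_m)|=2p_m+q_m$ with $m\ge 3$, obtained from \eqref{eq:convergents}; show that all $n$ rotations of $\psi(s_m)$ lie in $F_n(w)$; then let $p_w(n)=n+2$ and Lemma~\ref{lm:abelian} force the two leftover factors into singleton classes.

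\textbf{The gap.} The gap is exactly the step you call the main obstacle, and you leave it open. Your reduction of that step is correct. A rotation of $\psi(s_m)$ that starts at the $2$ of some block $\psi(1)=12$ has the form $2\psi(v')1$. It occurs in $w$ if and only if $1v'1$, the one-letter circular extension of the rotation $1v'$ of $s_m$, is a factor of $s$. So Corollary~\ref{cr:cyclic-sturmian} is not enough, since it only gives the rotations of $s_m$ as factors of length $N=|s_m|$.

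None of your workarounds supplies these extensions:
\begin{itemize}
\item The word $s_ms_{m-1}$ is a prefix of $s_ms_m$ of length $N+|s_{m-1}|$. It therefore contains only the circular extensions of rotations starting in the first $|s_{m-1}|$ positions.
\item Passing to the next standard length is not carried out.
\item Tolerating one ``exceptional rotation'' is incompatible with the statement, which asserts that the large class consists of \emph{all} cyclic shifts. With $n-1$ rotations and three leftover factors, the counting argument also no longer pins down the classes.
\end{itemize}

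\textbf{The missing fact.} What you need is that $s_ms_m$ is itself a factor of $s$. The paper cites \cite{Lothaire_2002} for this when $m\ge 3$. It can also be seen directly:
\begin{itemize}
\item $s_{m+3}=s_{m+2}^{d_{m+3}}s_{m+1}$ ends with $s_{m+2}s_{m+1}$;
\item $s_{m+2}$ ends with $s_m$;
\item $s_{m+1}=s_m^{d_{m+1}}s_{m-1}$ begins with $s_m$.
\end{itemize}
Hence $\psi(s_m)\psi(s_m)$ is a factor of $w$, and it contains every rotation of $\psi(s_m)$ as a window. This is exactly how the paper's proof proceeds.

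\textbf{Two smaller remarks.}
\begin{itemize}
\item Injectivity of a morphism does not by itself preserve primitivity: $a\mapsto aba$, $b\mapsto b$ is injective but sends $ab$ to $(ab)^2$. Here primitivity of $\psi(s_m)$ follows at once from its Parikh vector $(q_m,p_m,p_m)$ and $\gcd(p_m,q_m)=1$.
\item The case $X\sim_{ab}Y$ needs no rule-by-rule analysis. It would also leave only two abelian classes, contradicting Lemma~\ref{lm:abelian} just as in your other case.
\end{itemize}
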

					
					\begin{proof}
%
Let $s$ be a Sturmian word and $n$ be a length from {Corollary \ref{cr:cyclic-sturmian}}. Suppose in addition that $n$ is large enough 
so that the corresponding standard factor is $s_m$ for $m \geqslant 3$ in the standard sequence of $s$. 
We call such $n$ \emph{suitable}. 
%
%
As for each standard factor $u=s_m$ of a Sturmian word satisfying the above condition $m\geqslant 3$ its square $uu$ is a factor of this Sturmian word (see, e.g., Lemma 2.2.32 in \cite{Lothaire_2002}), we have that $\psi(uu)$ is a factor of $\psi(s)$. Inside this factor we can find all $n+k$ cyclic shifts of the factor $\psi(u)$, where $k$ is the number of $1$'s in $\psi(u)$. 

Now, as we have a minimal factor complexity word, its factor complexity for length $n + k$ is $n+k+2$. We just proved that we have $n+k$ cyclic shifts of one word as factors; they are in the same group complexity class relatively to the group $C_n$. The remaining two factors must form singleton classes, as the group complexity is at least 3 by Lemma \ref{lm:abelian}.

						Now we prove that for infinitely many suitable lengths $n$ the length $n + k$ is odd. Let $\alpha$ be the slope of the word $s$ and $\frac{p_m}{q_m}$ be its convergents. 
                        Then the numbers $p_m + q_m$ are suitable lengths for $m\geqslant 3$, i.e., we can take $n=p_m + q_m$ and $k = p_m$  (see Subsection \ref{subsec:cont_frac}).     
                        By \eqref{eq:convergents}, the number $q_m$ cannot be even for consecutive values of $m$, hence $n + k = 2p_m + q_m$ is odd infinitely many times.			
	\end{proof}

    Given a word $w$ obtained by applying a morphism  
		$\psi\colon \begin{cases}
			0 \mapsto 0\\
			1 \mapsto 12
		\end{cases}$ to a Sturmian word.

        Suppose that $n$ is the length for which the conditions of Lemma \ref{lm:cycle} are satisfied. Consider the \emph{reduced lexicographic array}, i.e. the lexicographic array without the two factors forming their own abelian classes.
        We then define a \emph{box} as a fragment of the reduced lexicographic array of the form \begin{tabular}{ccc}
			0 & 1 & 2 \\
			1 & 2 & 0 \\
		\end{tabular}. The box does not have to be connected if it start in the last or in the penultimate 
        column and ends in the first or in the second column; it could also start in the last row and end in the first row.

	\begin{lemma} Let $w$ be a ternary minimal complexity words of type III and let $n$ be an integer satisfying the conditions of Lemma \ref{lm:cycle}. Then for each pair of consecutive factors from the reduced 
    lexicographic array (including the last one and the first one) their distinct elements form a box. Moreover, all such boxes start in different columns. \end{lemma}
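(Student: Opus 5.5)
The plan is to read the structure off Lemma~\ref{lm:cycle} and Theorem~\ref{th:scheme}. Let $N$ denote the odd length from Lemma~\ref{lm:cycle}. By that lemma, the reduced lexicographic array of length $N$ consists of exactly the $N$ cyclic shifts of a single word $W=\psi(u)$. These shifts are pairwise distinct, because the class has cardinality equal to the number of positions. The two removed factors are the singleton abelian classes. Since abelian complexity is $3$, Lemma~\ref{lm:abelian-rules} says rule 3 is not applied between rules 2 and 4.

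The key observation is that rules 2, 3 and 4 change the abelian class: rule 2 changes the count of $1$'s versus $0$'s, rule 3 the count of $2$'s, and rule 4 the count of $1$'s versus $0$'s. Consecutive factors inside the big class are abelian equivalent, so they must be related by rule 1, $A012B\to A120B$. Their differing entries therefore form the pattern $012$ above $120$, which is exactly a box.

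The three transitions involving the singletons are the uses of rules 2, 3 and 4, each used exactly once (Theorem~\ref{th:scheme}). After deleting the singleton rows, however, new consecutive pairs appear. I will handle these by locating the singletons concretely. They are the two factors of length $N$ not among the cyclic shifts of $W$. I expect them to come from the bispecial/standard structure, namely factors of the form $\psi(p)$ with the endings $01$ versus $12$, or $0$ versus $1$, swapped. I will then check directly that the factors immediately before and after each singleton differ by a $012\leftrightarrow120$ pattern. For the wrap-around pair consisting of the last and first reduced rows, I will use the cyclic structure: both are cyclic shifts of $W$, being the lexicographically largest and smallest, so I compare them as rotations.

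A cleaner alternative, which I would try first, is to work entirely in the cyclic picture. View the reduced array as the orbit of $W$ under $C_N$. Lexicographic order on the rotations of the circular word $\psi(u)$ is governed by the order on the rotations of the Christoffel-type word $u$. In the Sturmian case, Lemma~\ref{lm:cyclic-sturmian}, consecutive rotations of a standard/Christoffel word differ by a single $01\to10$ swap, and the positions of these swaps run over all $|u|$ columns exactly once. This includes the wrap-around from the largest rotation back to the smallest, since the conjugates of a Christoffel word form an arithmetic progression of shifts by a fixed amount coprime to the length. Transporting through $\psi$, a swap $01\to10$ in $u$ becomes $012\to120$ in $\psi(u)$, because $0\mapsto0$ and $1\mapsto12$. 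This gives a box, possibly wrapping around the circle, which explains the allowed disconnected boxes.

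For the second claim, that all boxes start in different columns, I would use the arithmetic-progression description. The lexicographically sorted rotations of $\psi(u)$ are $\sigma^{jd}(\psi(u))$ for a fixed $d$ coprime to $N$. The box between consecutive rows sits at a fixed position of the underlying circle, namely the unique place in $\psi(u)$ where the relevant $012$ occurs relative to the rotation. In row coordinates its starting column is therefore that position minus $jd$ modulo $N$. These values are distinct for the $N$ values of $j$, and there are exactly $N$ boxes counting the wrap-around pair.

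The main obstacle is justifying that the sorted rotations of $\psi(u)$ form such a progression. Equivalently, one must show that consecutive rotations differ only by one transported swap, given that $\psi$ is not length-uniform and the order on $\psi$-images refines that on the $u$-rotations. For this I will use the order-preservation of $\psi$ already used in Lemma~\ref{lm:expansion}, together with Theorem~\ref{th:scheme} to rule out rules 2--4 inside the reduced array.
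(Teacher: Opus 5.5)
\textbf{Verdict: genuine gap.} Your first step is the same as the paper's: consecutive reduced rows with no singleton between them are related by rule 1, so they differ by a box. Nothing beyond that is actually proved. The remaining cases are either deferred (``I expect \dots\ I will then check directly'') or rest on the claim that the sorted rotations of $W=\psi(u)$ are $\sigma^{a+jd}(W)$ with $\gcd(d,N)=1$. These cases are the pairs separated by a singleton, the last-to-first pair, and the distinctness of the starting columns. Given that claim, your argument would finish: every consecutive pair, including last-to-first, is $\sigma^{s}W,\sigma^{s+d}W$. Such pairs all differ on the fixed set $\{t: W[t]\neq W[t+d]\}$, one rule-1 pair shows this set is a single box, and the start columns $P-s$ are then distinct.

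But that claim is the whole content, and order-preservation of $\psi$ plus Theorem~\ref{th:scheme} do not give it. Here is why the transport from $u$ breaks down:
\begin{itemize}
\item $\psi$ only matches the $|u|$ rotations of $W$ beginning with $0$ or $1$ to rotations of $u$. The $|u|_1$ rotations beginning with $2$ are not images of rotations of $u$.
\item The shift correspondence $t\mapsto t+|u_{[0,t)}|_1$ is not affine, so the progression for the Christoffel conjugates of $u$ does not transfer.
\item The transported wrap-around swap is not a box. For $u=01001$ ($N=7$), $\psi(10100)=1201200$ and $\psi(00101)=0012012$ are not consecutive and differ in all seven positions.
\end{itemize}
The actual transitions there pass through the rotations beginning with $2$. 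They are exactly the ones adjacent to the two singleton classes, which are the cases that need an argument.

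\textbf{How the paper closes it.} It uses abelian bookkeeping on the scheme:
\begin{itemize}
\item A singleton between two reduced rows is entered and left by two of rules 2--4, whose effects on the Parikh vector must cancel.
\item Rule 3 cannot be undone, and rules 2, 3, 4 together never cancel. So the pair is rule 4 then rule 2, or conversely, and their composite is a box from column $N$ into columns $1,2$.
\item Rule 3 is then forced into the first or last row. The left special factor $v$ of length $N-1$, with $0v$ first and $2v$ last, shows that the last-to-first transition is a box at the penultimate column.
\item Distinct columns follow from Corollary~\ref{cor:each_col}: each column changes $0\to1$ only once cyclically.
\end{itemize}

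\textbf{How to complete your own route.} Prove the progression directly. With $p=|u|_1$, $W^\infty$ is the coding of $x\mapsto x+p$ on $\mathbb{Z}/N$ by the intervals $[0,|u|-p)$, $[|u|-p,|u|)$, $[|u|,N)$, labelled $0,1,2$. The map is a translation on each labelled interval, so the coding is monotone in $x$. Hence the sorted rows are the codings of $x=0,\dots,N-1$, which gives $d=p^{-1}\bmod N$. Rows $x$ and $x+1$ differ exactly at the three consecutive indices $i$ with $x+1+ip\in\{|u|-p,\,|u|,\,0\}$ modulo $N$. That is a box, and its start shifts by $-d$ as $x$ increases, which gives distinctness of the columns.
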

	
    \begin{proof} 
    Consider two consecutive factors of the reduced lexicographic array (recall that all factors in the reduced lexicographic array are abelian equivalent). First suppose that there are no factors forming a singleton abelian class between them in the non-reduced lexicographic array (see Lemma \ref{lm:cycle}). Since rules 2--4 change the abelian class, this means that we applied rule 1 once, as in the non-reduced lexicographic array.

     Now suppose that between two factors that are consecutive in the reduced lexicographic array, there are other factors in the non-reduced lexicographic array. By Lemma \ref{lm:cycle}, we can have one or two such factors, and their abelian classes are singleton. 
     
     We start with the case when we have one such factor. If we applied rule $12\alpha \mapsto 20 \alpha$, then we lost one occurrence of 1, and we have to gain it back applying the next rule. This can be done only using rule $\alpha0 \mapsto \alpha1$. Note that applying other rules is impossible since it does not lead to the previous abelian class. 
     If we applied rule $\alpha0 \mapsto \alpha1$, the argument is symmetric: we lost one occurrence of 1, so we need to apply rule $12\alpha \mapsto 20 \alpha$ to gain it back. In either case  our two consecutive factors differ by a box starting in the last symbol.  If we applied rule $\alpha01 \mapsto \alpha12$, then with the next step we should lose an occurrence of 2, but we do not have such a rule. 

     We now show that the case of two factors with singleton abelian classes between two consecutive factors in the reduced lexicographic array is  impossible. If we have two consecutive singleton factors, it means that we applied rules 2,3 and 4 in some order, each of them exactly once. However, applying these three rules in no matter which order does not give the initial abelian class.

If we have two boxes starting at the same position, then the corresponding column is not of the form described in Corollary  \ref{cor:each_col}.

Now we study what happens between the first and the last row of the lexicographic array. Consider the left special factor $v$ of length $n-1$; it is extended to the left by 0 and 2, so $0v$ and $2v$ are factors in the lexicographic array; moreover, $0v$ is the first one and $2v$ is the last one (it follows from the properties of left special factors of Sturmian word: extended to the left by 0 and 1, they are lexicographically smallest and biggest factors of corresponding length). One of these words is singleton, and since the words have different numbers of 2's, this word is either the first one or the last one (as we proved, in the middle of the lexicographic array we can only have a singleton abelian class with the number of 1's different from factors in the reduced lexicographic array). 

We proved that we apply rules 2 and 4 consecutively in some order. By Theorem \ref{th:scheme} we must apply rule 3 once. Since it changes the abelian class and as we proved above cannot be followed or preceded by another rule changing the abelian class, we can only apply rule 3 in the first or in the last row of the lexicographic array. If we applied it to the first row, this means that the first row of the lexicographic array forms a singleton class. It is easy to see that the second row (which is the first row of the reduced lexicographic array, and which is obtained from $0v$ by applying rule 3) is obtained from the last row (which is of the form $2v$) by applying a box in the penultimate place. The situation is symmetric if rule 3 is applied to the first row.
		
	So, in the reduced lexicographic array, for each column we applied boxes three times; for the columns $1$, $n - 1$ and $n$ one of the boxes has been applied in the penultimate place, converting the last factor into the first one. 
\end{proof}

	\begin{theorem}\label{th:transitive} Let $x$ be a ternary minimal complexity word of type III, $n\geqslant5$ be a length satisfying the conditions of Lemma \ref{lm:cycle}, and $G\leqslant S_n$. If $p^G_x=4$, then $G$ acts transitively on the set $\{1,\ldots, n\}$. 
    \end{theorem}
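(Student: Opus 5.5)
We argue by contradiction. Suppose $G\leqslant S_n$ has $p^G_x(n)=4$ but is not transitive, so $\{1,\ldots,n\}$ splits into at least two $G$-orbits. Let $O$ be one orbit with $O\neq\{1,\ldots,n\}$, and set $H=\prod_{O'} S_{O'}$, the direct product of the full symmetric groups on the orbits of $G$. Since $G\leqslant H$, the partition into $H$-classes is coarser than the partition into $G$-classes, so $p^H_x(n)\leqslant 4$. Two factors are $H$-equivalent exactly when, for each orbit $O'$, they have the same number of each letter on the positions of $O'$. The abelian complexity at length $n$ is $3$ (Lemma \ref{lm:cycle}), and the two singleton abelian classes stay singleton classes for every group. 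So it suffices to show that the large abelian class, namely the reduced lexicographic array consisting of the $n$ cyclic shifts of one word, splits into at least three $H$-classes. That would give $p^G_x(n)\geqslant p^H_x(n)\geqslant 5$, a contradiction. Since refining the orbit partition only refines the $H$-classes, it is enough to treat the case of exactly two orbits, $O$ and its complement $O^c$, both nonempty.

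The main tool is the preceding lemma on boxes. Going cyclically through the reduced array, each consecutive pair of rows differs by a box $012\to120$ placed on three cyclically consecutive positions $j,j+1,j+2$, and the $n$ boxes start in pairwise distinct columns, so every column occurs exactly once as a starting column. Applying a box moves one $0$, one $1$ and one $2$ among the three positions $j,j+1,j+2$. Concretely, the letter at $j$ goes from $0$ to $1$, at $j+1$ from $1$ to $2$, and at $j+2$ from $2$ to $0$. Now fix an orbit $O$ and look at the vector of letter counts restricted to $O$. This vector changes under the box at $j$ unless $\{j,j+1,j+2\}\subseteq O$ or $\{j,j+1,j+2\}\cap O=\emptyset$. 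The reason is that if $O$ meets the triple in one or two positions, the multiset of letters on $O\cap\{j,j+1,j+2\}$ differs before and after the box: a single position always changes its letter, and for two positions, e.g. $\{j,j+1\}$ goes from $\{0,1\}$ to $\{1,2\}$, so the multiset changes in every case. Hence the $H$-class changes exactly at boxes whose triple straddles the boundary between $O$ and $O^c$.

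Next we count the straddling boxes. Because $O$ is a proper nonempty subset of the cyclic set $\mathbb Z_n$, there is at least one cyclic boundary point $i$ with $i\in O$ and $i+1\notin O$, and at least one with $i\notin O$ and $i+1\in O$. Each boundary edge $(i,i+1)$ lies in the triples starting at $i-1$ and at $i$, and since all starting columns occur, these are straddling boxes. If there are at least two boundary edges, we obtain at least two straddling positions, and typically at least three distinct ones. Going around the cycle of $n$ rows, the $H$-class changes at least twice. But a cyclic sequence that changes value at least twice might take only two values, alternating. So we must show that the reduced array meets at least three distinct count vectors.

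This is the main obstacle. I would resolve it by tracking the count of one fixed letter on $O$ as a walk around the cycle of rows. Each straddling box shifts the letter counts on $O$ by a specific vector depending on how the triple meets $O$: for example, $O\cap\{j,j+1,j+2\}=\{j\}$ gives $-e_0+e_1$, and the other cases give similar vectors. These shifts sum to zero over the full cycle. Suppose only two $H$-classes occur. Then consecutive nonzero shifts must cancel in pairs, which forces the straddling boxes to come in pairs of inverse types occurring consecutively in row order. I expect to rule this out using Corollary \ref{cor:each_col}: each column has the form $a^+b^+c^+a^+$ with the letters cycling $0\to1\to2\to0$, so along the rows each column's letter only advances cyclically and never reverts. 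If I cannot rule it out directly, I would fall back on $n\geqslant5$ together with the explicit cyclic-shift structure, using the fact that the $n$ rows are exactly the rotations of $\psi(u)$, to exhibit three rotations whose restrictions to $O$ have pairwise different letter counts.
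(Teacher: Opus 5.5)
\textbf{There is a genuine gap at the decisive step.} Your setup matches the paper's. You coarsen to a two-orbit group $S_O\times S_{O^c}$ and note that inside the large abelian class a factor's class is determined by its letter counts on $O$. You then use the box lemma (every column starts exactly one box) to see that these counts change exactly at straddling boxes. What you leave open is showing that at least three count vectors occur.

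Your proposal never uses that $n$ is odd, which is one of the conditions of Lemma~\ref{lm:cycle}. Without parity, no argument based on the box shifts alone can succeed. If $O$ were the set of odd positions on a cycle of even length, every box would straddle and every shift would be $\pm(e_1-e_2)$, so the walk could a priori oscillate between two values. Corollary~\ref{cor:each_col} only describes the letters within a single column and says nothing about how $O$ meets the windows $\{j,j+1,j+2\}$. Your fallback of ``exhibiting three rotations'' just restates the goal.

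\textbf{The missing idea (the paper's).} Since $n$ is odd, the colouring of the cycle by $O,O^c$ is not alternating. So, after swapping $O$ and $O^c$ if necessary, there is $i$ with $i,i+1\in O$ and $i+2\notin O$ (indices mod $n$).
\begin{itemize}
\item The box starting at $i$ shifts the counts on $O$ by $s_1=e_2-e_0$.
\item The box starting at $i+1$ shifts them by $s_2=e_1-e_2$ if $i+3\in O$, and by $s_2=e_1-e_0$ if $i+3\notin O$.
\end{itemize}
In both cases $s_1,s_2$ are nonzero and neither equal nor opposite. Hence the pairs $\{v,v+s_1\}$ and $\{w,w+s_2\}$ of count vectors cannot coincide as sets. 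This gives at least three distinct vectors in the large class, so with the two singletons $p^G_x(n)\geqslant 5$.

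\textbf{Completing your own framework instead.} Set $c_j=1$ if $j\in O$ and $c_j=0$ otherwise. If only two classes occurred, every straddling box would have shift $\pm s$ for one fixed $s$. The six possible intersections $O\cap\{j,j+1,j+2\}$ give only three shift types up to sign:
\begin{itemize}
\item $c_j\neq c_{j+1}=c_{j+2}$, giving $\pm(e_1-e_0)$;
\item $c_j=c_{j+2}\neq c_{j+1}$, giving $\pm(e_2-e_1)$;
\item $c_j=c_{j+1}\neq c_{j+2}$, giving $\pm(e_0-e_2)$.
\end{itemize}
Around any boundary $c_i\neq c_{i+1}$, the windows starting at $i-1$ and at $i$ both straddle. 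The only type they can share is the middle one, which forces $c_{i+2}\neq c_{i+1}$. So every adjacent pair is a boundary, the partition alternates, and this is impossible for odd $n$.
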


    \begin{proof}
		Assume that $G$ does not act transitively; we will prove that then the abelian complexity is at least $5$.
		
		If $G$ has more than two orbits, we let $k$, $0<k<n$, denote the size of one of the orbits. We now construct a new group $G'$ by adding to $G$ permutations, so that $G'$ contains exactly two orbits, and moreover $G' = S_k \times S_{n - k}$. We have $p^{G'}_x(n)\leqslant p^{G}_x(n)$ by the definition of group complexity.
				
		Since $n$ is odd by the conditions of Lemma \ref{lm:cycle}, there exist two consecutive positions (we consider the last and the first one to be consecutive) such that they are in the same orbit. Since all factors in the reduced lexicographic array are cyclic permutations of each other, we can assume that these two positions are $1$ and $2$. We can also assume that the third position is from the other orbit; otherwise we can apply a permutation from $C_n$, or, equivalently, consider a conjugate subgroup.
		
		Case $1^{\circ}$: Position 4 is from the first orbit, i.e., we have orbits $1121\ldots$. 
		
		Consider the two boxes starting in the first and in the second positions. We need to understand when they change the Parikh vector of the first orbit. The first box does it as follows: $(a, b, c) \mapsto (a - 1, b, c + 1)$, and second one $(d, e, f) \mapsto (d, e + 1, f - 1)$. We claim that we have at least three distinct Parikh vectors among these four triples, and thus the group complexity is at least 5, taking into account the two factors forming singular abelian classes. Inside each pair the triples are distinct, so, if we have only two distinct Parikh vectors, then either $(a, b, c) = (d, e, f)$, or $(a, b, c) = (d, e + 1, f - 1)$. In the first case $(a, b, c), (a - 1, b, c + 1), (a, b + 1,  c - 1)$ are distinct, and in the second case $(a, b, c), (a - 1, b, c + 1), (a, b - 1, c + 1)$ are distinct.
		
		Case $2^{\circ}$. Position 4 is from the second orbit, i.e., we have orbits $1122\ldots$.
		
		The same two boxes as in the previous case give us pairs $(a, b, c) \mapsto (a - 1, b, c + 1)$ and $(d, e, f) \mapsto (d - 1, e + 1, f)$. If $(a, b, c) = (d, e, f)$, then the triples $(a, b, c), (a - 1, b, c + 1), (a - 1, b + 1, c)$ are distinct. If $(a, b, c) = (d - 1, e + 1, f)$, then the triples $(a, b, c), (a - 1, b, c + 1), (a + 1, b - 1, c)$ are distinct.
	\end{proof}
	
	\section{Periodic words}\label{subsec:purper}

    In this section, we study universal group complexity property for eventually periodic words. For an eventually periodic word we provide a criterion for determining whether the word has the universal group complexity property. This criterion can be checked algorithmically.

For integers $n$ and $m$ with $m<n$ and a subgroup $G$ of the symmetric group $S_n$, we let  $G \times S_m$  denote the subgroup of $S_{n+m}$ that acts on the first $n$ symbols as $G$, and permutes the last $m$ symbols in any way independently of the first symbols. 


    It is easy to see that the factor complexity function of an eventually periodic word $u$ increases until reaches its maximum at some point $n_0$, and it is constant for $n\geqslant n_0$. We denote by $n_{\mathrm{max}}(u)$ the smallest integer $n_0$ satisfying the above description: $n_{\mathrm{max}}(u) = \min\argmax\limits_n\{p_u(n)\}$. Note that if an eventually periodic word $u$ has a preperiod $\alpha$ and a period $\pi$, then $n_{\mathrm{max}}(u) \leqslant \alpha + \pi$.
    
    \begin{lemma}\label{lm:per} 
    Let $u$ be an eventually periodic word with period $\pi$, and $n$ be an integer such that  $n \geqslant n_{\mathrm{max}}(u)$. Then for each subgroup $G$ of $S_n$ we have $p^G_u = p^{G \times S_\pi}_u$.
    \end{lemma}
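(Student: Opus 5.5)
Throughout, $p^G_u$ for $G\leqslant S_n$ counts classes of factors of length $n$, and $p^{G\times S_\pi}_u$ counts classes of factors of length $n+\pi$. The plan is to build a bijection between $F_u(n)$ and $F_u(n+\pi)$ that transports the orbit relations exactly.

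First I will show that the map $\tau\colon F_u(n+\pi)\to F_u(n)$, $v\mapsto$ (prefix of $v$ of length $n$), is a bijection. It is surjective because every factor of length $n$ occurring at some position extends to the right. Since $n\geqslant n_{\max}(u)$ and $p_u$ is constant from $n_{\max}(u)$ on, we have $p_u(n+\pi)=p_u(n)$, so $\tau$ is a bijection between finite sets of equal size.

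Next I will identify the suffix. Let $v$ be a factor of length $n+\pi$ with prefix $x=\tau(v)$. Because $\tau$ is injective, every occurrence of $x$ in $u$ has the same right extension of length $\pi$, namely the suffix of $v$. Choose an occurrence of $x$ lying far inside the periodic part, at a position $i$ beyond the preperiod. This is possible because $n\geqslant n_{\max}(u)$ forces every factor of length $n$ to occur in the periodic part; otherwise a factor occurring only in the preperiod would make the count eventually drop, contradicting monotonicity. At such a position, $u_{i+n+j}=u_{i+n+j-\pi}$ for $j<\pi$, so the suffix of $v$ of length $\pi$ equals the suffix of $x$ of length $\pi$; this needs $n\geqslant\pi$, which holds since $n_{\max}\geqslant\pi$ whenever the period $\pi$ is minimal, and I would check this edge case. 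In particular, that suffix's Parikh vector $P(v)$ equals the Parikh vector of the length-$\pi$ cyclic block of the periodic part. Hence all factors of length $n+\pi$ share the same Parikh vector on their last $\pi$ letters.

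Finally I will compare the orbits. Take $v,v'\in F_u(n+\pi)$. Then $v\sim_{G\times S_\pi}v'$ holds iff $\tau(v)\sim_G\tau(v')$ and the last $\pi$ letters of $v$ and $v'$ are abelian equivalent. The second condition always holds by the previous step. So $\tau$ induces a bijection of orbit sets, and $p^G_u=p^{G\times S_\pi}_u$.

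The main obstacle is the suffix step: justifying that every length-$n$ factor occurs inside the periodic part, and that the length-$\pi$ window is a full period with a fixed Parikh vector. I would handle this with the monotonicity argument above. If it fails, it is enough to use only injectivity of $\tau$ together with the fact that any window of length $\pi$ of the periodic part has the same content.
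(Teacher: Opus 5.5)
Your outline is exactly the paper's argument: the prefix map $\tau$ is a bijection because $p_u(n+\pi)=p_u(n)$; the last $\pi$ letters of every factor of length $n+\pi$ have a common Parikh vector; hence $G\times S_\pi$-orbits correspond to $G$-orbits. Your first and last steps are correct.

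The gap is the suffix step, and it cannot be repaired. Your claim that every factor of length $n\geqslant n_{\max}(u)$ occurs in the periodic part is false. A factor occurring only in the preperiod does \emph{not} make the count drop: its right extensions occur at the same positions, so they are factors of every length. Your fallback fails too. If every occurrence of $x$ starts at a position $i$ with $i+n<\alpha$ (where $\alpha$ is the preperiod), then the length-$\pi$ right extension overlaps the preperiod, and $n_{\max}(u)<\alpha$ can happen.

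Concretely, let $u=abac^\omega$, with $\pi=1$ and $\alpha=3$.
\begin{itemize}
\item $p_u(1)=3$ and $p_u(n)=4$ for $n\geqslant 2$, so $n_{\max}(u)=2$. The factors $ab,ba,ac$ of length $2$ each occur only once, at positions $0,1,2$.
\item Take $n=2$ and $G=S_2$. The classes are $\{ab,ba\},\{ac\},\{cc\}$, so $p_u^{S_2}(2)=3$.
\item The factors of length $3$ are $aba,bac,acc,ccc$. The factors $aba$ and $bac$ have $S_2$-equivalent prefixes but different last letters, so $p_u^{S_2\times S_1}(3)=4$.
\end{itemize}
So the lemma is false as stated. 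The paper's proof has the same hole: its claim that the appended $p_u(n)\times\pi$ rectangle consists of cyclic shifts of one word fails here, since the appended letters are $a,c,c,c$.

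Both your argument and the paper's become correct under the extra hypothesis $n\geqslant\alpha$. Then for any occurrence at $i$, positions $i+n,\dots,i+n+\pi-1$ lie in the periodic part, so the suffix has the Parikh vector of one period. This needs no assumption $n\geqslant\pi$. That matters, because your side claim that $n_{\max}(u)\geqslant\pi$ for minimal $\pi$ is also false: $(001232)^\omega$ has $\pi=6$ and $n_{\max}=2$. Correspondingly, the proof of Corollary~\ref{cr:finite} would need the threshold $\max(n_{\max}(u),\alpha)+\pi$ instead of $n_{\max}(u)+\pi$.
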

	
    \begin{proof} By the definition of $n_{\mathrm{max}}(u)$, we have $p_u(n) = p_u(n + \pi)$. The lexicographic array for the length $n+\pi$ is obtained from the lexicographic array for the length $n$ by extending it to the right by a $p_u(n) \times \pi$-rectangle filled with cyclic shifts of the same word. We have $p_u^{G \times S_n} \geqslant p_u^G$. Indeed, since on the first $n$ symbols the group $G \times S_n$ behaves as $G$, we have that if two words of length $n$ are in distinct $G$-classes, then their extension to the right of length $n+\pi$  are also in distinct $(G\times S_{\pi})$-classes. On the other hand, if two words of length $n$ are in the same $G$-class, then their extensions  to the right of length $n+\pi$ are in the same $(G\times S_{\pi})$-class, since $S_{\pi}$ permutes in any way the last $\pi$ symbols.
    \end{proof}
	
	\begin{corollary}\label{cr:finite} Let $u$ be an eventually periodic word with period $\pi$. If the universal group complexity property holds for lengths up to $n_{\mathrm{max}}(u) + \pi - 1$, then it holds for all lengths.\end{corollary}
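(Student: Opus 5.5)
Fix a length $N \geqslant n_{\mathrm{max}}(u) + \pi$. We reduce it to a shorter length by strong induction. Write $N = n + \pi$ with $n = N - \pi \geqslant n_{\mathrm{max}}(u)$, so that Lemma~\ref{lm:per} applies at length $n$. For $n < n_{\mathrm{max}}(u)+\pi$ the property holds by hypothesis. For larger $n$ it holds by the induction hypothesis, since $n < N$. In every case we may therefore assume that $u$ has universal group complexity for length $n$.

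The first step compares the two ends of the admissible range at lengths $n$ and $N$. The factor complexity is constant from $n_{\mathrm{max}}(u)$ on, so $p_u(N) = p_u(n)$. For the abelian end we apply Lemma~\ref{lm:per} with $G = S_n$. This gives $p_u^{ab}(n) = p_u^{S_n \times S_\pi}(N)$, and since $S_n \times S_\pi \leqslant S_N$ we get $p_u^{ab}(N) \leqslant p_u^{ab}(n)$.

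For the reverse inequality, recall the structure used in the proof of Lemma~\ref{lm:per}. Each factor of length $N$ is its length-$n$ prefix extended by $\pi$ further letters, and these letters are determined by the periodic tail. Moreover, every window of $\pi$ consecutive letters of a periodic word with period $\pi$ has the same Parikh vector. Hence, for factors of length $N$, the Parikh vector of the last $\pi$ letters is the same fixed vector $P$ for all of them. The Parikh vector of a length-$N$ factor is then the Parikh vector of its prefix plus $P$. So two factors of length $N$ are abelian equivalent exactly when their length-$n$ prefixes are, and since prefixes correspond bijectively to factors, $p_u^{ab}(N) = p_u^{ab}(n)$.

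The final step transfers subgroups. Take any $m$ with $p_u^{ab}(N) \leqslant m \leqslant p_u(N)$. By the equalities above, $p_u^{ab}(n) \leqslant m \leqslant p_u(n)$. The property at length $n$ then gives $G \leqslant S_n$ with $p_u^G(n) = m$. Lemma~\ref{lm:per} yields $p_u^{G\times S_\pi}(N) = m$, with $G \times S_\pi \leqslant S_N$. This completes the induction.

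The main point to check carefully is the abelian-complexity equality. It relies on two facts valid for $n \geqslant n_{\mathrm{max}}(u)$: the extension map from length-$n$ factors to length-$N$ factors is a bijection, and the appended block of $\pi$ letters has a constant Parikh vector. The latter holds because every length-$\pi$ window of a $\pi$-periodic word is a cyclic shift of a single word.
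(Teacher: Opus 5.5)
Your route is the paper's own: factor complexities agree, abelian complexities agree, and $G$ transfers to $G\times S_\pi$ via Lemma~\ref{lm:per}. The gap is in your claim that the last $\pi$ letters of every factor of length $N$ come from the periodic tail and so have a fixed Parikh vector $P$. This does not follow from $n\geqslant n_{\mathrm{max}}(u)$. That hypothesis only gives the bijection $F_u(N)\to F_u(n)$. Write $u=xv^\omega$ with minimal preperiod $\alpha=|x|$. A factor occurring at position $k<\alpha$ receives the letters at positions $k+n,\dots,k+n+\pi-1$, and these can lie in $x$ when $n<\alpha$. Moreover, $n_{\mathrm{max}}(u)<\alpha$ does happen.

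Take $u=0010\,1^\omega$. Here $\pi=1$ and $p_u=2,4,5,5,\dots$, so $n_{\mathrm{max}}(u)=3$, whereas $\alpha=4$. At $n=3$ the factor $001$ extends to $0010$, while $010,101,011,111$ all extend by $1$. Consequently $p_u^{ab}(3)=3$ but $p_u^{ab}(4)=4$. So your equality $p_u^{ab}(N)=p_u^{ab}(n)$ fails, and so does the inequality $p_u^{ab}(N)\leqslant p_u^{ab}(n)$ that you derived from Lemma~\ref{lm:per}. Indeed Lemma~\ref{lm:per} itself fails here: $p_u^{S_3}(3)=3$ while $p_u^{S_3\times S_1}(4)=4$. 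The paper's proof of the corollary rests on the same assertion (``concatenating factors with cyclic shifts of the same word''), so you have reproduced its gap rather than introduced a new one.

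To fix the argument, set $n_0=\max(n_{\mathrm{max}}(u),\alpha)$ and assume universality for all lengths up to $n_0+\pi-1$. For $n\geqslant n_0$, any occurrence of a factor of length $N$, at any position $k$, has its last $\pi$ letters starting at $k+n\geqslant\alpha$, so they form a cyclic shift of $v$. Then Lemma~\ref{lm:per} holds, since its proof is correct under this extra hypothesis. Your computation of both endpoints is then right, and your strong induction goes through verbatim. With the threshold $n_{\mathrm{max}}(u)$ alone, neither your argument nor the paper's proves the statement. In the example above the conclusion of the corollary still holds at every length, so this is a gap in the proof, not a refutation of the corollary.
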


    \begin{proof}
	By Lemma \ref{lm:per}, the set of values of group complexity for lengths $k \geqslant n_{\mathrm{max}}(u) + \pi$ is the same as for  lengths $k - \pi \geqslant n_{\mathrm{max}}(u)$. 
    Factor complexities for such lengths are equal, and the partition of factors into abelian classes does not change after concatenating factors with cyclic shifts of the same word. 
    \end{proof}

	\begin{example} Here we provide an example of a periodic word with universal group complexity. Consider the word $w=(001232)^\omega$. Its period $\pi$ is equal to $6$, and since $p_w(2) = 6$, we have $n_{\mathrm{max}}(w) = 2$. By Corollary \ref{cr:finite}, to show the universal group complexity property, it is enough to check it for lengths up to $7$:

    \begin{itemize} 
    \item For lengths $1$--$4$, one can directly  verify that the factor and the abelian complexities differ by at most 1. So, the universal group complexity property holds for these lengths.
    




    \item {For length $5$, we have $p^{ab}_w(5) = 4, p^{C_5}_w(5) = 5$, and $p_w(5) = 6$.}

    \item{For length $6$, we have $p^{ab}_w(6) = 1, p^{\left<(135)(246)\right>}_w(6) = 2, p^{\left<(14)(25)(36)\right>}_w(6) = 3, p^{S_{[2, 6]}}_w(6) = 4, p^{S_{[1, 2]} \times S_{[3, 6]}}_w(6) = 5$, and $p_w(6) = 6$.}

    \item {For length $7$, we have $p^{ab}_w(7) = 4, p^{C_7}_w(7) = 5$, and $p_w(7) = 6$.} \end{itemize}



    \end{example}

	\begin{remark} In fact, Corollary \ref{cr:finite} gives an algorithmic criterion for checking whether an eventually periodic word satisfies universal group complexity property.\end{remark}

	\section{Conclusions and open questions}

    In this paper, we consider the universal group complexity property for infinite words, which means that all values between the abelian and the factor complexities are achieved by the group complexity via some subgroups of the symmetric group. 
    We showed that Sturmian words satisfy this property; however, they are not the only ones. Besides that, we studied values of group complexity for ternary words of minimal complexity, and classified eventually periodic words with universal group complexity property. We propose several open problems:

    \medskip

    \textbf{Open problem 1.} Find a classification of words satisfying the universal group complexity property.
    
       \medskip

       It would be interesting to consider eventual universal group complexity property, i.e., when the property is satisfied starting from some length. The class of words satisfying the eventual universal group complexity is bigger than the class for all lengths (see Proposition \ref{pr:case2} and a remark after that).

    \medskip

    \textbf{Open problem 2.} A version of Open problem 1 for the eventual universal group complexity property.

    \medskip

A solution of the following problem would complete the classification from Theorem \ref{th:ternary} (see also Subsection \ref{subsec:four} for discussion).
    
    \medskip
    
    \textbf{Open problem 3.} For which ternary minimal complexity words of type III and for which length with abelian complexity 3 the group complexity can be equal to $4$?  

\medskip 

We suppose also that the classification from Theorem \ref{th:ternary} should be extendable to larger alphabet, although the proof is expected to be more technical.

\section*{Acknowledgements}

This work was supported by the Russian Science Foundation, project 25-21-00535. 
	
\bibliographystyle{plain}
\bibliography{references}
	
	
\end{document}